\newtheorem{assumption}{Assumption}
\newtheorem{corollary}{Corollary}
\newtheorem{theorem}{Theorem}
\newtheorem{lemma}{Lemma}
\theoremstyle{definition}
\newtheorem{example}{Example}
\newtheorem{remark}{Remark}
\newlength{\subfigwidth}
\newlength{\subfigcolsep}
\title{Panel Data Analysis with Heterogeneous Dynamics\footnote{First Version: February 2014.
		This paper was previously circulated under the title ``Dynamic Panel Data Analysis when the Dynamics are Heterogeneous.''}}
\author{Ryo Okui\footnote{NYU Shanghai, 1555 Century Avenue, Pudong, Shanghai, China, 200122; and Department of Economics, University of Gothenburg,
P.O. Box 640, SE-405 30 Gothenburg, Sweden.
Tel: +86-21-2059-6157. Email: \href{mailto:okui.ryo.3@gmail.com}{okui.ryo.3@gmail.com}} \and Takahide Yanagi\footnote{Graduate School of Economics, Kyoto University, Yoshida Honmachi, Sakyo, Kyoto, 606-8501, Japan.  Email: \href{mailto:yanagi@econ.kyoto-u.ac.jp}{yanagi@econ.kyoto-u.ac.jp}}}
\date{January, 2019}
\begin{document}

\maketitle

\begin{abstract}
This paper proposes a model-free approach to analyze panel data with heterogeneous dynamic structures across observational units.
We first compute the sample mean, autocovariances, and autocorrelations for each unit, and then estimate the parameters of interest based on their empirical distributions.
We then investigate the asymptotic properties of our estimators using double asymptotics and propose split-panel jackknife bias correction and inference based on the cross-sectional bootstrap.
We illustrate the usefulness of our procedures by studying the deviation dynamics of the law of one price.  Monte Carlo simulations confirm that the proposed bias correction is effective and yields valid inference in small samples.

\bigskip

\noindent \textit{Keywords}: Panel data, heterogeneity, functional central limit theorem, jackknife, bootstrap. 

\bigskip 

\noindent \textit{JEL Classification}: C13, C14, C23.

\end{abstract}

\newpage

\section{Introduction}

Understanding the dynamics of a potentially heterogeneous variable is an important research consideration in economics.
For instance, some researchers have investigated the price deviation of the law of one price (LOP) using panel data analyses, and a recent finding by \citet{CruciniShintaniTsuruga15} indicates that time-series persistence and volatility measures for the LOP deviation are heterogeneous across US cities and goods.
Other examples include income (e.g., \citealp{BrowningEjrnesAlvarez10}) and productivity (e.g., \citealp{HsiehKlenow09} and \citealp{GandhiNavarroRivers16}) dynamics, for which there is a considerable body of research.\footnote{For example, several researchers have pointed out the heterogeneous dynamics of the income process. These include \citet{MeghirPistaferri04} and \citet{Hospido2012}, which both suggest the importance of heterogeneity in income process volatility, and \citet{BotosaruSasaki2018}, which develops a nonparametric procedure to estimate the volatility function in the permanent--transitory model of income dynamics.}

We propose easy-to-implement procedures for analyzing the heterogeneous dynamic structure of panel data, $\{\{y_{it}\}_{t=1}^T\}_{i=1}^N$, without assuming any specific model.\footnote{The proposed procedures
	are readily available via an {\ttfamily R} package from the authors' websites.
}
To this end, we investigate the cross-sectional distributional properties of the mean, autocovariances, and autocorrelations of $y_{it}$ with heterogeneous units.
Our model-free approach is especially useful when empirical researchers are reluctant to assume specific models for heterogeneity given the threat of problems with misspecification.
Despite the voluminous literature on dynamic panel data analyses,
many studies assume specific models for the dynamics (such as autoregressive (AR) models) and homogeneity in the dynamics, allowing heterogeneity only in the mean of the process.\footnote{ 
	See \citet{Arellano03b} and \citet{Baltagi08} for excellent reviews of the existing studies on dynamic panel data analyses.
}
While several other studies also consider either heterogeneous dynamics or model-free analyses, we are unaware of any specific study that proposes panel data analysis for heterogeneous dynamics without specifying a particular model.

The distributional properties of the heterogeneous mean, autocovariances, and autocorrelations provide various pieces of information and are perhaps the most basic descriptive statistics for dynamics.
Indeed, a typical first step in analyzing time-series data is to examine these properties.
As we demonstrate in this study, the distributions of the heterogeneous mean, autocovariances, and autocorrelations can also be useful descriptive statistics for understanding heterogeneous dynamics in panel data.
For example, it would be interesting to examine the degree of heterogeneity of the LOP deviations across items such as goods and services.
In this case, the mean and variance of the heterogeneous means measure the amount and dispersion of the long-run LOP deviations across items.
We can also examine the correlation of the heterogeneous means and autocorrelations that shows whether the magnitudes of the long-run LOP deviations relate to the speed of price adjustment toward the long-run LOP deviations.
Moreover, our analysis provides the entire distribution of heterogeneity for the long-run LOP deviations or the adjustment speed.
This entire distribution could be useful to investigate, for example, whether goods and services possess different dynamics.

We derive the asymptotic properties of the empirical distributions of the estimated means, autocovariances, and autocorrelations based on double asymptotics, under which both the number of cross-sectional observations, $N$, and the length of the time series, $T$, tend to infinity.
By using empirical process theory (e.g., \citealp{vanderVaartWellner96}) and the inversion theorem for characteristic functions (e.g., \citealp{gil1951note} and \citealp{wendel1961non}), we show that the empirical distributions converge weakly to Gaussian processes under a condition for the relative magnitudes of $N$ and $T$ that is slightly ``stronger'' than $N^3 / T^4 \to 0$.
The proof is challenging because our empirical distributions are biased estimators and depend on both $N$ and $T$, so that we cannot directly apply standard empirical process techniques.
We show that the estimation error in the estimated mean, autocovariances, and autocorrelations for each unit biases the empirical distributions whose convergence rates depend on $T$.
We also derive the asymptotic distributions of the estimators for other distributional properties (e.g., the quantile function) using the functional delta method.

When we can write the parameter of interest as the expected value of a smooth function of the heterogeneous mean and/or autocovariances, we derive the exact order of the bias.
This class of parameters includes the mean, variance, and other moments (such as correlations) of the heterogeneous mean, autocovariances, and/or autocorrelations.
Importantly, we can analytically evaluate the bias, and find it has two sources.
The first is the incidental parameter problem originally discussed in \citet{NeymanScott48}.
The second arises when the smooth function is nonlinear.
We show the asymptotic distribution of the estimator under the condition $N / T ^2 \to 0$ under which both biases are negligible. 
As $T$ is often small compared with $N$ in microeconometric applications, we propose to reduce the biases using \citeauthor{DhaeneJochmans15}'s (\citeyear{DhaeneJochmans15}) split-panel jackknife.
The jackknife bias-corrected estimator is asymptotically unbiased under a weaker condition on the relative magnitudes of $N$ and $T$ and does not inflate the asymptotic variance.

We propose to use the cross-sectional bootstrapping (e.g., \citealp{GoncalvesKaffo14}) to test hypotheses and construct confidence intervals.
The bootstrap distribution is asymptotically equivalent to the distribution of the estimator, but fails to capture the bias.
We thus recommend the bootstrap based on the jackknife bias-corrected estimator because this would not suffer from large bias.

As an empirical illustration, we examine the speed of price adjustment toward the long-run LOP deviation using a panel data set of various items for different US cities.
We find statistically significant evidence that long-run LOP deviations in the item--city pairs with more persistent dynamics tend to be small and suffer from relatively large shocks.
We also find formal statistical evidence that the distribution of the LOP adjustment speed for goods differs from that for services.

We also conduct Monte Carlo simulations.
They demonstrate noticeable performances of the bootstrap inference based on the jackknife bias-corrected estimation in small samples.

\paragraph{Paper organization}
Section \ref{sec-literature} reviews the studies related to this paper, Section \ref{sec-setting} explains the setting, and Section \ref{sec-procedures} introduces the procedures.
In Section \ref{sec-asymptotics}, we derive the asymptotics of the distribution estimators, while Section \ref{sec-smooth} considers the asymptotics for estimating the expected value of a smooth function.
Section \ref{sec-extensions} presents Kolmogorov--Smirnov (KS)-type tests based on the distribution estimators and Sections \ref{sec-application} and \ref{sec-LOP-montecarlo} develop the application and simulations.
Section \ref{sec-conclusion} concludes and Appendix \ref{sec-appendix} contains the technical proofs.
The supplementary appendix provides remarks on higher-order bias correction, a test for parametric specifications, other extensions, and several mathematical proofs, applications for income and productivity, and additional Monte Carlo simulations.

\section{Related studies}\label{sec-literature}
This paper most closely relates to the literature on heterogeneous panel AR models, which capture the heterogeneity in the dynamics by allowing for unit-specific AR coefficients.
\citet{PesaranSmith95}, \citet{HsiaoPesaranTahmiscioglu99}, \citet{PesaranShinSmith99}, \citet{PhillipsMoon99}, and \citet{Pesaran06} provide such analyses.
The mean group estimator in \citet{PesaranSmith95} is identical to our estimator (without bias correction) for the mean of the heterogeneous first-order autocorrelation if their AR(1) model does not contain exogenous covariates.
\citet{HsiaoPesaranTahmiscioglu99} show that the mean group estimator is asymptotically unbiased under $N/T^2\to 0$, which is the condition we obtain without the bias correction.
Building on this literature, we aim to estimate the entire distributions of the unit-specific heterogeneous mean, autocovariances, and autocorrelations without model specifications.

Researchers have developed econometric methods for investigating features of heterogeneity other than the mean.
For example, \citet{Hospido2015} investigated the variances of individual and job effects that additively affect the income process.
As another example, \citet{BotosaruSasaki2018} estimated the volatility function in a permanent--transitory model of income dynamics. 
However, these studies assume some models and have different motivations from ours.

Elsewhere, \citet{MavroeidisSasakiWelch15} identify and estimate the distribution of the AR coefficients in heterogeneous panel AR models.
The advantage of their approach is that $T$ can be fixed.
While we impose $T \to \infty$, our method is much simpler to implement.
By contrast, the estimation method in \citet{MavroeidisSasakiWelch15} requires the maximization of a kernel-weighting function written as an integration over multiple variables.

The theoretical results for our distribution estimation relate to \cite{JochmansWeidner2018} who consider estimating the distribution of a true quantity based on a noisy measurement (e.g., an estimated quantity).
They derive the formula for the bias of their empirical distribution estimator under the assumption that their observations exhibit Gaussian errors.\footnote{An inspection of their proof leads us to surmise that their results may hold more generally as long as we assume that the distribution of the standardized quantity is homogeneous (that is, the estimated quantities satisfy a location--scale assumption).
	In our setting, heterogeneity can appear in more general ways and this generality is important because we are interested in heterogeneous dynamics.
	Indeed, we suspect that it is very difficult, if not impossible, to arrive at a setting in which all means, autocovariances, and autocorrelations exhibit heterogeneity and satisfy a location--scale assumption simultaneously.}
In contrast, the present paper does not specify parametric distributions for our observations, at the cost of not showing the exact formula for the bias of the distribution estimator.
Our results and theirs are thus complementary and thereby represent individual contributions.

In a similar motivation to us, \cite{OkuiYanagi2018} develop nonparametric kernel-smoothing estimation based on the estimated means, autocovariances, and autocorrelations for cross-sectional units in panel data.
There are several theoretical differences between the two papers, and they develop different proof techniques and obtain different results (see Remark \ref{remark:kernel} for details).
Moreover, an important practical issue also arises in the kernel estimation as the cross-sectional bootstrap is not suitable for the kernel estimation.
This issue comes from the well-known result that the bootstrap cannot capture kernel-smoothing bias, so that they propose an alternative valid inference.

Several studies propose model-free methods to investigate panel data dynamics.
For example, using long panel data, \citet{Okui10a, Okui11, Okui14} estimates autocovariances, and \citet{LeeOkuiShintani13} consider infinite-order panel AR models. 
Because we can represent a stationary time series with an infinite-order AR process under mild conditions, their approach is essentially model-free. 
However, these studies assume homogeneous dynamics.

While not directly connected, this study also relates to the recent literature on random coefficients or nonparametric panel data models with nonadditive unobserved heterogeneity.\footnote{There are also studies, such as \citet{ChernozhukovFernandezValLuo2018}, that aim to investigate the heterogeneity caused by observable covariates. However, these methods do not use the panel feature of the data and are distinct from the literature to which the present paper belongs.} 
For example, \citet{ArellanoBonhomme12} consider linear random coefficients models for panel data and discuss the identification and estimation of the distribution of random coefficients using deconvolution techniques. 
\citet{Chamberlain92} and \citet{GrahamPowell12} consider a model similar to that of \citet{ArellanoBonhomme12}, but focus on the means of random coefficients.
\citet{FernandezValLee13} examine moment restriction models with random coefficients using the generalized method of moments estimation. 
Their analysis of the smooth function of unit-specific effects closely relates to our analysis of the smooth function of means and autocovariances, at least in terms of technique. 
Finally, \citet{Evdokimov09} and \citet{Freyberger17} consider nonparametric panel regression models with unit-specific and interactive fixed effects, respectively, entering the unspecified structural function, but they do not infer heterogeneous dynamic structures.

\section{Settings}\label{sec-setting}

We observe panel data $\{\{y_{it}\}_{t=1}^{T}\}_{i=1}^{N}$ where $y_{it}$ is a scalar random variable, $i$ a cross-sectional unit, and $t$ a time period.
We assume that $\{y_{it}\}_{t=1}^T$ is independent across units.
We assume that the law of $\{y_{it}\}_{t=1}^T$ is stationary over time, but its dynamic structure may be heterogeneous.
Specifically, we consider the following data-generating process (DGP) to model the heterogeneous dynamic structure, in a spirit similar to \citet{GalvaoKato14}.
The unobserved unit-specific effect $\alpha_i$ is independently drawn from a distribution common to all units.
We then draw the time series $\{y_{it}\}_{t=1}^T $ for unit $i$ from some distribution $\mathcal{L}(\{y_{it}\}_{t=1}^T; \alpha_i)$ that may depend on $\alpha_i$.
The dynamic structure of $y_{it}$ can be heterogeneous because the realized value of $\alpha_i$ can vary across units.
For example, in an application of the LOP deviations, $\alpha_i$ might represent unobservable permanent trade costs specific to item $i$.
Note that $\alpha_i$ is an abstract parameter used to model heterogeneity in the dynamics across units and does not directly appear in the actual implementation of the procedure.
For notational simplicity, we denote ``$\cdot | \alpha_{i}$'' by ``$\cdot | i$''; that is, ``conditional on $\alpha_{i}$'' becomes ``conditional on $i$'' below.

To infer the properties of heterogeneity in a model-free manner, we aim to develop statistical tools to analyze the cross-sectional distributions of the heterogeneous means, autocovariances, and autocorrelations of $y_{it}$.
The mean for unit $i$ is $\mu_{i} \coloneqq E(y_{it}|i)$.
Note that $\mu_i$ is a random variable whose realization differs across units.
Because we assume stationarity, $\mu_{i}$ is constant over time.
Let $\gamma_{k,i} \coloneqq E((y_{it} - \mu_i) (y_{i,t-k} - \mu_i)|i)$ and $\rho_{k,i} \coloneqq \gamma_{k, i} / \gamma_{0, i}$ be the $k$-th conditional autocovariance and autocorrelation of $y_{it}$ given $\alpha_{i}$, respectively. 
Note that $\gamma_{0,i}$ is the variance for unit $i$.
To understand the possibly heterogeneous dynamics, we estimate the quantities that characterize the distributions of $\mu_i$, $\gamma_{k,i}$, and $\rho_{k,i}$.
Below, we often use the notation $\xi_i$ to represent one of $\mu_i$, $\gamma_{k,i}$, and $\rho_{k,i}$.

We consider cases in which both $N$ and $T$ are large.
For example, in our empirical illustration for the LOP deviations, we use panel data where $(N,T) = (2248, 72)$. 
A large $N$ allows us to estimate consistently the cross-sectional distributions of $\mu_i$, $\gamma_{k,i}$, and $\rho_{k,i}$. We require a large $T$ to identify and estimate $\mu_i$, $\gamma_{k,i}$, and $\rho_{k,i}$ based on the time series for each unit.

Our setting is very general and includes many situations. 
\begin{example}\label{example:AR}
The panel AR(1) model with heterogeneous coefficients, as in \citet{PesaranSmith95} and others, is a special case of our setting. 
This model is $y_{it} = c_i + \phi_i y_{i,t-1} + u_{it}$, where $c_i$ and $\phi_i$ are the unit-specific parameters, and $u_{it}$ follows a strong white noise process with variance $\sigma^2$.
In this case, $\alpha_i = (c_i, \phi_i)$, $\mu_i = c_i / (1- \phi_i)$, $\gamma_{k,i} = \sigma^2 \phi_i^k/ (1- \phi_i^2)$, and $\rho_{k,i}=\phi_i^k$.
\end{example}

\begin{example}
Our setting also includes cases in which the true DGP follows some nonlinear process. 
Suppose that $y_{it}$ is generated by $y_{it} = m(\alpha_{i}, u_{it})$, where $m$ is some function and $u_{it}$ is stationary over time and independent across units. 
In this case, $\mu_i = E( m(\alpha_{i}, \epsilon_{it}) | \alpha_i )$ and $\gamma_{k, i}$ and $\rho_{k,i}$ are the $k$-th-order autocovariance and autocorrelation of $w_{it} = y_{it} - \mu_i$ given $\alpha_i$, respectively.
\end{example}

We focus on estimating the heterogeneous mean, autocovariance, and autocorrelation structure and do not aim to recover the underlying structural form of the DGP.
We understand that addressing several important economic questions requires knowledge of the structural function of the dynamics. 
Nonetheless, the distributions of the heterogeneous means, autocovariances, and autocorrelations can be estimated relatively easily without imposing strong assumptions and can provide valuable information, even if our ultimate goal is to identify the structural form.
For example, if our procedure reveals a positive correlation between the time-series variance and the time-series persistence for the LOP deviations, the structural form on the LOP deviations should be specified so as to allow for such correlation.

\section{Procedures}\label{sec-procedures}

In this section, we present the statistical procedures to estimate the distributional characteristics of the heterogeneous mean, autocovariances, and autocorrelations.

We first estimate the mean $\mu_i$, autocovariances $\gamma_{k, i}$, and autocorrelations $\rho_{k,i}$ using the sample analogs $\hat \mu_i \coloneqq \bar y_i \coloneqq T^{-1} \sum_{t=1}^{T}y_{it}$, $\hat \gamma_{k,i} \coloneqq (T-k)^{-1} \sum_{t=k+1}^T (y_{it} - \bar y_i) (y_{i,t-k} - \bar y_{i})$, and $\hat \rho_{k,i} \coloneqq \hat \gamma_{k,i} / \hat \gamma_{0,i}$.
We write $\hat \xi_i = \hat \mu_i$, $\hat \gamma_{k,i}$, or $\hat \rho_{k,i}$ as the corresponding estimator of $\xi_i = \mu_i$, $\gamma_{k,i}$, or $\rho_{k,i}$, respectively.

We then compute the empirical distribution of $\{ \hat \xi_i\}_{i=1}^N$:
\begin{align} \label{eq-em-dis}
	\mathbb{F}_N^{\hat \xi} (a) \coloneqq \frac{1}{N} \sum_{i=1}^{N} \mathbf{1}(\hat \xi_i \leq a),
\end{align} 
where $\mathbf{1} (\cdot)$ is the indicator function and $a \in \mathbb{R}$.
This empirical cumulative distribution function (CDF) is interesting in its own right because it is an estimator of the cross-sectional CDF  of $\xi_i$, say $F_0^{\xi} (a) \coloneqq \Pr (\xi_i \le a)$.

We can estimate other distributional quantities or test some hypotheses based on the empirical distribution of $\hat \xi_i$.
For example, we can consider estimating the $\tau$-th quantile $q_{\tau}^{\xi} \coloneqq \inf \{ a \in \mathbb{R} : F_0^{\xi} (a) \ge \tau \}$ by the empirical quantile $\hat q_{\tau}^{\hat \xi} \coloneqq \inf \{ a \in \mathbb{R} : \mathbb{F}_N^{\hat \xi} (a) \ge \tau \}$.
We can also test the difference in the heterogeneous dynamic structures across distinct groups and parametric specifications for the heterogeneous means, autocovariances, or autocorrelations based on the empirical distribution.
We develop such tests based on KS-type statistics in Section \ref{sec-extensions} and the supplementary appendix.

We can also estimate a function of moments of the heterogeneous mean and autocovariances straightforwardly.
Let $S \coloneqq h(E( g(\theta_i) ) )$ be the parameter of interest, where $\theta_i$ is an $l \times 1$ vector whose elements belong to a subset of $(\mu_i, \gamma_{0,i}, \gamma_{1,i},\dots)$, and $g: \mathbb{R}^l \to \mathbb{R}^m$ and $h: \mathbb{R}^m \to \mathbb{R}^n$ are vector-valued functions.
We can estimate $S$ by the sample analog:
\begin{align}\label{eq-H}
	\hat S  \coloneqq h\left( \frac{1}{N} \sum_{i=1}^N g (\hat \theta_i )\right),
\end{align}
where $\hat \theta_i$ is the estimator corresponding to $\theta_i$.
For example, we can estimate the correlation between the mean and variance using the sample correlation of $\hat \mu_i$ and $\hat \gamma_{0, i}$.
Note that we do not need to consider a function of autocorrelations separately given the autocorrelations are functions of the autocovariances.
Section \ref{sec-smooth} investigates the asymptotics for $\hat S$ when $g$ and $h$ are \textit{smooth}, and we analytically evaluate the bias of order $O(1/T)$ in $\hat S$.

\citeauthor{DhaeneJochmans15}'s (\citeyear{DhaeneJochmans15}) split-panel jackknife can reduce the bias in $\hat S$ if $g$ and $h$ are smooth.
For example, the half-panel jackknife (HPJ) bias correction can delete the bias of order $O(1/T)$.
Suppose that $T$ is even.\footnote{If $T$ is odd, 
	we define $\bar S = (\hat S^{(1,1)} + \hat S^{(2,1)}  + \hat S^{(1,2)} + \hat S^{(2,2)} )/4$ as in \citet[page 9]{DhaeneJochmans15}, where $\hat S^{(1,1)}$, $\hat S^{(2,1)}$, $\hat S^{(1,2)}$, and $\hat S^{(2,2)}$ are the estimators of $S$ computed using $\{\{ y_{it} \}_{t=1}^{\lceil T/2 \rceil}\}_{i=1}^{N}$, $\{\{ y_{it} \}_{t=\lceil T/2 \rceil+1}^{T}\}_{i=1}^{N}$, $\{\{ y_{it} \}_{t=1}^{\lfloor T/2 \rfloor}\}_{i=1}^{N}$, and $\{\{ y_{it} \}_{t=\lfloor T/2 \rfloor +1}^{T}\}_{i=1}^{N}$, respectively. 
	Here, $\lceil \cdot \rceil$ and $\lfloor \cdot \rfloor$ are the ceiling and floor functions, respectively. 
	We note that the asymptotic properties of the HPJ estimator for odd $T$ are the same as those for even $T$. 
}
We divide the panel data into two subpanels: $\{ \{ y_{it}\}_{t=1}^{T/2}\}_{i=1}^N$ and $\{ \{ y_{it}\}_{t=T/2 +1}^{T}\}_{i=1}^N$. 
Let $\hat S^{(1)}$ and $\hat S^{(2)}$ be the estimators of $S$ computed using $\{\{ y_{it} \}_{t=1}^{T/2}\}_{i=1}^{N}$ and $\{\{ y_{it} \}_{t=T/2+1}^{T}\}_{i=1}^{N}$, respectively.
Let $\bar S \coloneqq (\hat S^{(1)} + \hat S^{(2)} )/2$.
The HPJ bias-corrected estimator of $S$ is
\begin{align} \label{eq-G-HPJ}
	\hat S^H 
	\coloneqq \hat S - (\bar S - \hat S) 
	=  2 \hat S - \bar S.
 \end{align}
The HPJ estimates the bias in $\hat S$ by $\bar S - \hat S$, and $\hat S^H$ does not show bias of order $O(1/T)$.

We may also consider a higher-order jackknife bias correction to eliminate the bias of an order higher than $O(1/T)$, as discussed in \citet{DhaeneJochmans15}.
In particular, we consider the third-order jackknife (TOJ) in the empirical application and simulations.
However, we must modify the formula in \citet{DhaeneJochmans15} for TOJ to correct higher-order biases in our setting.
See the supplementary appendix for the details of this modification.
Our Monte Carlo results below indicate that TOJ can be more successful than HPJ when higher-order biases are severe.
However, in some cases, TOJ eliminates biases at the cost of deteriorating precision of estimation.
Hence, we recommend adopting both HPJ and TOJ in practical situations.

For statistical inference of parameter $S$, we suggest using the cross-sectional bootstrap to approximate the distribution of the bias-corrected estimator. Here, we present the algorithm for the HPJ estimator.
The cross-sectional bootstrap regards each time series as the unit of observation and approximates the distribution of statistics under the empirical distribution of ($\hat \theta_i, \hat \theta_i^{(1)}, \hat \theta_i^{(2)}$), where $\hat \theta_i^{(1)}$ and $\hat \theta_i^{(2)}$ denote the estimates of $\theta_i$ from the first and second subpanels, respectively.
The algorithm is:
\begin{enumerate}
 \item Randomly draw $(\hat \theta_1^*, \hat \theta_1^{*(1)}, \hat \theta_1^{*(2)}), \dots, (\hat \theta_N^*, \hat \theta_N^{*(1)}, \hat \theta_N^{*(2)})$ from $\{ (\hat \theta_i, \hat \theta_i^{(1)}, \hat \theta_i^{(2)})\}_{i=1}^N$ with replacement.\footnote{If bootstrap is used to approximate the distribution of $\hat S-S$ not $\hat S^H-S$, then we just need to resample from $\{ \hat \theta_1,\dots, \hat \theta_N \}$. If inference is based on the TOJ estimator, estimates of $\theta_i $ from other subpanels are also required.}

 \item Compute the statistics of interest, say $\vartheta$, using $(\hat \theta_1^*, \hat \theta_1^{*(1)}, \hat \theta_1^{*(2)}), \dots, ( \hat \theta_N^*, \hat \theta_N^{*(1)}, \hat \theta_N^{*(2)})$

 \item Repeat 1 and 2 $B$ times. 
 Let $\vartheta^* (b)$ be the statistics of interest computed in the $b$-th bootstrap.

 \item Compute the quantities of interest using the empirical distribution of $\{ \vartheta^* (b) \}_{b=1}^B$.
\end{enumerate}
For example, suppose that we are interested in constructing a 95\% confidence interval for a scalar parameter $S$. 
We obtain the bootstrap approximation of the distribution of $\vartheta = \hat S^H - S$.
Let $\hat S^{H*} (b)$ be the HPJ estimate of $S$ obtained with the $b$-th bootstrap sample.
We then compute the 2.5\% and 97.5\% quantiles, denoted as $q^*_{0.025}$ and $q^*_{0.975}$, respectively, of the empirical distribution of $\{ \vartheta^* (b) \}_{b=1}^B$ with $\vartheta^* (b) = \hat S^{H*}(b) - \hat S^{H}$. 
The bootstrap 95\% confidence interval for $S$ is $[ \hat S^H - q^*_{0.975}, \hat S^H - q^*_{0.025}]$.

\section{Asymptotic analysis for the distribution estimators}\label{sec-asymptotics}

This section presents the asymptotic properties of the distribution estimator in \eqref{eq-em-dis}.
We first show the uniform consistency of the empirical distribution and then derive the functional central limit theorem (functional CLT).
We also show that the functional delta method can apply in this case.
All analyses presented below are under double asymptotics ($N, T \to \infty$).\footnote{More precisely, 
	we consider the case in which $T=T(N)$, where $T(N)$ is increasing in $N$ and $T(N) \to \infty$ as $N \to \infty$, but the exact form of the function $T(N)$ is left unspecified, except the condition imposed in each theorem.
	Note that analyses under sequential asymptotics where $N \to \infty$ after $T \to \infty$ ignore estimation errors in $\hat \mu_i$, $\hat \gamma_{k,i}$, and $\hat \rho_{k,i}$, and they fail to capture the bias. We do not consider the sequential asymptotics where $T \to \infty$ after $N \to \infty$.
}

While our setting is fully nonparametric as introduced in Section \ref{sec-setting}, the following representation is useful for our theoretical analysis.
Let $w_{it} \coloneqq y_{it} - E(y_{it}|i) = y_{it} - \mu_i$.
By construction, $y_{it} = \mu_{i} + w_{it}$ and $E(w_{it}|i)=0$ for any $i$ and $t$. 
Note also that $\gamma_{k,i} = E(w_{it} w_{i,t-k}|i)$.

\subsection{Assumptions}

Because we use empirical process techniques, it is convenient to rewrite the empirical distributions as empirical processes indexed by a class of indicator functions.
Let $\mathbb{P}_{N}^{\hat \xi} \coloneqq N^{-1} \sum_{i=1}^{N} \delta_{\hat \xi_i}$ be the empirical measure of $\hat \xi_i = \hat \mu_i$, $\hat \gamma_{k,i}$, or $\hat \rho_{k,i}$, where $\delta_{\hat \xi_i}$ is the probability distribution degenerated at $\hat \xi_i$.
Let $\mathcal{F} \coloneqq \{ \mathbf{1}_{(-\infty, a]} : a \in \mathbb{R} \}$ be the class of indicator functions where $\mathbf{1}_{(-\infty,a]} (x) \coloneqq \mathbf{1} ( x \le a )$.
We denote the probability measure of $\xi_i$ as $P_0^{\xi}$.
In this notation, the empirical distribution function $\mathbb{F}^{\hat \xi}_{N}$ in \eqref{eq-em-dis} is an empirical process indexed by $\mathcal{F}$, and $\mathbb{P}_N^{\hat \xi} f = \mathbb{F}_N^{\hat \xi} (a)$ for $f = \mathbf{1}_{(-\infty, a]}$.
Similarly, $P_0^{\xi} f = F_0^{\xi}(a) = \Pr(\xi_i \leq a)$ for $f = \mathbf{1}_{(-\infty, a]}$.
We often use shorthand notations such as $\mathbb{P}_N = \mathbb{P}_N^{\hat \xi}$, $\mathbb{F}_N = \mathbb{F}^{\hat \xi}_N$, $P_0 = P_0^{\xi}$, and $F_0 = F_0^{\xi}$ by omitting the superscripts $\hat \xi$ and $\xi$.

Throughout the study, we assume the following summarizes the conditions in Section \ref{sec-setting}.

\begin{assumption} \label{as-basic}
	The sample space of $\alpha_i$ is some Polish space and $y_{it} \in \mathbb{R}$ is a scalar real random variable. 
	$\{(\{y_{it}\}_{t=1}^T, \alpha_i)\}_{i=1}^N$ is independently and identically distributed (i.i.d.) across $i$. 
\end{assumption}

We stress that the i.i.d. assumption does not restrict the heterogeneous dynamics.

The following assumptions depend on natural numbers $r_m$ and $r_d$, which will be specified in the theorems that use this assumption. For a strictly stationary stochastic process $\{X_t\}_{t=1}^\infty$, define $\alpha$-mixing coefficients as $\alpha (m) = \sup_{A \in \mathcal{M}_1^k, B\in \mathcal{M}_{k+m}^{\infty}} |\Pr (A\cap B) - \Pr(A) \Pr(B) |$, where $\mathcal{M}_a^b$ denotes the $\sigma$-algebra generated by $X_j$ for $a \leq j \leq b$, and call the process $\alpha$-mixing if $\alpha (m) \to 0$ as $m\to \infty$.

\begin{assumption} \label{as-mixing-c}
	For each $i$, $\{y_{it}\}_{t=1}^{\infty}$ is strictly stationary and $\alpha$-mixing given $\alpha_i$, with mixing coefficients $\{\alpha (m|i)\}_{m=0}^\infty$.
	There exists a natural number $r_m$ and a sequence $\{ \alpha (m) \}_{m=0}^\infty$ such that for any $i$ and $m$, $\alpha (m|i) \le \alpha (m)$ and $\sum_{m=0}^{\infty} (m+1)^{r_m/2-1} \alpha(m) ^{\delta / (r_m+\delta)} < \infty$ for some $\delta>0$.
\end{assumption}

\begin{assumption}\label{as-w-moment-c}
	There exists a natural number $r_d$ such that $E|w_{it}|^{r_d+\delta} < \infty$ for some $\delta > 0$.
\end{assumption}

Assumptions \ref{as-mixing-c} and \ref{as-w-moment-c} are mild regularity conditions on the process of $y_{it}$.
Assumption \ref{as-mixing-c} is a mixing condition depending on $r_m$ and restricts the degree of persistence of $y_{it}$ across time.
It also imposes stationarity on $\{y_{it}\}_{t=1}^{\infty}$, which in particular implies that the initial values are generated from the stationary distribution.
Note that a large $T$ could also guarantee that such an initial value condition (see, e.g., Section 4.3.2 in \citealp{Hsiao2014}) is negligible in our analysis.
Nonetheless, we impose this condition to simplify the analysis.
Assumption \ref{as-w-moment-c} requires that $w_{it}$ has some moment higher than the $r_d$-th order.
These assumptions are satisfied, for example,
when $y_{it}$ follows a heterogeneous stationary panel ARMA model with Gaussian innovations.

We also introduce Assumptions \ref{as-mu-con}, \ref{as-gamma-con}, and \ref{as-rho-con} for the uniform consistency and functional CLTs of the empirical distributions $\mathbb{F}_N^{\hat \mu}$, $\mathbb{F}_N^{\hat \gamma_k}$, and $\mathbb{F}_N^{\hat \rho_k}$, respectively.
Condition a) in each assumption is introduced for the uniform consistency, and the remaining conditions are required for the functional CLTs.
We define $\bar w_i \coloneqq T^{-1} \sum_{t=1}^T w_{it}$.

\begin{assumption} \label{as-mu-con}
	a) The random variable $\mu_i$ is continuously distributed.
	b) The CDF of $\mu_i$ is thrice boundedly differentiable.
	c) The CDF of $\hat \mu_i$ is thrice boundedly differentiable uniformly over $T$.
	d) There exists some fixed $M < \infty$ such that $E[(\bar w_i)^2 | \mu_i = \cdot ] \le M / T$.
\end{assumption}

\begin{assumption} \label{as-gamma-con}
	a) The random variable $\gamma_{k,i}$ is continuously distributed.
	b) The CDF of $\gamma_{k,i}$ is thrice boundedly differentiable.
	c) The CDF of $\hat \gamma_{k,i}$ is thrice boundedly differentiable uniformly over $T$.
	d) There exists some fixed $M < \infty$ such that $E[(\bar w_i)^2 | \gamma_{k,i} = \cdot] \le M/T$ and $E[(\hat \gamma_{k,i} - \gamma_{k,i})^2 | \gamma_{k,i} = \cdot] \le M / T$. 
\end{assumption}

\begin{assumption} \label{as-rho-con}
	a) The random variable $\rho_{k,i}$ is continuously distributed.
	b) The CDF of $\rho_{k,i}$ is thrice boundedly differentiable.
	c) The CDF of $\hat \rho_{k,i}$ is thrice boundedly differentiable uniformly over $T$.
	d) There exists some fixed $M < \infty$ such that $E[(\bar w_i)^2 | \rho_{k,i} = \cdot] \le M/T$, $E[(\hat \gamma_{k,i} - \gamma_{k,i})^2 | \rho_{k,i} = \cdot] \le M/T$, and $E[(\hat \gamma_{0,i} - \gamma_{0,i})^2 | \rho_{k,i} = \cdot] \le M/T$.
	e) There exist some fixed $\varepsilon > 0$ and $M < \infty$ such that $\hat \gamma_{0,i} > \varepsilon$, $\gamma_{0,i} > \varepsilon$, $|\hat \gamma_{k,i}| < M$, and $|\gamma_{k,i}| < M$ almost surely.
\end{assumption}

Assumption \ref{as-mu-con} states that $\mu_i$ and $\hat \mu_i $ are continuous random variables.
This assumption is restrictive in the sense that it does not allow a discrete distribution of $\mu_i$ or no heterogeneity in the mean (i.e., $\mu_i$ is homogeneous such that $\mu_i = \mu$ for some constant $\mu$ for any $i$).\footnote{Discrete heterogeneity is considered in, for example, \cite{BonhommeManresa15} and \cite{SuShiPhillips14} for linear panel data analyses.}
\footnote{We might consider testing homogeneity in a formal manner by extending the testing procedures in \citet{pesaran2008testing} to our model-free context. 
	The construction of test statistics and the derivation of their asymptotic distributions for such extensions are nontrivial tasks, and this topic is left for future work.}
The uniform consistency and functional CLT could not hold without the continuity of $\mu_i$.
The assumption also imposes restrictions on the distribution of the noise $\bar w_{i}$. These assumptions are satisfied when data are generated by Gaussian ARMA processes with continuously distributed parameters. That said, we may be able to relax the continuity of $\hat \mu_i$ (the estimated version of $\mu_i$) in condition c), but this requires different proofs to evaluate the order of the bias for the functional CLT. 
Assumption \ref{as-mu-con} also restricts the order of the conditional moment of $\bar w_i$. Note that 
Lemma \ref{lem-moment-w} shows that the order of $E(\bar w_i^2)$ is $1/T$ and this assumption states that the same order holds for the conditional counterpart.
Assumptions \ref{as-gamma-con} and \ref{as-rho-con} are similar to Assumption \ref{as-mu-con}, except Assumption \ref{as-rho-con}.e restricts variances $\gamma_{0, i}$ and $\hat \gamma_{0,i}$ that are bounded away from zero and autocovariances $\gamma_{k,i}$ and $\hat \gamma_{k,i}$ that are bounded.
We need these additional conditions to examine the empirical distribution for $\hat \rho_{k,i}$. 

\subsection{Uniform consistency}
The following theorem establishes the uniform consistency of the distribution estimator.

\begin{theorem} \label{thm-gc}
	Suppose that Assumptions \ref{as-basic}, \ref{as-mixing-c}, \ref{as-w-moment-c}, and \ref{as-mu-con}.a hold for $r_m = 2$ and $r_d = 2$ if $\hat \xi_i = \hat \mu_i$ and $\xi_i = \mu_i$;
	that Assumptions \ref{as-basic}, \ref{as-mixing-c}, \ref{as-w-moment-c}, and \ref{as-gamma-con}.a hold for $r_m = 4$ and $r_d = 4$ if $\hat \xi_i = \hat \gamma_{k,i}$ and $\xi_i = \gamma_{k,i}$; 
	and that Assumptions \ref{as-basic}, \ref{as-mixing-c}, \ref{as-w-moment-c}, and \ref{as-rho-con}.a hold for $r_m = 4$ and $r_d = 4$ if $\hat \xi_i = \hat \rho_{k,i}$ and $\xi_i = \rho_{k,i}$.
	When $N,T \to \infty$, the class $\mathcal{F}$ is $P_0$-Glivenko--Cantelli in the sense that $\sup_{f \in \mathcal{F}} | \mathbb{P}_N f - P_{0} f | \stackrel{as}{\longrightarrow} 0$ where $\stackrel{as}{\longrightarrow}$ signifies the almost sure convergence.
\end{theorem}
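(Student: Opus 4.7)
The plan is to reduce the theorem to the classical Glivenko--Cantelli theorem for the infeasible empirical distribution $\mathbb{F}_N^{\xi}(a) \coloneqq N^{-1}\sum_{i=1}^N \mathbf{1}(\xi_i \le a)$ of the true $\xi_i$. Since $\{\xi_i\}_{i=1}^N$ is i.i.d.\ across $i$ under Assumption \ref{as-basic}, the classical Glivenko--Cantelli theorem gives $\sup_a |\mathbb{F}_N^\xi(a) - F_0^\xi(a)| \stackrel{as}{\longrightarrow} 0$, so the remaining task is to bound $\sup_a |\mathbb{F}_N^{\hat\xi}(a) - \mathbb{F}_N^\xi(a)|$, the error from replacing the unobserved $\xi_i$ by its time-series plug-in $\hat\xi_i$. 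I would handle this by the sandwich inequality: for any $\delta > 0$ and any $a \in \mathbb{R}$,
\begin{equation*}
\mathbf{1}(\xi_i \le a - \delta) - \mathbf{1}(|\hat\xi_i - \xi_i| > \delta) \;\le\; \mathbf{1}(\hat\xi_i \le a) \;\le\; \mathbf{1}(\xi_i \le a + \delta) + \mathbf{1}(|\hat\xi_i - \xi_i| > \delta).
\end{equation*}

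Averaging in $i$ and using that $F_0^\xi$ is uniformly continuous on $\mathbb{R}$ (which holds because $\xi_i$ is continuously distributed by Assumption \ref{as-mu-con}.a, \ref{as-gamma-con}.a, or \ref{as-rho-con}.a, and any continuous CDF on $\mathbb{R}$ is uniformly continuous) yields, for every $\eta > 0$ and all sufficiently small $\delta > 0$,
\begin{equation*}
\sup_a |\mathbb{F}_N^{\hat\xi}(a) - F_0^\xi(a)| \;\le\; \sup_a |\mathbb{F}_N^\xi(a) - F_0^\xi(a)| + \eta + R_N(\delta), \qquad R_N(\delta) \coloneqq \frac{1}{N}\sum_{i=1}^N \mathbf{1}(|\hat\xi_i - \xi_i| > \delta).
\end{equation*}
Sending $N \to \infty$ and then $\eta \downarrow 0$ reduces the claim to $R_N(\delta) \stackrel{as}{\longrightarrow} 0$ for each fixed $\delta > 0$. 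The summands in $R_N(\delta)$ are i.i.d.\ across $i$ (conditional on $T = T(N)$) and lie in $[0,1]$, with common mean $p_{N,T} \coloneqq \Pr(|\hat\xi_1 - \xi_1| > \delta)$. Hoeffding's inequality gives $\Pr(R_N(\delta) - p_{N,T} > \epsilon) \le \exp(-2N\epsilon^2)$, which is summable, so Borel--Cantelli implies $\limsup_N R_N(\delta) \le \limsup_N p_{N,T}$ almost surely.

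It thus suffices to show $p_{N,T} \to 0$, i.e., that $\hat\xi_i - \xi_i = o_p(1)$ for each estimator. For $\hat\mu_i$, this reduces to $\bar w_i = o_p(1)$, which follows from Markov's inequality together with $E[\bar w_i^2] = O(1/T)$, a standard consequence of the mixing and moment bounds in Assumptions \ref{as-mixing-c} and \ref{as-w-moment-c} with $r_m = r_d = 2$. For $\hat\gamma_{k,i}$, analogous computations go through with the stronger orders $r_m = r_d = 4$, needed to handle the quadratic products $w_{it} w_{i,t-k}$. For $\hat\rho_{k,i} = \hat\gamma_{k,i}/\hat\gamma_{0,i}$, the convergence follows from the continuous mapping theorem applied jointly to numerator and denominator (noting $\gamma_{0,i} > 0$ almost surely, which is implicit in the well-definedness of $\rho_{k,i}$ together with Assumption \ref{as-rho-con}.a). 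The main obstacle is precisely that $\mathbb{F}_N^{\hat\xi}$ is the empirical CDF of a triangular array whose marginal law depends on $T = T(N)$, so classical empirical process arguments do not apply off the shelf; the sandwich above is what sidesteps this by isolating the entire $T$-dependence into the scalar remainder $R_N(\delta)$, which is then controlled by an elementary concentration argument.
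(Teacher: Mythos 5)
Your proof is correct, but it takes a genuinely different route from the paper's. The paper decomposes $\sup_f|\mathbb{P}_N f - P_0 f|$ into $\sup_f|\mathbb{P}_N f - P_T f|$ plus $\sup_f|P_T f - P_0 f|$, where $P_T$ is the law of $\hat\xi_i$: the second piece is handled by mean-square convergence of $\hat\xi_i$ to $\xi_i$ plus P\'olya's theorem (Lemma 2.11 in van der Vaart), and the first by a strong law of large numbers for triangular arrays combined with the classical partition argument from the Glivenko--Cantelli proof, with the partition built from $P_0$ and transferred to $P_T$ via the uniform convergence already established. You instead compare against the infeasible empirical CDF of the true $\xi_i$, invoke the classical Glivenko--Cantelli theorem for that object, and isolate the entire $T$-dependence into the scalar remainder $R_N(\delta)$ via the sandwich inequality, which you then control with Hoeffding plus Borel--Cantelli and $\Pr(|\hat\xi_1-\xi_1|>\delta)\to 0$. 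Both arguments need only convergence in probability of $\hat\xi_i$ to $\xi_i$ (the paper obtains it from mean-square bounds) together with continuity of $F_0$; yours replaces the triangular-array SLLN and the partition bookkeeping with uniform continuity of $F_0$ and an elementary concentration bound, which makes it somewhat more self-contained. One shared caveat: for $\hat\rho_{k,i}$ the theorem's hypotheses list only Assumption \ref{as-rho-con}.a, yet the paper's cited Lemma \ref{lem-rho-moment} additionally requires $\gamma_{0,i}>\epsilon$ almost surely (part of Assumption \ref{as-rho-con}.e); your version needs only $\gamma_{0,i}>0$ almost surely because convergence in probability suffices, so your reliance on the well-definedness of $\rho_{k,i}$ is defensible, if stated informally.
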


Note that Theorem \ref{thm-gc} cannot be directly shown by the usual Glivenko--Cantelli theorem (e.g., Theorem 19.1 in \citealp{vanderVaart98}) because the true distribution of $\hat \xi_i$ changes as $T$ increases.
Nonetheless, our proof follows similar steps to those of the usual Glivenko--Cantelli theorem.

\subsection{Functional central limit theorem}

We present the functional CLTs for the empirical distributions of $\hat \mu_i$, $\hat \gamma_{k,i}$, and $\hat \rho_{k,i}$. 
We aim to derive the asymptotic law of $\sqrt{N}(\mathbb{P}_{N} f - P_{0} f)$ where $f \in \mathcal{F}$.
We can also obtain the asymptotic distribution of other quantities via the functional delta method based on this result.

The functional CLT for $\mathbb{P}_N$ holds under a similar set of assumptions for the uniform consistency, but we need all of the conditions in Assumption \ref{as-mu-con}, \ref{as-gamma-con}, or \ref{as-rho-con} to evaluate the order of the bias.
We also require a condition on the relative magnitudes of $N$ and $T$ asymptotically to eliminate the bias.
Let $\ell^{\infty}(\mathcal{F})$ be the collection of all bounded real functions on $\mathcal{F}$.

\begin{theorem} \label{thm-fclt}
	Suppose that Assumptions \ref{as-basic}, \ref{as-mixing-c}, \ref{as-w-moment-c}, and \ref{as-mu-con} hold for $r_m = 4$ and $r_d = 4$ if $\hat \xi_i = \hat \mu_i$ and $\xi_i = \mu_i$;
	that Assumptions \ref{as-basic}, \ref{as-mixing-c}, \ref{as-w-moment-c}, and \ref{as-gamma-con} hold for $r_m = 8$ and $r_d = 8$ if $\hat \xi_i = \hat \gamma_{k,i}$ and $\xi_i = \gamma_{k,i}$; 
	and that Assumptions \ref{as-basic}, \ref{as-mixing-c}, \ref{as-w-moment-c}, and \ref{as-rho-con} hold for $r_m = 8$ and $r_d = 8$ if $\hat \xi_i = \hat \rho_{k,i}$ and $\xi_i = \rho_{k,i}$.
	When $N,T \to \infty$ with $N^{3+\epsilon}/T^4 \to 0$ for some $\epsilon \in (0, 1/3)$, we have 
	\begin{align*}
		\sqrt{N}(\mathbb{P}_N - P_0) \leadsto \mathbb{G}_{P_0} \qquad \mbox{in} \quad \ell^{\infty}(\mathcal{F}),
	\end{align*}
	where $\leadsto$ signifies weak convergence, $\mathbb{G}_{P_0}$ is a Gaussian process with zero mean and covariance function $E(\mathbb{G}_{P_0}(f_{i})\mathbb{G}_{P_0}(f_j))=F_0(a_i \wedge a_j) - F_0(a_i) F_0(a_j)$ with $f_i = \mathbf{1}_{(-\infty, a_i]}$ and $f_{j} = \mathbf{1}_{(-\infty, a_j]}$ for $a_i, a_j \in \mathbb{R}$ and $a_i \wedge a_j$ is the minimum of $a_i$ and $a_j$.
\end{theorem}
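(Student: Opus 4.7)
The plan is to decompose
\[
\sqrt{N}\bigl(\mathbb{P}_N^{\hat\xi} - P_0^{\xi}\bigr)
= \sqrt{N}\bigl(\mathbb{P}_N^{\hat\xi} - P_0^{\hat\xi}\bigr) + \sqrt{N}\bigl(P_0^{\hat\xi} - P_0^{\xi}\bigr),
\]
where $P_0^{\hat\xi}$ denotes the (true but $T$-dependent) law of $\hat\xi_i$, show that the first term converges weakly in $\ell^\infty(\mathcal{F})$ to $\mathbb{G}_{P_0}$ and the second is $o(1)$ uniformly in $a$, then combine the two via a Slutsky-type continuity argument for weak convergence in $\ell^\infty(\mathcal{F})$.

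For the stochastic term, I would observe that $\{\hat\xi_i\}_{i=1}^N$ is i.i.d.\ for each $T$ (Assumption \ref{as-basic}) and $\mathcal{F}$ is a VC-subgraph class of uniformly bounded functions, so a triangular-array Donsker argument applies. Uniform-in-$T$ smoothness of the CDF of $\hat\xi_i$ (Assumptions \ref{as-mu-con}.c, \ref{as-gamma-con}.c, \ref{as-rho-con}.c) yields bracketing-entropy bounds uniform in $T$, giving stochastic equicontinuity. Finite-dimensional convergence follows from the classical i.i.d.\ CLT, since the covariance $P_0^{\hat\xi}(a_i\wedge a_j)-P_0^{\hat\xi}(a_i)P_0^{\hat\xi}(a_j)$ converges to $P_0(a_i\wedge a_j)-P_0(a_i)P_0(a_j)$ by the pointwise convergence $P_0^{\hat\xi}\to P_0$ implied by Theorem \ref{thm-gc}.

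The more delicate part is showing $\sqrt{N}\sup_a|P_0^{\hat\xi}(a)-P_0^{\xi}(a)|=o(1)$, for which I would use the inversion theorem for characteristic functions. A Gil-Pelaez--type smoothing inequality bounds this supremum by $\pi^{-1}\int_0^\infty |\varphi_{\hat\xi}(t)-\varphi_\xi(t)|/t\,dt$. Writing $\hat\xi_i=\xi_i+\epsilon_i$ and Taylor-expanding $e^{it\epsilon_i}$ yields
\[
\varphi_{\hat\xi}(t) - \varphi_\xi(t) = itE\bigl[e^{it\xi_i} E(\epsilon_i|\alpha_i)\bigr] - \tfrac{t^2}{2} E\bigl[e^{it\xi_i} \epsilon_i^2\bigr] + R(t).
\]
The conditional bias $E(\epsilon_i|\alpha_i)$ vanishes for $\hat\mu_i$ and is $O(1/T)$ for $\hat\gamma_{k,i}$ (mean-centering bias) and $\hat\rho_{k,i}$; $E\epsilon_i^2 = O(1/T)$ throughout, via the moment bounds in Assumptions \ref{as-mu-con}.d, \ref{as-gamma-con}.d, \ref{as-rho-con}.d together with standard mixing moment calculations. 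Hence for moderate $t$, $|\varphi_{\hat\xi}(t)-\varphi_\xi(t)|\lesssim |t|/T$; for large $t$, the three-times bounded differentiability of the CDFs (conditions b and c) gives $|\varphi_\xi(t)|,|\varphi_{\hat\xi}(t)|\lesssim 1/t^2$ by integration by parts. Splitting the inversion integral at $U\asymp T^{1/3}$ balances the two regimes and yields a uniform bias of order $T^{-2/3}$; the requirement $\sqrt{N}\,T^{-2/3}\to 0$ is exactly $N^3/T^4\to 0$, with the $\epsilon$ slack absorbing constants and logarithmic factors.

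The main obstacle will be the autocorrelation case $\hat\rho_{k,i}=\hat\gamma_{k,i}/\hat\gamma_{0,i}$: being nonlinear, it requires a preliminary Taylor expansion of the ratio, with the remainder bounded uniformly using Assumption \ref{as-rho-con}.e (variances bounded away from zero, autocovariances bounded). This reduces the characteristic-function analysis to a linear combination of $(\hat\gamma_{k,i}-\gamma_{k,i})$ and $(\hat\gamma_{0,i}-\gamma_{0,i})$, on which Assumption \ref{as-rho-con}.d supplies the first- and second-moment bounds needed. A secondary technical challenge is that the triangular-array Donsker step must survive the fact that the law $P_0^{\hat\xi}$ varies with $T$; the uniform-in-$T$ density bounds from conditions c are essential, as without them the bracketing numbers could blow up along the sequence and equicontinuity would fail.
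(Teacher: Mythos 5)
Your overall architecture matches the paper's: the same decomposition $\sqrt{N}(\mathbb{P}_N-P_0)=\sqrt{N}(\mathbb{P}_N-P_T)+\sqrt{N}(P_T-P_0)$ with $P_T$ the law of $\hat\xi_i$, a triangular-array uniform CLT (Lemma 2.8.7 of van der Vaart and Wellner) for the first piece, and a characteristic-function/inversion argument for the deterministic bias. The empirical-process half of your argument is essentially the paper's (one quibble: for the class of half-line indicators the entropy bounds are law-free by the VC property, so what the smoothness actually buys you is not bracketing control but the convergence of the variance semimetric $\rho_{P_T}\to\rho_{P_0}$, which the paper verifies from $\sup_a|F_T(a)-F_0(a)|\to0$).

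The genuine gaps are in the bias term. First, your tail treatment of the inversion integral relies on $|\varphi_{\xi}(t)|,|\varphi_{\hat\xi}(t)|\lesssim t^{-2}$ ``by integration by parts'' from the thrice bounded differentiability of the CDFs. Bounded derivatives of the density do not give integrable derivatives, so integration by parts does not deliver any polynomial decay of the characteristic functions, and the untruncated integral $\int|\varphi_{\hat\xi}(t)-\varphi_\xi(t)|/t\,dt$ need not even converge. The paper avoids this entirely by convolving both $\hat\xi_i$ and $\xi_i$ with an independent $\mathcal{N}(0,\sigma^2)$ noise: the factor $e^{-\sigma^2\zeta^2/2}$ makes every $\zeta$-integral absolutely convergent, and the smoothing error is $O(\sigma^2)$ precisely because of conditions b) and c). Second, your rate arithmetic does not reach $T^{-2/3}$. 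In the expansion the second-order term is $\tfrac{t^2}{2}E[e^{it\xi_i}\epsilon_i^2]$; bounding it by $t^2E\epsilon_i^2=O(t^2/T)$ and integrating $|\cdot|/t$ up to $U$ gives $U^2/T$, which with any truncation/smoothing penalty of order $U^{-1}$ or $\sigma^2$ optimizes to $T^{-1/3}$, not $T^{-2/3}$ --- and $\sqrt{N}T^{-1/3}\to0$ is far stronger than $N^3/T^4\to0$. The paper's key extra step is to integrate the Gaussian-damped inversion kernel in closed form, obtaining a weight $\frac{a-\mu_j}{\sigma^3}\exp(-\tfrac{(a-\mu_j)^2}{2\sigma^2})$ that localizes the expectation to $|a-\mu_j|\lesssim\sigma^{1-\epsilon'}$; combining this localization with the \emph{conditional} second-moment bound $E[\epsilon_i^2\,|\,\xi_i]\le M/T$ (the d) conditions you cite only in passing) and the bounded density of $\xi_i$ gains the additional factor needed to reach $O(T^{-2/(3+\epsilon)})$. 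Without that localization your proposal proves the theorem only under a substantially stronger rate condition.
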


The asymptotic law of the empirical process is Gaussian, which is identical to the limiting distribution for the empirical process constructed using the true $\xi_i = \mu_i$, $\gamma_{k,i}$, or $\rho_{k,i}$. 
However, this result requires that $N^{3+\epsilon}/T^4 \to 0$ for some $\epsilon$ such that $0 < \epsilon < 1/3$, which allows us to ignore the estimation error in $\hat \xi_i = \hat \mu_i$, $\hat \gamma_{k,i}$, or $\hat \rho_{k,i}$ asymptotically.
Note that the condition, $N^{3+\epsilon}/T^4 \to 0$, is almost equivalent to $N^3 / T^4 \to 0$ because we can select an arbitrarily small $\epsilon > 0$.

We provide a brief summary of the proof and explain why we require the condition $N^{3+\epsilon}/T^4 \to 0$. 
The key to understanding the mechanism behind the requirement that $N^{3+\epsilon}/T^4 \to 0$ is to recognize that $E (\mathbb{P}_N f) \neq P_0 f$. 
That is, $\mathbb{P}_N f$ is not an unbiased estimator for $P_0 f$.
As a result, we cannot directly apply the existing results for the empirical process to derive the asymptotic distribution.
Let $P_T = P_T^{\hat \xi}$ be the (true) probability measure of $\hat \xi_i = \hat \mu_i$, $\hat \gamma_{k,i}$, or $\hat \rho_{k,i}$.
Note that $P_T$ depends on $T$ and $P_T \neq P_0$, and observe that $E(\mathbb{P}_N f) = P_T f$.
Let $\mathbb{G}_{N, P_T} \coloneqq \sqrt{N}(\mathbb{P}_N - P_T)$.
We observe that
\begin{align}
	\sqrt{N}(\mathbb{P}_N f - P_0 f) 
	= & \ \mathbb{G}_{N,P_T} f \label{fclt_1} \\
	&  + \sqrt{N} (P_T f - P_0 f). \label{fclt_2}
\end{align}
For $\mathbb{G}_{N,P_T}$ in \eqref{fclt_1}, we can directly apply the uniform CLT for the empirical process based on triangular arrays (\citealp[Lemma 2.8.7]{vanderVaartWellner96}) and obtain $\mathbb{G}_{N, P_T}  \leadsto \mathbb{G}_{P_0}$ in $\ell^{\infty}(\mathcal{F})$ as $N \to \infty$.
This part of the proof is standard.

We require the condition $N^{3+\epsilon}/T^4 \to 0$ to eliminate the effect of the bias term $\sqrt{N}(P_T f - P_0 f )$ in \eqref{fclt_2}.
In the proof of the theorem, we show that
\begin{align*}
	\sqrt{N} \sup_{f\in\mathcal{F}} \left| P_{T} f - P_{0}f \right| = O\left(\frac{\sqrt{N}}{T^{2/(3+\epsilon)}}\right),
\end{align*}
for any $0 < \epsilon < 1/3$.
The result is based on the evaluation that the difference between $P_T$ and $P_0$ is of order $O(1 / T^{2/(3+\epsilon)})$. This order is obtained by evaluating the characteristic functions of $\hat \xi_i$ and $\xi_i$, and applying the inversion theorem (\citealp{gil1951note} and \citealp{wendel1961non}).
We note that the condition, $0 < \epsilon < 1/3$, is used to ensure the integrability of integrals for the inversion theorem (see the proof for details).
As a result, we can establish the weak convergence under the condition $N^{3+\epsilon}/T^4 \to 0$ for a sufficiently small $0 < \epsilon < 1/3$.

\begin{remark}
	When we additionally assume that $\hat \xi_i$ exhibits a Gaussian error, we can derive the exact bias that is of order $O(1/T)$ and show the same limiting law as in Theorem \ref{thm-fclt} under the weaker condition on the relative magnitudes that $N / T^2 \to 0$.
	In this case, we can also validate the HPJ bias correction for the distribution estimator.
	The proof utilizing the Gaussian assumption (and a location--scale assumption) can be found in \citet{JochmansWeidner2018} in a general setting for noisy measurement, and we do not explore such a proof here.
	However, we stress that our proof for Theorem \ref{thm-fclt} is distinct from theirs because we do not assume Gaussianity nor any parametric specification for $\hat \xi_i$ and $\xi_i$ and it requires a quite different proof technique.
\end{remark}

\subsection{Functional delta method}

We can derive the asymptotic distribution of an estimator that is a function of the empirical distribution using the functional delta method.
Suppose that we are interested in the asymptotics of $\phi(\mathbb{P}_{N})$ for a functional $\phi:D(\mathcal{F}) \to \mathbb{R}$ where $D(\mathcal{F})$ is the collection of all c\`adl\`ag real functions of $\mathcal{F}$.
For example, the $\tau$-th quantile $\phi (P_0) = q_{\tau} = F_0^{-1}(\tau) = \inf\{a \in \mathbb{R} : F_0 (a) \geq \tau \}$ for $\tau \in (0,1)$ may be estimated by the empirical quantile of $\hat \xi_i$: $\phi(\mathbb{P}_{N}) = \hat q_{\tau}= \mathbb{F}_N^{-1} (\tau) = \inf\{a \in \mathbb{R} : \mathbb{F}_N (a) \geq \tau \}$.
More generally, we can estimate the quantile process $F_0^{-1}$ using the empirical quantile process $\mathbb{F}_N^{-1}$.

The derivation of the asymptotic distribution of $\phi(\mathbb{P}_{N})$ is a direct application of the functional delta method (e.g., \citealp[Theorems 3.9.4]{vanderVaartWellner96}) and Theorem \ref{thm-fclt}.
We summarize this result in the following corollary.\footnote{In Corollary \ref{cor-delta}, we can change the Hadamard differentiability of $\phi:D(\mathcal{F}) \subset \ell^{\infty}(\mathcal{F}) \to \mathbb{E}$ to the Hadamard differentiability of $\phi: \ell^{\infty}(\mathcal{F}) \subset D(\mathbb{\bar R}) \to \mathbb{R}$ tangentially to a set of continuous functions in $D(\mathbb{\bar R})$, where $D(\mathbb{\bar R})$ is the Banach space of all c\`adl\`ag functions $z:\mathbb{\bar {R}} \to \mathbb{R}$ on $\mathbb{\bar R}$ equipped with the uniform norm.
	See Lemma 3.9.20 and Example 3.9.21 in \citet{vanderVaartWellner96} for details.
}

\begin{corollary} \label{cor-delta}
	Suppose that the assumptions in Theorem \ref{thm-fclt} hold.	
	Suppose that $\phi:D(\mathcal{F}) \subset \ell^{\infty}(\mathcal{F}) \to \mathbb{E}$ is Hadamard differentiable at $P_0$ with the derivative $\phi'_{P_0}$ where $\mathbb{E}$ is a normed linear space.
	When $N,T \to \infty$ with $N^{3+\epsilon}/T^4 \to 0$ for some $\epsilon \in (0, 1/3)$, we have $\sqrt{N} ( \phi(\mathbb{P}_N) - \phi(P_{0}) ) \leadsto \phi'_{P_{0}}(\mathbb{G}_{P_{0}})$.
\end{corollary}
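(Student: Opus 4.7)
The plan is to derive the corollary as a direct application of the functional delta method, with Theorem \ref{thm-fclt} supplying the required weak-convergence input. Since the substantive analytic work, including the bias-order control that forces $N^{3+\epsilon}/T^4 \to 0$, has already been carried out in Theorem \ref{thm-fclt}, the corollary is essentially a one-step consequence and I do not anticipate any genuinely new technical difficulty beyond a careful bookkeeping of the function spaces involved.

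Concretely, I would first invoke Theorem \ref{thm-fclt} to obtain $\sqrt{N}(\mathbb{P}_N - P_0) \leadsto \mathbb{G}_{P_0}$ in $\ell^{\infty}(\mathcal{F})$ under the stated rate condition. The empirical measure $\mathbb{P}_N$ lies in $D(\mathcal{F}) \subset \ell^{\infty}(\mathcal{F})$ because the map $a \mapsto \mathbb{P}_N \mathbf{1}_{(-\infty, a]}$ is c\`adl\`ag, so that $\phi(\mathbb{P}_N)$ is well defined. Moreover, continuity of $F_0$ (by Assumption \ref{as-mu-con}.a, \ref{as-gamma-con}.a, or \ref{as-rho-con}.a, as appropriate) guarantees that $\mathbb{G}_{P_0}$ concentrates on a separable subspace of continuous distribution-function-type processes, which sits inside the tangent subspace customarily used when stating Hadamard differentiability of functionals such as quantile or moment maps at $P_0$.

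Applying Theorem 3.9.4 of \citet{vanderVaartWellner96} to the triple $(r_N, X_N, \theta) = (\sqrt{N}, \mathbb{P}_N, P_0)$ together with the Hadamard-differentiable map $\phi$ then yields
\begin{equation*}
\sqrt{N}\bigl(\phi(\mathbb{P}_N) - \phi(P_0)\bigr) \leadsto \phi'_{P_0}(\mathbb{G}_{P_0}),
\end{equation*}
which is the stated conclusion. For functionals that are more naturally defined on the classical Skorokhod space $D(\bar{\mathbb{R}})$, such as the quantile process $F_0^{-1}$, I would alternatively invoke Lemma 3.9.20 and Example 3.9.21 of \citet{vanderVaartWellner96} to transfer Hadamard differentiability across the two formulations, in line with the accompanying footnote. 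The only point requiring genuine care is aligning the domain on which $\phi$ is assumed Hadamard differentiable with the space of convergence supplied by Theorem \ref{thm-fclt}; once that alignment is made explicit, the delta method performs the entire argument and no further control of the $(N,T)$ growth rate is needed.
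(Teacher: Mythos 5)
Your proposal is correct and matches the paper's argument exactly: the paper likewise treats the corollary as a direct application of Theorem \ref{thm-fclt} combined with the functional delta method (Theorem 3.9.4 of \citealp{vanderVaartWellner96}), with the same remark about transferring Hadamard differentiability via Lemma 3.9.20 and Example 3.9.21 for quantile-type functionals. No gaps.
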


As an example, we can use this result to derive the asymptotic distribution of $\hat q_{\tau}$.
The form $\phi'_{P_0}$ for $\hat q_{\tau}$ is available in Example 20.5 in \citet{vanderVaart98} and indicates that $\sqrt{N} (\hat q_{\tau} - q_{\tau}) \leadsto \mathcal{N} ( 0, \tau (1-\tau) / (f (q_\tau))^2)$ where $\mathcal{N}(\mu, \sigma^2)$ is the normal distribution with mean $\mu$ and variance $\sigma^2$ and $f = f^{\xi}$ is the density function of $\xi_i$.
We can also derive the asymptotic law of the empirical quantile process $\mathbb{F}_N^{-1}$.
If $f$ is continuous and positive in the interval $[F_0^{-1}(p) - \varepsilon, F_0^{-1}(q) + \varepsilon]$ for some $0 < p < q < 1$ and $\varepsilon > 0$, then Corollary \ref{cor-delta} means that
\begin{align*}
	\sqrt{N} \left( \mathbb{F}_N^{-1} - F_0^{-1} \right)
	\leadsto - \frac{\mathbb{G}_{P_0} \circ F_0 \left( F_0^{-1} \right)}{f \left( F_0^{-1} \right)}
	\qquad \mbox{in} \quad \ell^{\infty}[p, q].
\end{align*}
This process is known to be Gaussian with zero mean and a known covariance function (e.g., Example 3.9.24 in \citealp{vanderVaartWellner96}).

\section{Function of the expected value of a smooth function of the heterogeneous mean and/or autocovariances}\label{sec-smooth}

In this section, we consider the estimation of a function of the expected value of a smooth function of the heterogeneous mean and/or autocovariances.
We also develop the asymptotic justifications of the HPJ bias correction and the cross-sectional bootstrap inference.

\subsection{Asymptotic results}\label{sec-ghat}

We derive the asymptotic properties of $\hat S = h(N^{-1} \sum_{i=1}^{N} g( \hat \theta_i ))$ in \eqref{eq-H} as the estimator of $S = h(E(g(\theta_i)))$.
Define $G \coloneqq E(g(\theta_i))$ and $\hat G \coloneqq N^{-1} \sum_{i=1}^N g(\hat \theta_i)$ such that $S = h(G)$ and $\hat S = h(\hat G)$.

We make the following assumptions to develop the asymptotic properties of $\hat S$.

\begin{assumption}\label{as-h-c}
The function $h: \mathbb{R}^m \to \mathbb{R}^n$ is continuous in a neighborhood of $G$.
\end{assumption}

\begin{assumption}\label{as-h-cd}
The function $h: \mathbb{R}^m \to \mathbb{R}^n$ is continuously differentiable in a neighborhood of $G$.
The matrix of the first derivatives $\nabla h(G)\coloneqq (\nabla h_1(G)^\top, \nabla h_2(G)^\top, \dots, \nabla h_n(G)^\top)^\top$ is of full row rank.
\end{assumption}

\begin{assumption}\label{as-multi-smooth}
The function $g=(g_1,g_2,\dots,g_m):\mathcal{O} \to \mathbb{R}^m$ is twice-continuously differentiable where $\mathcal{O} \subset \mathbb{R}^{l}$ is a convex open subset.
The covariance matrix $\Gamma \coloneqq E[(g(\theta_i) - E(g(\theta_i)))(g(\theta_i) - E(g(\theta_i)))^\top]$ exists and is nonsingular.
For any $p=1,2,\dots,m$, the elements of the Hessian matrix of $g_p$ are bounded functions.
For any $p=1,2,\dots,m$, the function $g_p$ satisfies $E[((\partial/ \partial z_{j}) g_p(z)|_{z=\theta_i})^4] < \infty$ for any $j=1,2,\dots,m$.
\end{assumption}

These assumptions impose conditions on the smoothness of $h$ and $g$ and the existence of moments.
Assumption \ref{as-h-c} applies to the continuous mapping theorem for the proof of consistency.
Assumption \ref{as-h-cd} is stronger than Assumption \ref{as-h-c} and is used for the application of the delta method to derive the asymptotic distribution.
Assumption \ref{as-multi-smooth} states that the function $g$ is sufficiently smooth.
This assumption is satisfied when the parameter of interest is the mean (i.e., $g(a) = a$) or the $p$-th order moment (i.e., $g(a) =a ^p$), for example.
However, this assumption is not satisfied when estimating the CDF (i.e., $g(a) = \mathbf{1} (a \leq c)$ for some $c\in\mathbb{R}$) or quantiles.
The existence of the first derivative is crucial for analyzing the asymptotic property of $\hat S$.
The second derivative is useful for evaluating the order of the asymptotic bias.
Assumption \ref{as-multi-smooth} also guarantees that the asymptotic variance exists, which rules out homogeneous dynamics, i.e., it excludes the case where $\theta_i = \theta$ for constant $\theta$ for any $i$ (see the supplementary appendix for the asymptotic results for homogeneous dynamics).

The following theorem demonstrates the asymptotic properties of $\hat S$.

\begin{theorem}\label{thm-h}
Let $r^* = 4$ if $\theta_i = \mu_i$ such that $S =h( E(g(\mu_i)))$ for some $h$ and $g$, and $r^*=8$ if $\theta_i$ contains $\gamma_{k,i}$ for some $k$. 
Suppose that Assumptions \ref{as-basic}, \ref{as-mixing-c}, \ref{as-w-moment-c}, \ref{as-h-c}, and \ref{as-multi-smooth} hold for $r_m=4$ and $r_d=r^*$.
When $N,T \to \infty$, it holds that $\hat S  \stackrel{p}{\longrightarrow}  S$.
Moreover, suppose that Assumption \ref{as-h-cd} also holds.
When $N, T \to \infty$ with $N/T^2 \to 0$, it holds that
\begin{align*}
 \sqrt{N} (\hat S - S) \leadsto
\mathcal{N} \left(0,  \nabla h(G) \Gamma (\nabla h(G))^\top \right).
\end{align*}
\end{theorem}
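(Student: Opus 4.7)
The plan is to reduce the asymptotics of $\hat S = h(\hat G)$ to a standard i.i.d.\ CLT for the infeasible average $\bar G := N^{-1} \sum_i g(\theta_i)$ combined with the delta method, by showing that the estimation noise $\hat G - \bar G$ is $o_p(1/\sqrt{N})$ precisely under $N/T^2 \to 0$. The sequence $\{g(\theta_i)\}_{i=1}^N$ is i.i.d.\ across $i$ by Assumption \ref{as-basic}, and has a well-defined mean and covariance $\Gamma$ by Assumption \ref{as-multi-smooth}, so the LLN and CLT for $\bar G$ are classical.

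\textbf{Consistency.} By Assumption \ref{as-h-c} and the continuous mapping theorem it suffices to show $\hat G \stackrel{p}{\to} G$. Write $\hat G - G = (\hat G - \bar G) + (\bar G - G)$; the second piece vanishes by the i.i.d.\ LLN. For the first, a componentwise mean-value expansion of $g$ yields $g_p(\hat\theta_i) - g_p(\theta_i) = \nabla g_p(\tilde\theta_i)^\top(\hat\theta_i - \theta_i)$. Cauchy--Schwarz, the moment bound on $\nabla g_p(\theta_i)$ from Assumption \ref{as-multi-smooth}, and $E\|\hat\theta_i - \theta_i\|^2 = O(1/T)$ (which follows from Lemma \ref{lem-moment-w} for $\hat\mu_i$ and from standard mixing-based MSE bounds for $\hat\gamma_{k,i}$ under Assumptions \ref{as-mixing-c}--\ref{as-w-moment-c}) then give $\hat G - \bar G = O_p(T^{-1/2}) = o_p(1)$.

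\textbf{Asymptotic normality.} Take a second-order Taylor expansion of each $g_p$ around $\theta_i$, using the uniformly bounded Hessian from Assumption \ref{as-multi-smooth}, and scale by $\sqrt{N}$ to obtain $\sqrt{N}(\hat G - \bar G) = L_N + R_N$ where $L_N := N^{-1/2}\sum_i \nabla g(\theta_i)^\top (\hat\theta_i - \theta_i)$ and $R_N$ is the quadratic remainder. The bounded Hessian and $E\|\hat\theta_i - \theta_i\|^2 = O(1/T)$ yield $\|R_N\| \leq C N^{-1/2} \sum_i \|\hat\theta_i - \theta_i\|^2 = O_p(\sqrt{N}/T) = o_p(1)$ under $N/T^2 \to 0$. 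For $L_N$, independence across $i$ combined with the conditional moment bound give $\mathrm{Var}(L_N) = O(1/T) = o(1)$. Its mean equals $\sqrt{N}\cdot E[\nabla g(\theta_i)^\top (\hat\theta_i - \theta_i)]$: the $\hat\mu_i$ coordinate contributes zero because $E(\hat\mu_i - \mu_i \mid \alpha_i) = 0$, while the $\hat\gamma_{k,i}$ coordinates contribute an $O(1/T)$ bias from demeaning by $\bar y_i$, so the mean is $O(\sqrt{N}/T) = o(1)$. By Chebyshev, $L_N = o_p(1)$, hence $\sqrt{N}(\hat G - G) \leadsto \mathcal{N}(0,\Gamma)$ by Slutsky, and the delta method under Assumption \ref{as-h-cd} yields $\sqrt{N}(\hat S - S) \leadsto \mathcal{N}(0,\nabla h(G)\Gamma\nabla h(G)^\top)$.

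\textbf{Main obstacle.} The binding constraint is the $O(1/T)$ incidental-parameter bias inside $L_N$ coming from demeaning $\hat\gamma_{k,i}$ by $\bar y_i$; it is precisely this term that forces the condition $N/T^2 \to 0$, and any weaker relative rate would leave a non-negligible asymptotic bias. Verifying that this bias is uniformly $O(1/T)$ requires the expansion $E(\hat\gamma_{k,i} - \gamma_{k,i}\mid \alpha_i) = -2(T-k)^{-1}\sum_t E(w_{it}\bar w_i\mid \alpha_i) + E(\bar w_i^2\mid \alpha_i)$ together with mixing-based covariance bounds, which is exactly where the higher moment and mixing requirements in the case $\theta_i$ contains autocovariances (the $r^* = 8$ branch) enter the statement.
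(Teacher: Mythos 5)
Your proposal is correct and follows essentially the same route as the paper: a second-order Taylor expansion of $g$ around $\theta_i$, isolation of the mean-zero linear term (variance $O(1/T)$), the incidental-parameter bias from demeaning in $\hat\gamma_{k,i}$ and the nonlinearity (Hessian) remainder, both of order $O_p(\sqrt{N}/T)$ and killed by $N/T^2 \to 0$, followed by the i.i.d.\ CLT for $N^{-1}\sum_i g(\theta_i)$ and the delta method for $h$. The only cosmetic difference is that you fold the mean-zero linear part and the incidental-parameter bias into a single term $L_N$ handled by a mean--variance (Chebyshev) argument, whereas the paper displays them as separate terms in its expansion.
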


The estimator $\hat S$ is consistent when both $N$ and $T$ tend to infinity and is asymptotically normal with mean zero when $N/T^2 \to 0$.
Importantly, in contrast to the discussion in Section \ref{sec-asymptotics}, the distribution of $\theta_i$ need not be continuous and can be discrete as long as it is not degenerate (homogeneous).
The remarkable result is that the asymptotically unbiased estimation holds under $N/T^2 \to 0$.
This condition is weaker than that for $\mathbb{P}_N$, which is $N^{3 + \epsilon}/ T^{4} \to 0$ for some $0 < \epsilon < 1/3$.
This result comes from the smoothness of $g$ and the fact that $\hat \theta_i$ is first-order unbiased for $\theta_i$.
\citet{FernandezValLee13} also observe similar asymptotic results for estimators of smooth functions of heterogeneous quantities in a different context.

To obtain a better understanding of the results in the theorem, we first consider the case in which $\theta_i = \mu_i$ such that $l=1$, $h$ is an identity function such that $n=1$, and $g$ is a scalar function such that $m=1$. 
Denote our parameter of interest as $G^{\mu} \coloneqq E(g(\mu_i))$ and let $\hat G^{\hat \mu} \coloneqq N^{-1}\sum_{i=1}^N g(\hat \mu_i)$.
By Taylor's theorem and $\hat \mu_i = \mu_i + \bar w_i$, we observe the following expansion:
\begin{align}
	\sqrt{N}\left( \hat G^{\hat \mu} - G^{\mu} \right)
	= \frac{1}{\sqrt{N}} \sum_{i=1}^N \Big( g(\mu_i) - E \big( g(\mu_i) \big) \Big)
	+ \frac{1}{\sqrt{N}} \sum_{i=1}^N \bar w_{i} g'(\mu_i)
	+ \frac{1}{2\sqrt{N}} \sum_{i=1}^N (\bar w_i)^2 g''(\tilde{\mu}_i),\label{eq-mu-expansion}
\end{align}
where $\tilde{\mu}_i$ is between $\mu_i$ and $\hat \mu_i$. 
The second term in \eqref{eq-mu-expansion} has a mean of zero and is of order $O_{p}(1/\sqrt{T})$.
The fact that it has a mean of zero is the key reason that a milder condition, $N/T^2 \to 0$, is sufficient for the asymptotically unbiased estimation of $G^{\mu}$.
The third term corresponds to the bias caused by the nonlinearity of $g$. 
When $g$ is linear, this term does not appear and the parameter can be estimated without any restriction on the relative magnitudes of $N$ and $T$. 
The nonlinearity bias is of order $O_{p}(\sqrt{N}/T)$.
We use the condition $N/T^2 \to 0$ to eliminate the effect of this bias.

When our parameter of interest involves $\gamma_{k,i}$ for some $k$, we encounter an additional source of bias. 
Let us consider the case in which $\theta_i = \gamma_{k,i}$ for some $k$ such that $l=1$, $h$ is an identity function such that $n=1$, and $g$ is a scalar function such that $m=1$. 
We denote our parameter of interest as $G^{\gamma_{k}} \coloneqq E(g(\gamma_{k,i}))$ and let $\hat G^{\hat \gamma_k} \coloneqq N^{-1}\sum_{i=1}^N g(\hat \gamma_{k,i})$.
We can expand $\hat \gamma_{k,i}$ as follows:
\begin{align*}
	\hat \gamma_{k,i} = \gamma_{k, i} + \frac{1}{T-k} \sum_{t=k+1}^T (w_{it} w_{i,t-k} - \gamma_{k,i})  - ( \bar w_i)^2 + o_{p}\left(\frac{1}{T}\right).
\end{align*}
Note that the second term has a mean of zero, although it is of order $O_p(1/\sqrt{T})$.
The third term $(\bar w_i)^2$ is the estimation error in $\bar y_i$ ($=\hat \mu_i$) and is of order $O_p(1/T)$, and causes the incidental parameter bias \citep{NeymanScott48,Nickell1981}.
By Taylor's theorem and the expansion of $\hat \gamma_{k, i}$, we have
\begin{align}
	& \sqrt{N}(\hat G^{\hat \gamma_k} - G^{\gamma_k} ) \nonumber \\
	=& \ \frac{1}{\sqrt{N}} \sum_{i=1}^N \Big( g(\gamma_{k,i}) - E \big( g(\gamma_{k,i}) \big) \Big) \label{gex-1}\\
	& \ + \frac{1}{\sqrt{N}} \sum_{i=1}^N \left(  \frac{1}{T-k} \sum_{t=k+1}^T w_{it} w_{i,t-k} - \gamma_{k,i}\right) g'(\gamma_{k,i}) \label{gex-2}\\
	& \ - \frac{1}{\sqrt{N}} \sum_{i=1}^N (\bar w_{i})^2 g'(\gamma_{k,i}) +
	\frac{1}{2\sqrt{N}} \sum_{i=1}^N  (\hat \gamma_{k,i} - \gamma_{k,i})^2 g^{\prime\prime}(\tilde \gamma_{k,i})
	+ o_{p}\left( \frac{\sqrt{N}}{T}\right) \label{gex-3}, 
\end{align}
where $\tilde \gamma_{k,i}$ is between $\hat \gamma_{k,i}$ and $\gamma_{k,i}$.
In contrast to $\hat G^{\hat \mu}$, this $\hat G^{\hat \gamma_k}$ has an incidental parameter bias corresponding to the first term in \eqref{gex-3}.
This bias is of order $O_{p}(\sqrt{N}/T)$ and does not appear in the expansion of $\hat G^{\hat \mu}$. 
This term makes the condition $N/T^2\to 0$ necessary, even when $g$ is linear.
The other terms are similar to those in the expansion of $\hat G^{\hat \mu}$.
The term on the right-hand side of \eqref{gex-1} yields the asymptotic normality of $\hat G^{\hat \gamma_k}$. 
The term in \eqref{gex-2} has a mean of zero and is of order $O_{p}(1/\sqrt{T})$.
The second term in \eqref{gex-3} is the nonlinearity bias term that also appears in $\hat G^{\hat \mu}$, which is also of order $O_{p}(\sqrt{N}/T)$.

\begin{remark}\label{remark:kernel}
	The analysis here is not applicable to kernel-smoothing estimation, and asymptotic analyses for kernel-smoothing estimators require different proof techniques.
	To see this, we consider the kernel estimator for the density of $\mu_i$, say $(Nh)^{-1} \sum_{i=1}^N K((x - \hat \mu_i) / h)$, where $K$ is a kernel function and $h \to 0$ is bandwidth.
	The summand $K((x - \cdot) / h)$ depends on the bandwidth $h$, which shrinks to zero as the sample size increases, so that the shape of the summand changes depending on the sample size, unlike the summand $g(\cdot)$ here.
	As a result, the kernel estimation requires much more careful investigations for nonlinearity bias terms.
	\citet{OkuiYanagi2018} formally demonstrate this issue for the kernel density and CDF estimation and find that their relative magnitude conditions of $N$ and $T$ differ from the condition $N / T^2 \to 0$ here and vary in the number of nonlinearity bias terms that can be evaluated.
\end{remark}

\subsection{Split-panel jackknife bias correction}

We provide a theoretical justification for the HPJ bias-corrected estimator in \eqref{eq-G-HPJ}, which we base on the bias-correction method proposed by \citet{DhaeneJochmans15}.
We make the following additional assumptions to study the HPJ estimator of $S$.

\begin{assumption} \label{as-g-moment-2}
	The function $g=(g_1,g_2,\dots,g_m):\mathcal{O} \to \mathbb{R}^m$ is thrice differentiable.
	The covariance matrix $\Gamma = E[(g(\theta_i) - E(g(\theta_i)))(g(\theta_i) - E(g(\theta_i)))^\top]$ exists and is nonsingular.
	For any $p=1,2,\dots,m$, the function $g_p$ satisfies $E[((\partial/ \partial z_{j}) g_p(z)|_{z=\theta_i})^4] < \infty$ for any $j=1,2,\dots,m$, and $E[((\partial^2 / \partial z_{j_1}  \partial z_{j_2})g_p(z)|_{z=\theta_i})^4]<\infty$ for any $j_1, j_2 =1,2,\dots,m$.
	All third-order derivatives of $g$ are bounded.
\end{assumption}

Assumption \ref{as-g-moment-2} requires that $g$ is thrice differentiable, contrary to Assumption \ref{as-multi-smooth} and imposes stronger moment conditions.
We require this condition to conduct a higher-order expansion of $\hat S$.

The following theorem shows the asymptotic normality of the HPJ estimator.

\begin{theorem} \label{thm-hpj}
	Let $r^* = 8$ if $\theta_i = \mu_i$ such that $S = h( E(g(\mu_i)))$ for some $h$ and $g$, and $r^*=16$ if $\theta_i$ contains $\gamma_{k,i}$ for some $k$. 
	Suppose that Assumptions \ref{as-basic}, \ref{as-mixing-c}, \ref{as-w-moment-c}, \ref{as-h-cd}, and \ref{as-g-moment-2} are satisfied for $r_m = 8$ and $r_d =r^*$. 
	When $N,T\to \infty$ with $N/T^2 \to \nu$ for some $\nu \in [0,\infty)$, it holds that
	\begin{align*}
	\sqrt{N} (\hat S^H - S) \leadsto \mathcal{N} \left(0,  \nabla h(G) \Gamma (\nabla h(G))^\top \right).
	\end{align*}
\end{theorem}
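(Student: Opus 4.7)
The plan is to carry the expansion used in the proof of Theorem~\ref{thm-h} one order further, to expose the leading $O(1/T)$ bias of $\hat S$ explicitly, and then to exploit the fact that this bias scales inversely with the length of the time series so that the combination $\hat S^H = 2\hat S - \bar S$ eliminates it exactly, in the spirit of \citet{DhaeneJochmans15}. Under $N/T^2 \to \nu \in [0,\infty)$ the bias term $\sqrt{N}/T$ is $O(1)$ in general, so it must be removed in order to recover a mean-zero Gaussian limit.

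Concretely, I would first use Assumption~\ref{as-h-cd} and the delta method to write $\hat S - S = \nabla h(G)(\hat G - G) + O_p(\lVert \hat G - G\rVert^2)$; under $N/T^2 \to \nu$ we have $\lVert \hat G - G\rVert = O_p(N^{-1/2})$ so the quadratic remainder is $o_p(N^{-1/2})$. Then a third-order Taylor expansion of $g$ around $\theta_i$, valid by Assumption~\ref{as-g-moment-2} (thrice differentiable $g$ with bounded third derivatives), decomposes $\hat G - G$ into the heterogeneity term $N^{-1}\sum_i(g(\theta_i) - Eg(\theta_i))$, a linear term in $\hat\theta_i - \theta_i$, a quadratic term in $\hat\theta_i - \theta_i$, and a cubic remainder of order $O_p(T^{-3/2})$. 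Using the conditional moment calculations underlying Theorem~\ref{thm-h}---in particular $E[\bar w_i^2\mid\theta_i] = O(1/T)$ and the analogous bounds for $\hat\gamma_{k,i}-\gamma_{k,i}$, which require the strengthened $r_m = 8$ and $r_d = r^*$---the linear and quadratic terms split into deterministic parts whose sum is a bias $B/T + o(1/T)$ (where $B$ collects the incidental-parameter piece generated by $(\bar w_i)^2$ and the nonlinearity piece $\tfrac12 E[(\hat\theta_i - \theta_i)^\top \nabla^2 g(\theta_i)(\hat\theta_i - \theta_i)]$), plus centered fluctuations of order $O_p(T^{-1/2})$. This yields
\[
\sqrt{N}(\hat S - S) = \nabla h(G) Z_N + \frac{\sqrt{N}}{T}\nabla h(G) B + o_p(1), \qquad Z_N \coloneqq \frac{1}{\sqrt{N}}\sum_{i=1}^N \bigl(g(\theta_i) - Eg(\theta_i)\bigr).
\]

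Applying the identical expansion to the half-panel estimators $\hat S^{(1)}$ and $\hat S^{(2)}$ produces the \emph{same} $Z_N$, because this term depends only on the cross-section and not on the time series, but with $T$ replaced by $T/2$ in the bias. Hence $\sqrt{N}(\bar S - S) = \nabla h(G) Z_N + (2\sqrt{N}/T)\nabla h(G) B + o_p(1)$, and subtracting gives
\[
\sqrt{N}(\hat S^H - S) = 2\sqrt{N}(\hat S - S) - \sqrt{N}(\bar S - S) = \nabla h(G) Z_N + o_p(1).
\]
The stated normal distribution then follows from the multivariate CLT for the i.i.d.\ summands of $Z_N$ with covariance $\Gamma$.

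The main obstacle will be rigorously controlling the centered $O_p(T^{-1/2})$ fluctuations, which are distinct random variables for the full panel and for each subpanel, together with the cubic Taylor remainder, so that their combined contribution to $\sqrt{N}(\hat S^H - S)$ is genuinely $o_p(1)$. This requires the fourth moments on the gradient and Hessian of $g$ from Assumption~\ref{as-g-moment-2}, combined with the strengthened mixing and moment hypotheses ($r_m = 8$, $r_d = r^*$), to bound via the usual covariance/Rosenthal inequalities for $\alpha$-mixing sequences the mixed products such as $\bar w_i \cdot w_{it}w_{i,t-k}$ and the cubic moment $\lVert\hat\theta_i - \theta_i\rVert^3$. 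A secondary but essential point is that strict stationarity implies $B$ is the same deterministic vector for both subpanels, so that the jackknife cancellation of the leading bias is exact rather than only approximate.
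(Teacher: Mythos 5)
Your proposal is correct and follows essentially the same route as the paper: the higher-order Taylor expansion of $g$ permitted by Assumption~\ref{as-g-moment-2}, the identification of a $T$-independent bias constant $B$ (combining the incidental-parameter term from $(\bar w_i)^2$ and the nonlinearity term) so that the half-panel bias is $2B/T + o(1/T)$ by stationarity, exact cancellation in $2\hat S - \bar S$, and the CLT applied to the common cross-sectional term $Z_N$. This is precisely the mechanism the paper invokes via \citet{DhaeneJochmans15}, with the remainder control resting on the same moment lemmas under the strengthened $r_m$ and $r_d$.
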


The HPJ estimator is asymptotically unbiased, even when $N/T^2 \to 0$ is not satisfied.
Moreover, this bias correction does not inflate the asymptotic variance.
The reason why the HPJ works is the same as \citet{DhaeneJochmans15}, and the detail can be found in the proof of Theorem \ref{thm-hpj}.

\subsection{Cross-sectional bootstrap}

In this section, we present the justification for the use of the cross-sectional bootstrap introduced in Section \ref{sec-procedures}.
The first theorem concerns $\hat S$ and the second theorem discusses the case where $\hat S^H$.
We also provide a theorem for distribution function estimators.

We require several additional assumptions. 
The following assumption is required for Lyapunov's conditions for $\hat G^*$, which is the estimator of $G$ obtained with the bootstrap sample.
Note that $\hat S^*= h(\hat G^*) = h(N^{-1} \sum_{i=1}^N g(\hat \theta_i^*))$ where $\hat \theta_i^*$ is the estimator of $\theta_i$ based on the bootstrap sample.

\begin{assumption}\label{as-boot}
	The function $g=(g_1,g_2,\dots,g_m):\mathcal{O} \to \mathbb{R}^m$ is twice-continuously differentiable.
	The covariance matrix of $g(\theta_i)$, $\Gamma$, exists and is nonsingular. 
	The elements of the Hessian matrices of $g_p$ for $p=1,2,\dots, m$, $g_{p_1} (\cdot) g_{p_2} (\cdot)$ for $p_1, p_2 = 1,2,\dots, m$, and $(g(\cdot) ^\top g(\cdot))$ are bounded.
	For any $p=1,2,\dots,m$, the function $g_p$ satisfies $E[( (\partial/\partial z_{j})g_p(z)|_{z=\theta_i})^4]<\infty$ for any $j=1,2,\dots,l$.
	For any $p_1, p_2 = 1,2,\dots, m$, $E[ ( (\partial / \partial z_j) g_{p_1}(z) |_{z=\theta_i} g_{p_2} (\theta_i))^2] < \infty$.
	For any $j=1,2,\dots, l$, $E[ ( g(\theta_i)^\top g(\theta_i) (\partial / \partial z_j) g_{p_1}(z) |_{z=\theta_i} g_{p_2} (\theta_i))^2] < \infty$ is satisfied.
\end{assumption}

The following theorem states that the bootstrap distribution converges to the asymptotic distribution of $\hat S$, but fails to capture the bias term.
Let $P^*$ be the bootstrap distribution (that is identical here to the empirical distribution of $\hat \theta_i$, or as below, $\hat \xi_i$).

\begin{theorem} \label{thm-bootstrap}
	Let $r^* = 4$ if $\theta_i = \mu_i$ such that $S =h( E(g(\mu_i)))$ for some $h$ and $g$, and $r^*=8$ if $\theta_i$ contains $\gamma_{k,i}$ for some $k$.
	Suppose that Assumptions \ref{as-basic}, \ref{as-mixing-c}, \ref{as-w-moment-c}, \ref{as-h-cd}, and \ref{as-boot} hold for $r_m = 4$ and $r_d =r^*$.
	When $N, T \to \infty$, we have 
	\begin{align*}
		\sup_{x \in \mathbb{R}} \left| P^* \left(  \sqrt{N} (\hat S^* - \hat S)  \le x \right) - \Pr \left( \mathcal{N} \left(0,  \nabla h(G) \Gamma (\nabla h(G))^\top \right) \le x \right) \right| \stackrel{p}{\longrightarrow} 0.
	\end{align*}
\end{theorem}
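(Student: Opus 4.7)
The plan is to establish a bootstrap central limit theorem for $\hat G^*$ conditional on the data, transfer the result to $\hat S^* = h(\hat G^*)$ via a bootstrap delta method, and then upgrade pointwise convergence to the stated uniform Kolmogorov convergence using Polya's theorem (applicable because the Gaussian limit is absolutely continuous).

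First, I would condition on the original sample and observe that $\hat\theta_1^*,\dots,\hat\theta_N^*$ are i.i.d.\ draws from the empirical distribution of $\{\hat\theta_i\}_{i=1}^N$, so the centered summands $Z_i^* \coloneqq g(\hat\theta_i^*) - \hat G$ are conditionally i.i.d.\ with mean zero and covariance $\hat\Gamma \coloneqq N^{-1}\sum_{i=1}^N (g(\hat\theta_i) - \hat G)(g(\hat\theta_i) - \hat G)^\top$. A triangular-array Lindeberg–Feller argument for $N^{-1/2}\sum_i Z_i^*$ then requires (i) $\hat\Gamma \stackrel{p}{\longrightarrow} \Gamma$ and (ii) a Lyapunov-type condition, for which it suffices to show $N^{-1}\sum_i \|g(\hat\theta_i) - \hat G\|^{2+\delta} = O_p(1)$ for some $\delta>0$, so that $N^{-\delta/2}$ times this average is $o_p(1)$. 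Both ingredients reduce to consistency of sample moments of $g(\hat\theta_i)$ for the corresponding moments of $g(\theta_i)$: I would obtain these by invoking the consistency part of Theorem \ref{thm-h} twice, once with integrand $g(\cdot)$ and once with $g(\cdot)g(\cdot)^\top$, after checking that Assumption \ref{as-boot}'s smoothness and moment hypotheses on $g$ (bounded Hessians of $g$, $g_{p_1}g_{p_2}$, and $g^\top g$, together with the stated fourth-moment bounds) deliver the premises of Theorem \ref{thm-h} for the quadratic integrand.

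Second, with $\sqrt{N}(\hat G^* - \hat G) \leadsto \mathcal{N}(0,\Gamma)$ conditional on the data in probability, I would apply the bootstrap delta method. A mean-value expansion gives $\sqrt{N}(\hat S^* - \hat S) = \nabla h(\tilde G^*)\,\sqrt{N}(\hat G^* - \hat G)$ where $\tilde G^*$ lies on the segment between $\hat G$ and $\hat G^*$; combining the bootstrap variance bound $\hat G^* - \hat G = O_{p^*}(N^{-1/2})$ with $\hat G \stackrel{p}{\longrightarrow} G$ (Theorem \ref{thm-h}) and continuity of $\nabla h$ at $G$ (Assumption \ref{as-h-cd}) yields $\nabla h(\tilde G^*) \stackrel{p}{\longrightarrow} \nabla h(G)$, and a bootstrap Slutsky argument then delivers $\sqrt{N}(\hat S^* - \hat S) \leadsto \mathcal{N}(0, \nabla h(G)\Gamma \nabla h(G)^\top)$ conditional on the data, in probability. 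Polya's theorem converts this pointwise convergence of bootstrap CDFs into the uniform convergence over $x \in \mathbb{R}$ stated in the theorem, because the limit CDF is continuous.

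The principal obstacle is step (i), the variance convergence $\hat\Gamma \stackrel{p}{\longrightarrow} \Gamma$. A straightforward law of large numbers does not apply because each $\hat\theta_i$ is a noisy estimator of $\theta_i$ whose bias only vanishes as $T\to\infty$, so both the randomness across $i$ and the estimation error within each $i$ must be controlled simultaneously; this is exactly where the double-asymptotic consistency argument from Theorem \ref{thm-h} is needed, and it must be verified that it passes through the quadratic integrand $g g^\top$. Once this is handled, the remaining steps are routine bootstrap-CLT plus delta-method manipulations and do not rely on any relative-rate condition between $N$ and $T$ beyond $N,T \to \infty$, which is why the bootstrap captures only the asymptotic law of $\hat S$ around its own mean and not the bias of $\hat S$ relative to $S$.
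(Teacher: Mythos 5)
Your proposal is correct and follows essentially the same route the paper takes: a conditional Lyapunov/Lindeberg CLT for $\sqrt{N}(\hat G^* - \hat G)$ whose variance-consistency step is handled by running the Theorem \ref{thm-h} consistency argument through the quadratic integrands (which is exactly what the bounded-Hessian and moment conditions on $g_{p_1}g_{p_2}$ and $g^\top g$ in Assumption \ref{as-boot} are designed for), followed by a bootstrap delta method and Polya's theorem. You also correctly identify that no relative-rate condition on $N$ and $T$ is needed and why the bootstrap consequently misses the bias of $\hat S$ relative to $S$.
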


The bootstrap does not capture the bias properties of $\hat G$ shown in Section \ref{sec-ghat}.
This implies that when $T$ is small, we must be cautious about using the bootstrap to make statistical inference.
\citet{GalvaoKato14}, \citet{GoncalvesKaffo14}, and \citet{Kaffo14} also observe similar issues.

We can also show that the bootstrap can approximate the asymptotic distribution of the HPJ estimator.
The proof is analogous to the proof of Theorem \ref{thm-bootstrap}, and is thus omitted.
\begin{theorem}
	Let $r^* = 4$ if $\theta_i = \mu_i$ such that $S =h( E(g(\mu_i)))$ for some $h$ and $g$, and $r^* = 8$ if $\theta_i$ contains $\gamma_{k,i}$ for some $k$.
	Suppose that Assumptions \ref{as-basic}, \ref{as-mixing-c}, \ref{as-w-moment-c}, \ref{as-h-cd}, and \ref{as-boot} are satisfied for $r_m = 4$ and $r_d =r^*$.
	When $N,T \to \infty$, we have 
	\begin{align*}
		\sup_{x \in \mathbb{R}} \left| P^* \left(  \sqrt{N} (\hat S^{H*} - \hat S^H) \le x \right) - \Pr \left( \mathcal{N} \left(0,  \nabla h(G) \Gamma (\nabla h(G))^\top \right) \le x \right) \right| \stackrel{p}{\longrightarrow} 0.
	\end{align*}
\end{theorem}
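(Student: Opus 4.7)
The plan is to reduce the bootstrap result for the HPJ estimator to the bootstrap result already proved for $\hat S$. Using the HPJ formula,
\begin{align*}
	\sqrt{N}(\hat S^{H*} - \hat S^{H})
	= 2\sqrt{N}(\hat S^{*} - \hat S) - \tfrac{1}{2}\sum_{k=1}^{2}\sqrt{N}(\hat S^{*(k)} - \hat S^{(k)}).
\end{align*}
The strategy is to show that each of the three bootstrap-centered terms on the right converges, in bootstrap probability, to the same $\mathcal{N}(0, \nabla h(G)\Gamma \nabla h(G)^\top)$ law, so that the coefficients $2 - \tfrac{1}{2} - \tfrac{1}{2} = 1$ give the conclusion via a direct appeal to the preceding theorem.

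First, I would apply a bootstrap delta-method argument, exactly as in the proof of the previous theorem: since $\hat G, \hat G^{(1)}, \hat G^{(2)} \stackrel{p}{\to} G$ and $h$ is continuously differentiable in a neighborhood of $G$ (Assumption \ref{as-h-cd}), the sample bootstrap Jacobians $\nabla h(\hat G^{(k)})$ and $\nabla h(\hat G)$ all converge to $\nabla h(G)$, and the problem reduces to establishing the analogue at the $\hat G$ level. That is, it suffices to prove
\begin{align*}
	\sqrt{N}(\hat G^{H*} - \hat G^{H}) = \sqrt{N}(\hat G^{*} - \hat G) + o_{p^{*}}(1),
\end{align*}
in probability, where $\hat G^{H} \coloneqq 2\hat G - \tfrac{1}{2}(\hat G^{(1)} + \hat G^{(2)})$ and $\hat G^{H*}$ is defined analogously. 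Given this, the previous theorem (applied to $\hat G^{*} - \hat G$) delivers the asymptotic normality, and the delta method transfers it to $\hat S^{H*} - \hat S^{H}$.

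To prove the displayed reduction, set $D_{i} \coloneqq g(\hat \theta_{i}^{(k)}) - g(\hat \theta_{i})$ and $D_{i}^{*}$ the analogous quantity at the $i$th bootstrap draw. Because the bootstrap resamples the full triple $(\hat \theta_{i}, \hat \theta_{i}^{(1)}, \hat \theta_{i}^{(2)})$ jointly, the bootstrap versions of $\hat G^{*(k)} - \hat G^{(k)}$ and $\hat G^{*} - \hat G$ differ by $N^{-1/2}\sum_{i}(D_{i}^{*} - \bar D_{N})$ with $\bar D_{N} = N^{-1}\sum_{i} D_{i}$. The bootstrap variance of this sum equals $N^{-1}\sum_{i} D_{i}^{2} - \bar D_{N}^{2}$. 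By Assumption \ref{as-boot}, a first-order Taylor expansion gives $|D_{i}| \le C\,\|\hat \theta_{i}^{(k)} - \hat \theta_{i}\|\cdot(1 + \|\nabla g(\tilde \theta_{i})\|)$, and standard second-moment bounds for sample means and sample autocovariances under Assumptions \ref{as-mixing-c} and \ref{as-w-moment-c} yield $E\|\hat \theta_{i}^{(k)} - \hat \theta_{i}\|^{2} = O(1/T)$. Hence $E[N^{-1}\sum_{i} D_{i}^{2}] = O(1/T) \to 0$, and the bootstrap variance tends to zero in probability, so $\sqrt{N}(\hat G^{*(k)} - \hat G^{(k)}) - \sqrt{N}(\hat G^{*} - \hat G) = o_{p^{*}}(1)$ for $k=1,2$, which closes the argument.

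The main obstacle is the moment bound $E\|\hat \theta_{i}^{(k)} - \hat \theta_{i}\|^{2} = O(1/T)$: although this is standard for the pure-mean case $\theta_{i} = \mu_{i}$, for $\theta_{i}$ containing $\gamma_{k,i}$ one must combine the mixing-based bounds on fourth moments of $w_{it}$ (requiring the choice $r_{d}^{*} = 8$ in the hypotheses) with the algebraic identity relating $\hat \gamma_{k,i}^{(k)} - \hat \gamma_{k,i}$ to cross-product sums over the two half-panels. Once this bound is in hand, everything else in the proof is routine linearization and application of the preceding theorem.
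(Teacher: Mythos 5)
Your reduction is the right argument and is essentially what the paper intends: the paper omits this proof, stating only that it is ``analogous to the proof of Theorem \ref{thm-bootstrap},'' and your proof makes that analogy explicit by showing the half-panel bootstrap fluctuations $\sqrt{N}(\hat G^{*(k)}-\hat G^{(k)})$ are asymptotically equivalent (in bootstrap probability) to the full-panel one, so that the HPJ weights $2-\tfrac12-\tfrac12=1$ hand the problem back to Theorem \ref{thm-bootstrap}. Two points need tightening. First, your opening sentence --- that the conclusion follows because each of the three terms converges to the same normal law and the coefficients sum to one --- is not by itself an argument, since the three terms are (perfectly) correlated; what actually carries the proof is your subsequent, correct step that $\sqrt{N}(\hat G^{*(k)}-\hat G^{(k)})-\sqrt{N}(\hat G^{*}-\hat G)=N^{-1/2}\sum_i(D_i^*-\bar D_N)$ has conditional mean zero and conditional variance $N^{-1}\sum_i D_i^2-\bar D_N^2\stackrel{p}{\to}0$. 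Second, the bound $E\|D_i\|^2=O(1/T)$ is not free under the stated hypotheses when $\theta_i$ contains $\gamma_{k,i}$: bounding $|D_i|$ by $\|\nabla g(\tilde\theta_i)\|\,\|\hat\theta_i^{(k)}-\hat\theta_i\|$ and applying Cauchy--Schwarz requires fourth moments of $\hat\gamma_{k,i}-\gamma_{k,i}$, which Lemma \ref{lem-gamma-moment} delivers only under $r_m=8$, whereas the theorem assumes $r_m=4$; note that only $N^{-1}\sum_i\|D_i\|^2=o_p(1)$ is actually needed, and obtaining it requires either expanding $g$ around $\theta_i$ and pairing the gradient-moment and bounded-Hessian conditions of Assumption \ref{as-boot} with a conditional second-moment bound of the form $E[\|\hat\theta_i^{(k)}-\hat\theta_i\|^2\mid\alpha_i]\le M/T$ (as in Assumptions \ref{as-mu-con}.d--\ref{as-rho-con}.d, which are not invoked here), or strengthening the mixing/moment indices --- a loose end worth stating explicitly rather than passing through Cauchy--Schwarz.
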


The cross-sectional bootstrap can approximate the asymptotic distribution of the HPJ estimator correctly under the condition that $N/T^2 $ does not diverge.
Because the HPJ estimator has a smaller bias,
the bootstrap approximation is more appropriate for the HPJ estimator.

Lastly, we show the pointwise validity of the bootstrap for the estimator of the distribution function evaluated at some point $a \in \mathbb{R}$.\footnote{While a uniform validity of the cross-sectional bootstrap for the distribution estimator would be desirable, 
	this investigation is challenging because it requires new empirical process techniques. 
	For now, we leave it as an interesting future research topic.
}
Let $\xi_i$ be one of $\mu_i$, $\gamma_{k,i}$, and $\rho_{k,i}$ with the distribution function $F_0 = F_0^\xi$, and $\hat \xi_i$ be the corresponding estimator.
The pointwise estimator of $F_0(a)$ at $a \in \mathbb{R}$ is $\mathbb{F}_N (a) = N^{-1} \sum_{i=1}^N \mathbf{1}(\hat \xi_i \le a)$.
The bootstrap estimator is $\mathbb{F}_N^* (a) \coloneqq N^{-1} \sum_{i=1}^N \mathbf{1}(\hat \xi_i^* \le a)$.

\begin{theorem} \label{thm-dist-bootstrap}
	Suppose that the assumptions in Theorem \ref{thm-gc} hold.
	When $N,T \to \infty$, it holds that
	\begin{align*}
		\sup_{x \in \mathbb{R}} \left| P^* \left(\sqrt{N}\left(\mathbb{F}_N^{*} (a) - \mathbb{F}_N(a)\right) \le x \right) - \Pr\Big( \mathcal{N} \Big(0, F(a)(1 - F(a)) \Big) \le x \Big) \right| \stackrel{p}{\longrightarrow} 0.
	\end{align*}
\end{theorem}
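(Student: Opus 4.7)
The plan is to exploit the fact that, conditional on the observed panel, the bootstrap indicators $\{\mathbf{1}(\hat\xi_i^{*}\le a)\}_{i=1}^{N}$ are i.i.d.\ Bernoulli random variables with success probability $\mathbb{F}_N(a)$, because the Efron bootstrap draws $\hat\xi_i^{*}$ i.i.d.\ from the empirical distribution of $\{\hat\xi_i\}_{i=1}^{N}$. Therefore $\sqrt{N}(\mathbb{F}_N^{*}(a)-\mathbb{F}_N(a))$ is a normalized sum of $N$ i.i.d.\ centered Bernoulli variables with conditional variance $\sigma_N^2 \coloneqq \mathbb{F}_N(a)(1-\mathbb{F}_N(a))$. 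First I would apply the Berry--Esseen inequality conditional on the data. Using the elementary bound that the third absolute central moment of a Bernoulli$(p)$ variable is at most $p(1-p)$, this yields
\begin{equation*}
\sup_{x\in\mathbb{R}}\left| P^{*}\!\left(\sqrt{N}(\mathbb{F}_N^{*}(a)-\mathbb{F}_N(a))\le x\right) - \Phi_{\sigma_N^2}(x)\right| \le \frac{C}{\sqrt{N\,\sigma_N^2}},
\end{equation*}
where $\Phi_{\sigma^2}$ denotes the $\mathcal{N}(0,\sigma^2)$ CDF and $C$ is an absolute constant.

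Next I would invoke Theorem \ref{thm-gc} to conclude that $\mathbb{F}_N(a)\stackrel{as}{\longrightarrow}F_0(a)$. Whenever $F_0(a)\in(0,1)$ we have $\sigma_N^2\to F_0(a)(1-F_0(a))>0$ almost surely, so the Berry--Esseen bound tends to zero almost surely. The boundary case $F_0(a)\in\{0,1\}$ can be handled separately: there $\mathbb{F}_N(a)\to F_0(a)\in\{0,1\}$, so with probability approaching one every bootstrap draw equals the boundary value and $\sqrt{N}(\mathbb{F}_N^{*}(a)-\mathbb{F}_N(a))$ is degenerate at zero, matching the degenerate normal limit (with the sup still collapsing to zero because both CDFs equal $\mathbf{1}(x\ge 0)$ in probability).

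Finally I would close the argument by combining the above with the elementary observation that $\sup_{x}|\Phi_{\sigma_N^2}(x)-\Phi_{F_0(a)(1-F_0(a))}(x)|\stackrel{p}{\longrightarrow}0$. This follows from the continuity of the mean-zero Gaussian CDF in its variance parameter together with Polya's theorem, since the limiting CDF is continuous whenever $F_0(a)(1-F_0(a))>0$. A triangle inequality then gives the stated conclusion. The main obstacle, which is rather mild, is verifying that no machinery beyond the already-established Theorem \ref{thm-gc} is needed: because the bootstrap summands are scalar i.i.d.\ Bernoulli variables conditional on the data, the usual empirical process technology is not required and the estimation-error bias that complicates the unconditional CLT in Theorem \ref{thm-fclt} drops out entirely from the conditional problem. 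This is also exactly why the stronger uniform-in-$a$ version highlighted in the paper's footnote is genuinely harder and left for future work.
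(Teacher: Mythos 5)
Your argument for the central case is correct and follows what is essentially the canonical route: conditional on the data, the bootstrap indicators $\mathbf{1}(\hat\xi_i^{*}\le a)$ are i.i.d.\ Bernoulli with success probability $\mathbb{F}_N(a)$, a conditional Berry--Esseen (or Lindeberg) bound gives closeness to $\mathcal{N}(0,\sigma_N^2)$ with $\sigma_N^2=\mathbb{F}_N(a)(1-\mathbb{F}_N(a))$, Theorem \ref{thm-gc} gives $\sigma_N^2\stackrel{as}{\longrightarrow}F_0(a)(1-F_0(a))$, and Polya's theorem plus the triangle inequality closes the argument. The paper's own proof is relegated to the supplement, but its companion bootstrap result (Theorem \ref{thm-bootstrap}) is explicitly built on a conditional Lyapunov CLT, so your strategy matches the paper's; the Berry--Esseen quantification is, if anything, cleaner. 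You are also right about the key structural point: the conditional problem is centered at $\mathbb{F}_N(a)$, so the bias $F_T-F_0$ that drives the rate condition in Theorem \ref{thm-fclt} never enters, and no restriction on $N/T$ is needed.

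The one step that does not hold as written is the boundary case $F_0(a)\in\{0,1\}$. From $\mathbb{F}_N(a)\stackrel{as}{\longrightarrow}0$ you cannot conclude that $\mathbb{F}_N(a)=0$ eventually: $N\mathbb{F}_N(a)$ is Binomial with mean $NF_T(a)$, where $F_T(a)=\Pr(\hat\xi_i\le a)$, and under the assumptions of Theorem \ref{thm-gc} one only knows $F_T(a)\to F_0(a)$ with no rate, so $NF_T(a)\to\infty$ is entirely possible even when $F_0(a)=0$. In that event $N\sigma_N^2\to\infty$, the conditional law of $\sqrt{N}(\mathbb{F}_N^{*}(a)-\mathbb{F}_N(a))$ is asymptotically normal after studentization, and $P^{*}\bigl(\sqrt{N}(\mathbb{F}_N^{*}(a)-\mathbb{F}_N(a))\le 0\bigr)\to 1/2$, whereas the degenerate target CDF equals $1$ at $x=0$; the Kolmogorov distance is then bounded away from zero, so the bootstrap draws are emphatically not all equal to the boundary value. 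The conclusion as stated therefore genuinely requires $F_0(a)(1-F_0(a))>0$ (automatic at interior points of the support of the continuously distributed $\xi_i$), and your proof should simply restrict to that case rather than claim the degenerate case goes through.
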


If the rate condition $N^{3 + \epsilon} / T^4 \to 0$ is satisfied for some $\epsilon \in (0, 1/3)$, the bootstrap distribution consistently estimates the asymptotic distribution of $\mathbb{F}_N(a)$.

\section{Difference in degrees of heterogeneity}\label{sec-extensions}

We develop a two-sample KS test as an application of the convergence for the distribution estimator in Section \ref{sec-asymptotics}.
Specifically, we develop a test to examine whether the distributions of $\mu_i$, $\gamma_{k,i}$, and $\rho_{k,i}$ differ across distinct groups.
In many applications, it would be interesting to see whether distinct groups possess different heterogeneous structures.
For example, when studying the LOP deviation, we may want to know whether the distribution of LOP adjustment speed differs between goods and services.
We develop a testing procedure for such hypotheses without any parametric specification.
We consider two panel data sets for two different groups: $\{\{y_{it,(1)} \}_{t=1}^{T_{1}} \}_{i=1}^{N_1}$ and $\{\{y_{it,(2)}\}_{t=1}^{T_{2}}\}_{i=1}^{N_2}$.
We allow $T_1 \neq T_2$ and/or $N_1 \neq N_2$.
Define $y_{i,(1)} \coloneqq \{y_{it,(1)}\}_{t=1}^{T_{1}}$ and  $y_{i,(2)} \coloneqq \{y_{it,(2)}\}_{t=1}^{T_{2}}$.

We estimate the distributions of the mean, autocovariances, or autocorrelations for each group.
Let $\xi_{i,(a)} = \mu_{i,(a)}$, $\gamma_{k,i,(a)}$, or $\rho_{k,i,(a)}$ be the true quantity for group $a=1,2$.
Let $\hat \xi_{i,(a)} = \hat \mu_{i,(a)}$, $\hat \gamma_{k,i,(a)}$, or $\hat \rho_{k,i,(a)}$ be the corresponding estimator of $\xi_{i,(a)}$.
We denote the probability distribution of $\xi_{i,(a)}$ by $P_{0,(a)} = P_{0,(a)}^{\xi}$ and the empirical distribution of $\hat \xi_{i,(a)}$ by $\mathbb{P}_{N_a,(a)} = \mathbb{P}_{N_a,(a)}^{\hat \xi}$ for $a=1,2$. 

We focus on the following hypothesis to examine the difference in the degrees of heterogeneity between the two groups.
\begin{align*}
		H_{0} : P_{0,(1)} = P_{0,(2)} \;  \mbox{ v.s. } \;  H_{1} : P_{0,(1)} \neq P_{0,(2)}.
\end{align*}
Under the null hypothesis $H_0$, the distributions are identical for the two groups. 

We investigate the hypothesis using the following two-sample KS statistic based on our empirical distribution estimators.
\begin{align*}
	KS_2 & \coloneqq \sqrt{\frac{N_{1}N_{2}}{N_1 + N_2}} \left\| \mathbb{P}_{N_1,(1)} - \mathbb{P}_{N_2,(2)} \right\|_{\infty}
	= \sqrt{\frac{N_{1}N_{2}}{N_1 + N_2}} \sup_{f \in \mathcal{F}} \left| \mathbb{P}_{N_1,(1)} f - \mathbb{P}_{N_2,(2)} f \right|,
\end{align*}
where $\| \cdot \|_{\infty}$ is the uniform norm.
This measures the distance between the empirical distributions of the two groups and differs from the usual two-sample KS statistic in that it is based on the empirical distributions of the estimates.

We introduce the following assumption about the data sets.

\begin{assumption} \label{as-ks}
	Each of $\{\{y_{it,(1)} \}_{t=1}^{T_{1}} \}_{i=1}^{N_1}$ and $\{\{y_{it,(2)}\}_{t=1}^{T_{2}}\}_{i=1}^{N_2}$ satisfies Assumptions \ref{as-basic}, \ref{as-mixing-c}, \ref{as-w-moment-c}, and \ref{as-mu-con} with $r_m = 4$, $r_d = 4$ when $\hat \xi_{i,(a)} = \hat \mu_{i,(a)}$ and $\xi_{i,(a)} = \mu_{i,(a)}$; 
	Assumptions \ref{as-basic}, \ref{as-mixing-c}, \ref{as-w-moment-c}, and \ref{as-gamma-con} with $r_m = 8$, $r_d = 8$ when $\hat \xi_{i,(a)} = \hat \gamma_{k,i,(a)}$ and $\xi_{i,(a)} = \gamma_{k,i,(a)}$;
	and Assumptions \ref{as-basic}, \ref{as-mixing-c}, \ref{as-w-moment-c}, and \ref{as-rho-con} with $r_m = 8$, $r_d = 8$ when $\hat \xi_{i,(a)} = \hat \rho_{k,i,(a)}$ and $\xi_{i,(a)} = \rho_{k,i,(a)}$.
	$(y_{1,(1)}, \dots, y_{N_1,(1)})$ and $(y_{1,(2)}, \dots, y_{N_2, (2)})$ are independent.
\end{assumption}

We need the assumptions introduced in the previous sections along with the independence assumption, which implies that our test cannot be used to determine the equivalence of the distributions of two variables from the same units.
Our test is intended to compare the distributions of the same variable from different groups.
It is also important to note that this independence assumption may collapse when there are some time effects.
For example, when the time periods of the two panel data sets overlap, the panel data sets can be dependent given the presence of common time trends.

The asymptotic null distribution of $KS_2$ is derived using Theorem \ref{thm-fclt}.

\begin{theorem} \label{thm-KS2}
	Suppose that Assumption \ref{as-ks} is satisfied.
	When $N_{1},T_{1} \to \infty$ with $N_{1}^{3+\epsilon}/T_{1}^4 \to 0$ and $N_{2},T_{2} \to \infty$ with $N_{2}^{3+\epsilon}/T_{2}^4 \to 0$ for some $\epsilon \in (0, 1/3)$ and $N_1 / (N_1 + N_2) \to \lambda$ for some $\lambda \in (0,1)$, it holds that $KS_2$ converges in a distribution to $\| \mathbb{G}_{P_{0,(1)}} \|_{\infty}$ under $H_{0}$.
\end{theorem}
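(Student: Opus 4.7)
The plan is to reduce the theorem to an application of the continuous mapping theorem after establishing joint weak convergence of the two empirical processes. Under $H_0$, write $P_0 \coloneqq P_{0,(1)} = P_{0,(2)}$ and decompose
\begin{align*}
\sqrt{\tfrac{N_1 N_2}{N_1+N_2}}\bigl(\mathbb{P}_{N_1,(1)} - \mathbb{P}_{N_2,(2)}\bigr)
= \sqrt{\tfrac{N_2}{N_1+N_2}}\sqrt{N_1}\bigl(\mathbb{P}_{N_1,(1)} - P_0\bigr) - \sqrt{\tfrac{N_1}{N_1+N_2}}\sqrt{N_2}\bigl(\mathbb{P}_{N_2,(2)} - P_0\bigr).
\end{align*}
By the rate conditions $N_a^{3+\epsilon}/T_a^4 \to 0$ for $a=1,2$, Theorem \ref{thm-fclt} applies to each group and gives $\sqrt{N_a}(\mathbb{P}_{N_a,(a)} - P_0) \leadsto \mathbb{G}_{P_0}^{(a)}$ in $\ell^\infty(\mathcal{F})$, where $\mathbb{G}_{P_0}^{(a)}$ is a $P_0$-Brownian bridge indexed by $\mathcal{F}$.

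Next, I would upgrade marginal convergence to joint convergence of the pair $\bigl(\sqrt{N_1}(\mathbb{P}_{N_1,(1)} - P_0),\ \sqrt{N_2}(\mathbb{P}_{N_2,(2)} - P_0)\bigr)$ to $(\mathbb{G}_{P_0}^{(1)}, \mathbb{G}_{P_0}^{(2)})$ with the two limit processes mutually independent. This follows from the independence of the two data sets imposed in Assumption \ref{as-ks}, because the joint finite-dimensional distributions factor across the two groups, and asymptotic tightness of each marginal process (already established in the proof of Theorem \ref{thm-fclt}) implies tightness of the pair in the product space $\ell^\infty(\mathcal{F}) \times \ell^\infty(\mathcal{F})$.

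With $N_1/(N_1+N_2) \to \lambda \in (0,1)$, Slutsky's lemma combined with the continuous mapping theorem (applied to the linear combination above, viewed as a continuous map from $\ell^\infty(\mathcal{F}) \times \ell^\infty(\mathcal{F})$ to $\ell^\infty(\mathcal{F})$) yields
\begin{align*}
\sqrt{\tfrac{N_1 N_2}{N_1+N_2}}\bigl(\mathbb{P}_{N_1,(1)} - \mathbb{P}_{N_2,(2)}\bigr) \leadsto \sqrt{1-\lambda}\,\mathbb{G}_{P_0}^{(1)} - \sqrt{\lambda}\,\mathbb{G}_{P_0}^{(2)} \qquad \text{in }\ell^\infty(\mathcal{F}).
\end{align*}
Because $\mathbb{G}_{P_0}^{(1)}$ and $\mathbb{G}_{P_0}^{(2)}$ are independent Gaussian processes with identical covariance kernels and $(\sqrt{1-\lambda})^2 + (\sqrt{\lambda})^2 = 1$, the right-hand side is a centered Gaussian process with covariance equal to that of $\mathbb{G}_{P_0} = \mathbb{G}_{P_{0,(1)}}$; hence it is equal in law to $\mathbb{G}_{P_{0,(1)}}$. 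A final application of the continuous mapping theorem to the supremum norm $\|\cdot\|_\infty: \ell^\infty(\mathcal{F}) \to \mathbb{R}$ delivers $KS_2 \leadsto \|\mathbb{G}_{P_{0,(1)}}\|_\infty$.

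The main obstacle is establishing the joint convergence with independent limits while the two marginal processes each carry bias terms whose orders depend on $T_1$ and $T_2$ respectively. This is resolved by noting that the proof of Theorem \ref{thm-fclt} already controls each bias term uniformly in $\mathcal{F}$ (via the characteristic function inversion argument) under the stated rate conditions, so the decomposition into the centered empirical process $\mathbb{G}_{N_a, P_{T_a}}$ plus a vanishing deterministic drift can be done simultaneously for both groups; the independence of the two panels then makes the centered parts asymptotically independent, which is all that is needed for the joint weak convergence step.
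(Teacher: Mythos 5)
Your proposal is correct and follows essentially the same route as the paper: decompose $KS_2$ into the weighted difference of the two centered empirical processes, invoke Theorem \ref{thm-fclt} for each group, use independence of the panels to get joint convergence to independent Brownian bridges, apply the continuous mapping theorem, and observe that $\sqrt{1-\lambda}\,\mathbb{G}_{P_0}^{(1)} - \sqrt{\lambda}\,\mathbb{G}_{P_0}^{(2)}$ has the same law as $\mathbb{G}_{P_{0,(1)}}$. You in fact supply more detail than the paper on the joint-convergence and equality-in-law steps, which the paper simply asserts.
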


The asymptotic null distribution of $KS_2$ is the uniform norm of a Gaussian process.
We require the conditions $N_1^{3 + \epsilon} / T_1^{4} \to 0$ and $N_2^{3 + \epsilon} / T_2^{4} \to 0$  to use the result of Theorem \ref{thm-fclt}.
The condition $N_1/(N_1+N_2) \to \lambda$ implies that $N_1$ is not much greater or less than $N_2$ and guarantees the existence of the asymptotic null distribution.

Note that the asymptotic distribution does not depend on $P_{0,(1)}$,  and  critical values can be computed readily.
\citet{Kolmogorov1933} and \citet{Smirnov1944} (for easy reference see, e.g., Theorem 6.10 in \citealp{Shao03} or Section 2.1.5 in \citealp{Serfling02}) showed that
\begin{align} \label{KS-01}
\Pr(\| \mathbb{G}_{P_{0,(1)}} \|_{\infty} \leq a)  = 1- 2\sum_{j=1}^{\infty} (-1)^{j-1} \exp\left(-2 j^2 a^2 \right),
\end{align}
for any continuous distribution $P_{0,(1)}$, with $a > 0$.
The right-hand side of \eqref{KS-01} does not depend on $P_{0,(1)}$.
Moreover, the critical values are readily available in many statistical software packages and it is easy to implement our tests.

\begin{remark}
	When the true distributions of the estimated quantities, $\hat \xi_{i,(1)}$ and $\hat \xi_{i,(2)}$, are the same, i.e., when $P_{T_1,(1)}^{\hat \xi} = P_{T_2,(2)}^{\hat \xi}$, neither the condition $N_1^{3 + \epsilon} / T_1^4 \to 0$ nor $N_2^{3 + \epsilon} / T_2^4 \to 0$ is needed to establish Theorem \ref{thm-KS2}.  
	In particular, when $T_1 = T_2$ and the mean and dynamic structures of the two groups are completely identical under the null hypothesis, we can test the null hypothesis $H_0$ without restricting the relative order of $N_a$ and $T_a$ for $a=1,2$. In this case, both of the distribution function estimates suffer from the same bias, which is canceled out in $KS_2$ under the null hypothesis.
	Note that we still need the condition $N_1/(N_1 + N_2) \to \lambda \in (0,1)$.
\end{remark}

\section{Empirical application}\label{sec-application}

We apply our procedures to panel data on prices in US cities.
The speed of price adjustment toward the long-run law of one price (LOP) has important implications in economics.
\cite{AndersonVanWincoop04} survey the literature on price adjustment and trade costs.
Several studies focus on the properties of heterogeneity in price deviations from the LOP based on model specifications.
For example, \cite{EngelRogers01} and \cite{ParsleyWei01} examine the heterogeneity in the time-series volatility of the LOP deviation, and \cite{CruciniShintaniTsuruga15} consider the heterogeneous properties for the time-series persistence of the LOP deviation.

We investigate the heterogeneous properties of the LOP deviations across cities and items using our procedures.
We examine whether the LOP deviations dynamics are heterogeneous depending on the item-specific unobserved component such as city- or item-specific permanent trade costs or the item category (e.g., goods or services).
Our model-free empirical results complement the findings in existing studies by investigating the heterogeneous properties of the permanent amount, time-series volatility, and persistence of the LOP deviations across cities and items.

\subsection{Data}

We use data from the American Chamber of Commerce Researchers Association Cost of Living Index produced by the Council of Community and Economic Research.\footnote{Mototsugu Shintani kindly provided us with the dataset ready for analysis.}
\cite{ParsleyWei96}, \cite{YazganYilmazkuday11}, and 
\cite{CruciniShintaniTsuruga15} use the same data set as  
\cite{LeeOkuiShintani13} do for their empirical illustration.
The data set contains quarterly price series of 48 consumer price index categorized goods and services for 52 US cities from 1990Q1 to 2007Q4.\footnote{While the original data source contains price information for more items in more cities, we restrict the observations to obtain a balanced panel data set, as in \cite{CruciniShintaniTsuruga15}.} 

The LOP deviation for item $k$ in city $i$ at time $t$ is $y_{i k t}=\ln p_{i k t}-\ln p_{0kt}$ where $p_{ikt}$ is the price of item $k$ in city $i$ at time $t$ and $p_{0kt}$ is that for the benchmark city of Albuquerque, NM.
In this empirical application, we regard each item--city pair as a cross-sectional unit, implying a focus on the heterogeneity of the dynamic structures of the LOP deviations across item--city pairs. 
The number of units is $N=2448$ ($= 48 \times 51$) and the length of the time series is $T=72$ ($=18 \times 4$).

\subsection{Results}

Table \ref{table-accra} summarizes the estimates based on the empirical distribution without bias correction (ED) and the HPJ and TOJ estimates for the distributional features of the heterogeneous means, variances, and first-order autocorrelations of the LOP deviations. 
The estimates of mean, standard deviation (std), 25\% quantile (Q25), median (Q50), and 75\% quantile (Q75) for each quantity are presented. 
We also estimate the correlations between these three quantities.
The 95\% confidence intervals are computed using the cross-sectional bootstrap.

\begin{table}[!t]
	\begin{footnotesize}
		\caption{Distribution of price dynamics}
		\label{table-accra} 
		\begin{center}
			\subcaption{Distributions of $\mu$, $\gamma_0$, and $\rho_1$}
			\begin{tabular}{lccccc}
				\hline\hline
				\multicolumn{1}{l}{}&\multicolumn{1}{c}{mean}&\multicolumn{1}{c}{std}&\multicolumn{1}{c}{Q25}&\multicolumn{1}{c}{Q50}&\multicolumn{1}{c}{Q75}\tabularnewline
				\hline
				{\bfseries Distribution of $\mu$}&&&&&\tabularnewline
				~~ED&-0.037&0.131&-0.121&-0.040&0.033\tabularnewline
				~~95\% CI&[-0.042, -0.032]&[0.127, 0.136]&[-0.129, -0.115]&[-0.046, -0.035]&[0.027, 0.041]\tabularnewline
				~~HPJ&-0.037&0.120&-0.115&-0.040&0.024\tabularnewline
				~~95\% CI&[-0.042, -0.032]&[0.115, 0.125]&[-0.125, -0.108]&[-0.050, -0.033]&[0.018, 0.034]\tabularnewline
				~~TOJ&-0.037&0.114&-0.110&-0.040&0.014\tabularnewline
				~~95\% CI&[-0.042, -0.032]&[0.108, 0.119]&[-0.127, -0.097]&[-0.055, -0.030]&[0.003, 0.034]\tabularnewline
				\hline
				{\bfseries Distribution of $\gamma_0$}&&&&&\tabularnewline
				~~ED&0.021&0.021&0.010&0.016&0.025\tabularnewline
				~~95\% CI&[0.020, 0.022]&[0.018, 0.024]&[0.009, 0.010]&[0.015, 0.017]&[0.024, 0.026]\tabularnewline
				~~HPJ&0.024&0.019&0.013&0.019&0.029\tabularnewline
				~~95\% CI&[0.023, 0.025]&[0.017, 0.021]&[0.012, 0.013]&[0.018, 0.020]&[0.028, 0.031]\tabularnewline
				~~TOJ&0.026&0.017&0.015&0.021&0.032\tabularnewline
				~~95\% CI&[0.024, 0.027]&[0.014, 0.020]&[0.014, 0.016]&[0.020, 0.023]&[0.029, 0.034]\tabularnewline
				\hline
				{\bfseries Distribution of $\rho_1$}&&&&&\tabularnewline
				~~ED&0.531&0.206&0.386&0.543&0.691\tabularnewline
				~~95\% CI&[0.523, 0.539]&[0.201, 0.211]&[0.372, 0.393]&[0.533, 0.555]&[0.679, 0.702]\tabularnewline
				~~HPJ&0.627&0.159&0.519&0.644&0.748\tabularnewline
				~~95\% CI&[0.616, 0.636]&[0.150, 0.168]&[0.497, 0.532]&[0.628, 0.664]&[0.729, 0.767]\tabularnewline
				~~TOJ&0.663&0.134&0.58&0.686&0.760\tabularnewline
				~~95\% CI&[0.645, 0.679]&[0.119, 0.149]&[0.533, 0.612]&[0.651, 0.723]&[0.723, 0.796]\tabularnewline
				\hline
			\end{tabular}
			\medskip
			\subcaption{Correlation structure}
			\begin{tabular}{lccc}
				\hline\hline
				\multicolumn{1}{l}{}&\multicolumn{1}{c}{$\mu$ vs $\gamma_0$}&\multicolumn{1}{c}{$\mu$ vs $\rho_1$}&\multicolumn{1}{c}{$\gamma_0$ vs $\rho_1$}\tabularnewline
				\hline
				ED&0.046&-0.128&0.113\tabularnewline
				95\% CI&[0.002, 0.092]&[-0.166, -0.087]&[0.048, 0.177]\tabularnewline
				HPJ&0.082&-0.160&0.106\tabularnewline
				95\% CI&[0.009, 0.143]&[-0.218, -0.104]&[-0.003, 0.205]\tabularnewline
				TOJ&0.123&-0.170&0.020\tabularnewline
				95\% CI&[0.025, 0.233]&[-0.260, -0.086]&[-0.151, 0.167]\tabularnewline
				\hline
			\end{tabular}
		\end{center}
	\end{footnotesize}
\end{table}

The results show that the bias-corrected estimates can substantially differ from the ED estimates, even though this data set has a relatively long time series.
In particular, both HPJ and TOJ estimates imply more volatile and persistent dynamics than those implied by the ED estimates.
This result demonstrates that the bias correction is important even when $T$ is relatively large.

LOP deviations exhibit significant heterogeneity across item--city pairs, as shown in the estimates of the standard deviations and quantiles of the heterogeneous means, variances, and first-order autocorrelations. 
The standard deviation estimate of the heterogeneous mean indicates a substantial degree of permanent price differences across cities and items.
Likewise, the magnitude of the variance in price differences shows large heterogeneity.
Interestingly, the positive correlation between the means and the variances implies that the larger the permanent LOP deviation is, the larger the variance of the deviation tends to be.

The results for the first-order autocorrelations indicate that the LOP deviations are serially positively correlated.
The amount of heterogeneity implied by the bias-corrected estimates is less than that implied by the ED estimate, but all estimates imply that the first-order autocorrelations have a substantial degree of heterogeneity.
The first-order autocorrelations are negatively correlated with the mean LOP deviation, and the correlation between first-order autocorrelations and variances is slightly positive.
This result indicates that item--city pairs with persistent price difference tend to have small permanent price differences but tend to suffer from relatively large shocks.

We also examine whether the distribution of the LOP deviations dynamics differs between goods and services, similarly to prior works that point out different price dynamics between goods and services (e.g., \citealp{ParsleyWei96} and \citealp{NakamuraSteinsson08}).
Table \ref{table-items} describes the classification of goods and services in our analysis.
Table \ref{table-accra-gs} summarizes the estimation results, which indicate that price dynamics for goods are markedly different from those of services.
In particular, prices for services tend to have more persistent dynamics. 
In fact, the value of the two-sample KS test for the first-order autocorrelations is 0.353, with a $p$-value of 0.
This provides statistical evidence that the speed of price adjustment for services is slower than that for goods.

\begin{table}[!t]
	\centering
	\caption{Items}
	\label{table-items}
	\begin{tabular}{|l|l|}
		\hline 
		Goods  & T-bone steak, Ground beef, Frying chicken, Chunk light tuna, Whole milk, \\
		& Eggs, Margarine, Parmesan cheese, Potatoes, Bananas, Lettuce, Bread, Coffee,  \\
		&Sugar, Corn flakes, Sweet peas, Peaches, Shortening, Frozen corn, Soft drink, \\
		& Beer, Wine, Facial tissues, Dishwashing powder, Men's dress shirt, Shampoo, \\
		& Toothpaste, Tennis balls.
		\\
		\hline
		Services & Hamburger sandwich, Pizza, Fried chicken, Total home energy cost, Telephone, \\
		& Apartment, Home purchase price, Mortgage rate, Monthly payment,  \\
		&Dry cleaning, Major appliance repair, Auto maintenance, Gasoline, \\
		&Doctor office visit, Dentist office visit, Haircut, Beauty salon, \\
		& Newspaper subscription, Movie, Bowling.
		\\
		\hline
	\end{tabular}
	\medskip
	\begin{minipage}{\linewidth}
		Note: The service category includes those that may be considered as goods, but whose prices are likely to include the cost of a service.
		Our results are robust to minor modifications to the classification.
	\end{minipage}
\end{table}

\begin{table}[!t]
	\begin{footnotesize}
		\caption{Distribution of price dynamics for goods and services}
		\label{table-accra-gs} 
		\begin{center}
			\subcaption{Distributions of $\mu$, $\gamma_0$, and $\rho_1$ for goods}
			\begin{tabular}{lccccc}
				\hline\hline
				\multicolumn{1}{l}{}&\multicolumn{1}{c}{mean}&\multicolumn{1}{c}{std}&\multicolumn{1}{c}{Q25}&\multicolumn{1}{c}{Q50}&\multicolumn{1}{c}{Q75}\tabularnewline
				\hline
				{\bfseries Distribution of $\mu$}&&&&&\tabularnewline
				~~ED&-0.042&0.120&-0.127&-0.051&0.026\tabularnewline
				~~95\% CI&[-0.048, -0.036]&[0.114, 0.125]&[-0.134, -0.119]&[-0.059, -0.045]&[0.016, 0.032]\tabularnewline
				~~HPJ&-0.042&0.107&-0.121&-0.053&0.019\tabularnewline
				~~95\% CI&[-0.049, -0.036]&[0.100, 0.113]&[-0.131, -0.109]&[-0.064, -0.046]&[0.006, 0.027]\tabularnewline
				~~TOJ&-0.042&0.101&-0.115&-0.054&0.016\tabularnewline
				~~95\% CI&[-0.048, -0.036]&[0.094, 0.107]&[-0.132, -0.094]&[-0.073, -0.040]&[-0.005, 0.035]\tabularnewline
				\hline
				{\bfseries Distribution of $\gamma_0$}&&&&&\tabularnewline
				~~ED&0.025&0.021&0.013&0.019&0.029\tabularnewline
				~~95\% CI&[0.024, 0.026]&[0.019, 0.024]&[0.012, 0.014]&[0.019, 0.020]&[0.028, 0.031]\tabularnewline
				~~HPJ&0.028&0.022&0.016&0.023&0.033\tabularnewline
				~~95\% CI&[0.027, 0.030]&[0.019, 0.025]&[0.015, 0.017]&[0.022, 0.024]&[0.031, 0.035]\tabularnewline
				~~TOJ&0.030&0.021&0.018&0.025&0.033\tabularnewline
				~~95\% CI&[0.029, 0.032]&[0.018, 0.025]&[0.017, 0.020]&[0.023, 0.027]&[0.030, 0.038]\tabularnewline
				\hline
				{\bfseries Distribution of $\rho_1$}&&&&&\tabularnewline
				~~ED&0.474&0.188&0.348&0.480&0.604\tabularnewline
				~~95\% CI&[0.465, 0.484]&[0.181, 0.194]&[0.333, 0.360]&[0.466, 0.491]&[0.591, 0.617]\tabularnewline
				~~HPJ&0.577&0.140&0.491&0.591&0.673\tabularnewline
				~~95\% CI&[0.564, 0.591]&[0.130, 0.150]&[0.463, 0.509]&[0.564, 0.608]&[0.654, 0.697]\tabularnewline
				~~TOJ&0.604&0.106&0.553&0.630&0.686\tabularnewline
				~~95\% CI&[0.583, 0.625]&[0.086, 0.125]&[0.502, 0.594]&[0.581, 0.668]&[0.649, 0.740]\tabularnewline
				\hline
			\end{tabular}
			\medskip
			\subcaption{Distributions of $\mu$, $\gamma_0$, and $\rho_1$ for services}
			\begin{tabular}{lccccc}
				\hline\hline
				\multicolumn{1}{l}{}&\multicolumn{1}{c}{mean}&\multicolumn{1}{c}{std}&\multicolumn{1}{c}{Q25}&\multicolumn{1}{c}{Q50}&\multicolumn{1}{c}{Q75}\tabularnewline
				\hline
				{\bfseries Distribution of $\mu$}&&&&&\tabularnewline
				~~ED&-0.030&0.146&-0.113&-0.023&0.046\tabularnewline
				~~95\% CI&[-0.039, -0.022]&[0.138, 0.154]&[-0.126, -0.101]&[-0.031, -0.015]&[0.034, 0.057]\tabularnewline
				~~HPJ&-0.030&0.138&-0.104&-0.024&0.038\tabularnewline
				~~95\% CI&[-0.039, -0.021]&[0.129, 0.146]&[-0.118, -0.091]&[-0.035, -0.013]&[0.024, 0.055]\tabularnewline
				~~TOJ&-0.030&0.131&-0.093&-0.029&0.028\tabularnewline
				~~95\% CI&[-0.039, -0.021]&[0.123, 0.140]&[-0.117, -0.068]&[-0.046, -0.011]&[0.005, 0.055]\tabularnewline
				\hline
				{\bfseries Distribution of $\gamma_0$}&&&&&\tabularnewline
				~~ED&0.015&0.019&0.006&0.011&0.018\tabularnewline
				~~95\% CI&[0.014, 0.016]&[0.011, 0.025]&[0.006, 0.007]&[0.010, 0.012]&[0.018, 0.020]\tabularnewline
				~~HPJ&0.018&0.016&0.008&0.014&0.021\tabularnewline
				~~95\% CI&[0.016, 0.019]&[0.011, 0.022]&[0.007, 0.009]&[0.013, 0.016]&[0.020, 0.025]\tabularnewline
				~~TOJ&0.019&0.018&0.009&0.016&0.023\tabularnewline
				~~95\% CI&[0.018, 0.021]&[0.009, 0.027]&[0.008, 0.010]&[0.014, 0.018]&[0.020, 0.028]\tabularnewline
				\hline
				{\bfseries Distribution of $\rho_1$}&&&&&\tabularnewline
				~~ED&0.610&0.204&0.479&0.655&0.764\tabularnewline
				~~95\% CI&[0.598, 0.624]&[0.194, 0.213]&[0.453, 0.510]&[0.643, 0.669]&[0.754, 0.778]\tabularnewline
				~~HPJ&0.696&0.161&0.604&0.749&0.809\tabularnewline
				~~95\% CI&[0.679, 0.713]&[0.145, 0.174]&[0.565, 0.655]&[0.731, 0.769]&[0.790, 0.828]\tabularnewline
				~~TOJ&0.745&0.130&0.684&0.806&0.822\tabularnewline
				~~95\% CI&[0.721, 0.769]&[0.103, 0.155]&[0.613, 0.778]&[0.766, 0.851]&[0.781, 0.867]\tabularnewline
				\hline
			\end{tabular}
		\end{center}
	\end{footnotesize}
\end{table}

Our findings are informative in their own right and are in line with existing results.
For example, \citet{CruciniShintaniTsuruga15} find significant heterogeneity in LOP deviation dynamics by considering city--city pairs in addition to the city--item pairs.
\citet{choi2007heterogeneity} find that the speed of price adjustment is heterogeneous, even among tradable goods using Japanese data. 
As existing studies comparing goods and services, \citet{ParsleyWei96} and \citet{NakamuraSteinsson08} find that services exhibit slower price adjustments and less frequent price changes.
Those findings are based on model specifications for heterogeneity, so that our results complement them in a model-free manner with formal statistical procedures.

There could be several potential sources of significant heterogeneity in LOP deviation dynamics.
For example, there may be some item- and/or city-specific unobservables, such as permanent trade costs and productivity shocks, which can be sources of heterogeneity across items and cities.
As another example, an information difference across items and/or cities may lead to heterogeneous dynamic structures. 
From this viewpoint, \citet{CruciniShintaniTsuruga15} relate heterogeneity to an information difference across managers in different cities based on the noisy information model.

The empirical findings here are useful even when the ultimate goal of an application on the LOP deviations is a structural estimation based on some model specifications.
For example, our findings here illustrate the importance of taking into account heterogeneity and the type of heterogeneity that needs to be considered in structural estimation.
In particular, we demonstrate that goods and services exhibit different heterogeneous dynamic structures, so that empirical researchers should consider different heterogeneity for goods and services.
Our recommendation is thus to implement the model-free procedure for understanding the properties of heterogeneous dynamics even when investigating the underlying mechanism and their implications based on structural estimation.

\section{Monte Carlo simulation} \label{sec-LOP-montecarlo}
This section presents the Monte Carlo simulation results. 
We conduct the simulation using {\ttfamily R} with 5,000 replications.

\subsection{Design}
For $N = 250, 1000, 4000$ and $T = 12, 24, 48$, we generate simulated data using an AR(1) process
\begin{align*}
	y_{it}= (1-\phi_i)\varsigma_i + \phi_i y_{i,t-1} + \sqrt{(1-\phi_i^2)\sigma_i^2} u_{it},
\end{align*}
where $u_{it} \sim i.i.d. \ \mathcal{N}(0,1)$.
The initial observations are generated by $y_{i0} \sim i.i.d. \ \mathcal{N}(\varsigma_i, \sigma_i^2)$ and $u_{i0} \sim i.i.d. \ \mathcal{N}(0, 1)$.
Note that this DGP satisfies $\mu_i = \varsigma_i$, $\gamma_{0,i}=\sigma^2_i$, and $\rho_{1,i}=\phi_i$.
The unit-specific random variables $\varsigma_i, \phi_i$, and $\sigma_i^2$ are generated by the truncated normal distribution:
\begin{equation*}
\begin{pmatrix}
\varsigma_i \\
\sigma_i^2 \\
\phi_i
\end{pmatrix}
\sim i.i.d. \; \mathcal{N}
\begin{pmatrix}
\begin{pmatrix}
-1 \\
1.5 \\
0.4
\end{pmatrix}
,&
\begin{pmatrix}
1 & 0.2 \cdot 1 \cdot 0.7 &  -0.3 \cdot 1 \cdot 0.2 \\
0.2 \cdot 1 \cdot 0.7  & 0.7^2 & 0.4 \cdot 0.7 \cdot 0.2 \\
-0.3 \cdot 1 \cdot 0.2  & 0.4 \cdot 0.7 \cdot 0.2 & 0.2^2
\end{pmatrix}
\end{pmatrix}
,
\end{equation*}
conditional on $\sigma_i^2 > 0$ and $|\phi_i| < 1$.

\paragraph{Parameters.}
We estimate the means, standard deviations, 25\%, 50\%, and 75\% quantiles, and correlation coefficients of $\mu_i, \gamma_{0,i}$, and $\rho_{1,i}$.

\paragraph{Estimators.}
We consider three estimators:  the empirical distribution (ED) without bias correction, the HPJ bias-corrected estimator, and the TOJ bias-corrected estimator. 

\subsection{Results}
Tables \ref{table-monte-mean}, \ref{table-monte-acov}, \ref{table-monte-acor}, and \ref{table-monte-cor} summarize the results of the Monte Carlo simulation and provide the bias and the root mean squared error (rmse) of each estimator and the coverage probability (cp) of the 95\% confidence interval based on the cross-sectional bootstrap. 
The column labeled ``true'' displays the true value of the corresponding quantity.

\begin{table}[!t]
	\begin{footnotesize}
		\caption{Monte Carlo simulation results for $\mu$}
		\label{table-monte-mean} 
		\begin{center}
			\begin{tabular}{lrrrcrrrcrrrcrrr}
				\hline\hline
				\multicolumn{1}{l}{\bfseries }&\multicolumn{3}{c}{\bfseries }&\multicolumn{1}{c}{\bfseries }&\multicolumn{3}{c}{\bfseries ED}&\multicolumn{1}{c}{\bfseries }&\multicolumn{3}{c}{\bfseries HPJ}&\multicolumn{1}{c}{\bfseries }&\multicolumn{3}{c}{\bfseries TOJ}\tabularnewline
				\cline{6-8} \cline{10-12} \cline{14-16}
				\multicolumn{1}{l}{}&\multicolumn{1}{c}{true}&\multicolumn{1}{c}{$N$}&\multicolumn{1}{c}{$T$}&\multicolumn{1}{c}{}&\multicolumn{1}{c}{bias}&\multicolumn{1}{c}{rmse}&\multicolumn{1}{c}{cp}&\multicolumn{1}{c}{}&\multicolumn{1}{c}{bias}&\multicolumn{1}{c}{rmse}&\multicolumn{1}{c}{cp}&\multicolumn{1}{c}{}&\multicolumn{1}{c}{bias}&\multicolumn{1}{c}{rmse}&\multicolumn{1}{c}{cp}\tabularnewline
				\hline
				$\mu$ mean&$-0.993$&$ 250$&$12$&&$ 0.000$&$0.071$&$0.948$&&$ 0.000$&$0.071$&$0.947$&&$ 0.000$&$0.071$&$0.950$\tabularnewline
				&$-0.993$&$ 250$&$24$&&$ 0.001$&$0.067$&$0.945$&&$ 0.001$&$0.067$&$0.947$&&$ 0.001$&$0.067$&$0.947$\tabularnewline
				&$-0.993$&$ 250$&$48$&&$ 0.000$&$0.064$&$0.955$&&$ 0.000$&$0.064$&$0.955$&&$ 0.000$&$0.064$&$0.956$\tabularnewline
				&$-0.993$&$1000$&$12$&&$ 0.000$&$0.035$&$0.954$&&$ 0.000$&$0.035$&$0.952$&&$ 0.000$&$0.035$&$0.954$\tabularnewline
				&$-0.993$&$1000$&$24$&&$-0.001$&$0.034$&$0.945$&&$-0.001$&$0.034$&$0.947$&&$-0.001$&$0.034$&$0.946$\tabularnewline
				&$-0.993$&$1000$&$48$&&$-0.001$&$0.033$&$0.946$&&$-0.001$&$0.033$&$0.947$&&$-0.001$&$0.033$&$0.946$\tabularnewline
				&$-0.993$&$4000$&$12$&&$ 0.000$&$0.018$&$0.947$&&$ 0.000$&$0.018$&$0.949$&&$ 0.000$&$0.018$&$0.945$\tabularnewline
				&$-0.993$&$4000$&$24$&&$ 0.000$&$0.017$&$0.948$&&$ 0.000$&$0.017$&$0.951$&&$ 0.000$&$0.017$&$0.948$\tabularnewline
				&$-0.993$&$4000$&$48$&&$ 0.000$&$0.017$&$0.944$&&$ 0.000$&$0.017$&$0.943$&&$ 0.000$&$0.017$&$0.940$\tabularnewline
				\hline
				$\mu$ std&$ 0.997$&$ 250$&$12$&&$ 0.135$&$0.145$&$0.248$&&$ 0.042$&$0.071$&$0.901$&&$ 0.005$&$0.064$&$0.946$\tabularnewline
				&$ 0.997$&$ 250$&$24$&&$ 0.076$&$0.091$&$0.676$&&$ 0.016$&$0.055$&$0.939$&&$ 0.000$&$0.057$&$0.934$\tabularnewline
				&$ 0.997$&$ 250$&$48$&&$ 0.040$&$0.061$&$0.891$&&$ 0.004$&$0.049$&$0.943$&&$-0.001$&$0.050$&$0.939$\tabularnewline
				&$ 0.997$&$1000$&$12$&&$ 0.136$&$0.139$&$0.000$&&$ 0.044$&$0.052$&$0.673$&&$ 0.007$&$0.033$&$0.942$\tabularnewline
				&$ 0.997$&$1000$&$24$&&$ 0.076$&$0.080$&$0.121$&&$ 0.015$&$0.030$&$0.922$&&$-0.001$&$0.028$&$0.946$\tabularnewline
				&$ 0.997$&$1000$&$48$&&$ 0.040$&$0.046$&$0.613$&&$ 0.004$&$0.025$&$0.949$&&$-0.001$&$0.025$&$0.944$\tabularnewline
				&$ 0.997$&$4000$&$12$&&$ 0.137$&$0.137$&$0.000$&&$ 0.044$&$0.047$&$0.110$&&$ 0.007$&$0.017$&$0.931$\tabularnewline
				&$ 0.997$&$4000$&$24$&&$ 0.076$&$0.077$&$0.000$&&$ 0.016$&$0.020$&$0.780$&&$-0.001$&$0.014$&$0.945$\tabularnewline
				&$ 0.997$&$4000$&$48$&&$ 0.041$&$0.042$&$0.067$&&$ 0.005$&$0.013$&$0.934$&&$-0.001$&$0.013$&$0.946$\tabularnewline
				\hline
				$\mu$ 25\%Q&$-1.666$&$ 250$&$12$&&$-0.095$&$0.133$&$0.831$&&$-0.030$&$0.125$&$0.960$&&$-0.001$&$0.200$&$0.989$\tabularnewline
				&$-1.666$&$ 250$&$24$&&$-0.054$&$0.103$&$0.910$&&$-0.010$&$0.108$&$0.964$&&$ 0.003$&$0.169$&$0.992$\tabularnewline
				&$-1.666$&$ 250$&$48$&&$-0.026$&$0.091$&$0.941$&&$ 0.001$&$0.103$&$0.966$&&$ 0.006$&$0.152$&$0.993$\tabularnewline
				&$-1.666$&$1000$&$12$&&$-0.099$&$0.109$&$0.431$&&$-0.034$&$0.069$&$0.928$&&$-0.005$&$0.099$&$0.976$\tabularnewline
				&$-1.666$&$1000$&$24$&&$-0.056$&$0.072$&$0.754$&&$-0.012$&$0.056$&$0.960$&&$ 0.001$&$0.088$&$0.981$\tabularnewline
				&$-1.666$&$1000$&$48$&&$-0.029$&$0.053$&$0.896$&&$-0.003$&$0.052$&$0.958$&&$ 0.002$&$0.077$&$0.982$\tabularnewline
				&$-1.666$&$4000$&$12$&&$-0.101$&$0.103$&$0.010$&&$-0.036$&$0.048$&$0.778$&&$-0.007$&$0.051$&$0.958$\tabularnewline
				&$-1.666$&$4000$&$24$&&$-0.056$&$0.060$&$0.291$&&$-0.012$&$0.030$&$0.937$&&$ 0.002$&$0.044$&$0.970$\tabularnewline
				&$-1.666$&$4000$&$48$&&$-0.030$&$0.037$&$0.717$&&$-0.003$&$0.027$&$0.948$&&$ 0.002$&$0.039$&$0.966$\tabularnewline
				\hline
				$\mu$ 50\%Q&$-0.993$&$ 250$&$12$&&$-0.027$&$0.092$&$0.934$&&$-0.018$&$0.113$&$0.958$&&$-0.006$&$0.180$&$0.988$\tabularnewline
				&$-0.993$&$ 250$&$24$&&$-0.019$&$0.085$&$0.943$&&$-0.010$&$0.102$&$0.965$&&$-0.006$&$0.157$&$0.991$\tabularnewline
				&$-0.993$&$ 250$&$48$&&$-0.011$&$0.082$&$0.949$&&$-0.004$&$0.097$&$0.961$&&$ 0.000$&$0.143$&$0.990$\tabularnewline
				&$-0.993$&$1000$&$12$&&$-0.028$&$0.052$&$0.904$&&$-0.019$&$0.059$&$0.942$&&$-0.010$&$0.092$&$0.972$\tabularnewline
				&$-0.993$&$1000$&$24$&&$-0.019$&$0.046$&$0.926$&&$-0.009$&$0.053$&$0.952$&&$-0.001$&$0.082$&$0.978$\tabularnewline
				&$-0.993$&$1000$&$48$&&$-0.012$&$0.042$&$0.940$&&$-0.005$&$0.048$&$0.961$&&$-0.002$&$0.072$&$0.979$\tabularnewline
				&$-0.993$&$4000$&$12$&&$-0.029$&$0.036$&$0.746$&&$-0.020$&$0.035$&$0.887$&&$-0.011$&$0.048$&$0.951$\tabularnewline
				&$-0.993$&$4000$&$24$&&$-0.018$&$0.028$&$0.860$&&$-0.009$&$0.027$&$0.944$&&$-0.003$&$0.041$&$0.964$\tabularnewline
				&$-0.993$&$4000$&$48$&&$-0.011$&$0.024$&$0.904$&&$-0.004$&$0.025$&$0.950$&&$-0.001$&$0.036$&$0.967$\tabularnewline
				\hline
				$\mu$ 75\%Q&$-0.320$&$ 250$&$12$&&$ 0.064$&$0.118$&$0.909$&&$ 0.008$&$0.129$&$0.965$&&$-0.010$&$0.214$&$0.988$\tabularnewline
				&$-0.320$&$ 250$&$24$&&$ 0.035$&$0.099$&$0.940$&&$ 0.003$&$0.116$&$0.971$&&$-0.002$&$0.183$&$0.993$\tabularnewline
				&$-0.320$&$ 250$&$48$&&$ 0.014$&$0.089$&$0.955$&&$-0.003$&$0.106$&$0.969$&&$-0.005$&$0.161$&$0.994$\tabularnewline
				&$-0.320$&$1000$&$12$&&$ 0.067$&$0.083$&$0.733$&&$ 0.012$&$0.064$&$0.956$&&$-0.007$&$0.105$&$0.978$\tabularnewline
				&$-0.320$&$1000$&$24$&&$ 0.034$&$0.058$&$0.893$&&$ 0.001$&$0.059$&$0.957$&&$-0.005$&$0.094$&$0.981$\tabularnewline
				&$-0.320$&$1000$&$48$&&$ 0.015$&$0.047$&$0.936$&&$-0.003$&$0.054$&$0.960$&&$-0.004$&$0.082$&$0.985$\tabularnewline
				&$-0.320$&$4000$&$12$&&$ 0.068$&$0.073$&$0.223$&&$ 0.013$&$0.035$&$0.935$&&$-0.005$&$0.054$&$0.963$\tabularnewline
				&$-0.320$&$4000$&$24$&&$ 0.035$&$0.043$&$0.667$&&$ 0.002$&$0.029$&$0.953$&&$-0.004$&$0.047$&$0.967$\tabularnewline
				&$-0.320$&$4000$&$48$&&$ 0.017$&$0.029$&$0.877$&&$ 0.000$&$0.027$&$0.951$&&$-0.001$&$0.041$&$0.969$\tabularnewline
				\hline
		\end{tabular}\end{center}
	\end{footnotesize}
\end{table}

\begin{table}[!t]
	\begin{footnotesize}
		\caption{Monte Carlo simulation results for $\gamma_0$}
		\label{table-monte-acov}  
		\begin{center}
			\begin{tabular}{lrrrcrrrcrrrcrrr}
				\hline\hline
				\multicolumn{1}{l}{\bfseries }&\multicolumn{3}{c}{\bfseries }&\multicolumn{1}{c}{\bfseries }&\multicolumn{3}{c}{\bfseries ED}&\multicolumn{1}{c}{\bfseries }&\multicolumn{3}{c}{\bfseries HPJ}&\multicolumn{1}{c}{\bfseries }&\multicolumn{3}{c}{\bfseries TOJ}\tabularnewline
				\cline{6-8} \cline{10-12} \cline{14-16}
				\multicolumn{1}{l}{}&\multicolumn{1}{c}{true}&\multicolumn{1}{c}{$N$}&\multicolumn{1}{c}{$T$}&\multicolumn{1}{c}{}&\multicolumn{1}{c}{bias}&\multicolumn{1}{c}{rmse}&\multicolumn{1}{c}{cp}&\multicolumn{1}{c}{}&\multicolumn{1}{c}{bias}&\multicolumn{1}{c}{rmse}&\multicolumn{1}{c}{cp}&\multicolumn{1}{c}{}&\multicolumn{1}{c}{bias}&\multicolumn{1}{c}{rmse}&\multicolumn{1}{c}{cp}\tabularnewline
				\hline
				$\gamma_0$ mean&$1.529$&$ 250$&$12$&&$-0.291$&$0.296$&$0.001$&&$-0.072$&$0.101$&$0.804$&&$ 0.000$&$0.099$&$0.938$\tabularnewline
				&$1.529$&$ 250$&$24$&&$-0.159$&$0.167$&$0.151$&&$-0.025$&$0.063$&$0.920$&&$ 0.002$&$0.073$&$0.945$\tabularnewline
				&$1.529$&$ 250$&$48$&&$-0.082$&$0.094$&$0.596$&&$-0.007$&$0.050$&$0.946$&&$ 0.002$&$0.056$&$0.948$\tabularnewline
				&$1.529$&$1000$&$12$&&$-0.291$&$0.292$&$0.000$&&$-0.073$&$0.081$&$0.453$&&$-0.002$&$0.048$&$0.948$\tabularnewline
				&$1.529$&$1000$&$24$&&$-0.157$&$0.159$&$0.000$&&$-0.023$&$0.037$&$0.871$&&$ 0.005$&$0.036$&$0.949$\tabularnewline
				&$1.529$&$1000$&$48$&&$-0.082$&$0.086$&$0.077$&&$-0.007$&$0.026$&$0.940$&&$ 0.001$&$0.029$&$0.947$\tabularnewline
				&$1.529$&$4000$&$12$&&$-0.291$&$0.292$&$0.000$&&$-0.073$&$0.075$&$0.022$&&$-0.002$&$0.025$&$0.943$\tabularnewline
				&$1.529$&$4000$&$24$&&$-0.158$&$0.158$&$0.000$&&$-0.024$&$0.028$&$0.628$&&$ 0.004$&$0.019$&$0.944$\tabularnewline
				&$1.529$&$4000$&$48$&&$-0.082$&$0.083$&$0.000$&&$-0.007$&$0.015$&$0.906$&&$ 0.002$&$0.015$&$0.945$\tabularnewline
				\hline
				$\gamma_0$ std&$0.668$&$ 250$&$12$&&$ 0.215$&$0.225$&$0.022$&&$ 0.165$&$0.188$&$0.462$&&$ 0.090$&$0.191$&$0.913$\tabularnewline
				&$0.668$&$ 250$&$24$&&$ 0.144$&$0.152$&$0.103$&&$ 0.074$&$0.097$&$0.787$&&$ 0.020$&$0.107$&$0.938$\tabularnewline
				&$0.668$&$ 250$&$48$&&$ 0.087$&$0.095$&$0.384$&&$ 0.029$&$0.054$&$0.912$&&$ 0.002$&$0.067$&$0.936$\tabularnewline
				&$0.668$&$1000$&$12$&&$ 0.216$&$0.219$&$0.000$&&$ 0.165$&$0.171$&$0.010$&&$ 0.090$&$0.124$&$0.811$\tabularnewline
				&$0.668$&$1000$&$24$&&$ 0.147$&$0.149$&$0.000$&&$ 0.077$&$0.084$&$0.244$&&$ 0.024$&$0.060$&$0.928$\tabularnewline
				&$0.668$&$1000$&$48$&&$ 0.087$&$0.090$&$0.003$&&$ 0.029$&$0.037$&$0.765$&&$ 0.002$&$0.034$&$0.943$\tabularnewline
				&$0.668$&$4000$&$12$&&$ 0.217$&$0.217$&$0.000$&&$ 0.165$&$0.167$&$0.000$&&$ 0.091$&$0.100$&$0.401$\tabularnewline
				&$0.668$&$4000$&$24$&&$ 0.146$&$0.147$&$0.000$&&$ 0.076$&$0.077$&$0.000$&&$ 0.022$&$0.035$&$0.875$\tabularnewline
				&$0.668$&$4000$&$48$&&$ 0.088$&$0.088$&$0.000$&&$ 0.029$&$0.031$&$0.264$&&$ 0.002$&$0.017$&$0.949$\tabularnewline
				\hline
				$\gamma_0$ 25\%Q&$1.055$&$ 250$&$12$&&$-0.451$&$0.453$&$0.000$&&$-0.221$&$0.233$&$0.209$&&$-0.085$&$0.150$&$0.923$\tabularnewline
				&$1.055$&$ 250$&$24$&&$-0.272$&$0.277$&$0.001$&&$-0.093$&$0.121$&$0.788$&&$-0.017$&$0.129$&$0.979$\tabularnewline
				&$1.055$&$ 250$&$48$&&$-0.152$&$0.161$&$0.220$&&$-0.033$&$0.081$&$0.942$&&$ 0.000$&$0.118$&$0.986$\tabularnewline
				&$1.055$&$1000$&$12$&&$-0.453$&$0.454$&$0.000$&&$-0.224$&$0.227$&$0.000$&&$-0.089$&$0.109$&$0.742$\tabularnewline
				&$1.055$&$1000$&$24$&&$-0.274$&$0.275$&$0.000$&&$-0.094$&$0.102$&$0.327$&&$-0.018$&$0.066$&$0.958$\tabularnewline
				&$1.055$&$1000$&$48$&&$-0.154$&$0.156$&$0.000$&&$-0.034$&$0.050$&$0.864$&&$ 0.000$&$0.059$&$0.974$\tabularnewline
				&$1.055$&$4000$&$12$&&$-0.454$&$0.454$&$0.000$&&$-0.225$&$0.226$&$0.000$&&$-0.090$&$0.095$&$0.198$\tabularnewline
				&$1.055$&$4000$&$24$&&$-0.274$&$0.275$&$0.000$&&$-0.095$&$0.097$&$0.003$&&$-0.018$&$0.036$&$0.926$\tabularnewline
				&$1.055$&$4000$&$48$&&$-0.154$&$0.155$&$0.000$&&$-0.034$&$0.039$&$0.551$&&$ 0.000$&$0.030$&$0.964$\tabularnewline
				\hline
				$\gamma_0$ 50\%Q&$1.515$&$ 250$&$12$&&$-0.472$&$0.476$&$0.000$&&$-0.182$&$0.206$&$0.549$&&$-0.054$&$0.175$&$0.960$\tabularnewline
				&$1.515$&$ 250$&$24$&&$-0.274$&$0.281$&$0.006$&&$-0.076$&$0.115$&$0.873$&&$-0.015$&$0.152$&$0.976$\tabularnewline
				&$1.515$&$ 250$&$48$&&$-0.150$&$0.161$&$0.286$&&$-0.027$&$0.082$&$0.951$&&$-0.001$&$0.128$&$0.987$\tabularnewline
				&$1.515$&$1000$&$12$&&$-0.473$&$0.474$&$0.000$&&$-0.183$&$0.189$&$0.048$&&$-0.054$&$0.100$&$0.912$\tabularnewline
				&$1.515$&$1000$&$24$&&$-0.274$&$0.276$&$0.000$&&$-0.075$&$0.087$&$0.605$&&$-0.014$&$0.075$&$0.966$\tabularnewline
				&$1.515$&$1000$&$48$&&$-0.151$&$0.154$&$0.000$&&$-0.028$&$0.049$&$0.894$&&$-0.003$&$0.066$&$0.972$\tabularnewline
				&$1.515$&$4000$&$12$&&$-0.474$&$0.474$&$0.000$&&$-0.184$&$0.185$&$0.000$&&$-0.056$&$0.070$&$0.730$\tabularnewline
				&$1.515$&$4000$&$24$&&$-0.275$&$0.275$&$0.000$&&$-0.076$&$0.079$&$0.071$&&$-0.014$&$0.040$&$0.943$\tabularnewline
				&$1.515$&$4000$&$48$&&$-0.152$&$0.152$&$0.000$&&$-0.029$&$0.035$&$0.696$&&$-0.003$&$0.033$&$0.959$\tabularnewline
				\hline
				$\gamma_0$ 75\%Q&$1.982$&$ 250$&$12$&&$-0.325$&$0.337$&$0.093$&&$-0.037$&$0.146$&$0.948$&&$ 0.029$&$0.260$&$0.980$\tabularnewline
				&$1.982$&$ 250$&$24$&&$-0.169$&$0.187$&$0.468$&&$-0.011$&$0.116$&$0.965$&&$ 0.002$&$0.209$&$0.987$\tabularnewline
				&$1.982$&$ 250$&$48$&&$-0.085$&$0.112$&$0.773$&&$-0.003$&$0.100$&$0.966$&&$-0.001$&$0.178$&$0.988$\tabularnewline
				&$1.982$&$1000$&$12$&&$-0.324$&$0.327$&$0.000$&&$-0.036$&$0.079$&$0.925$&&$ 0.029$&$0.132$&$0.964$\tabularnewline
				&$1.982$&$1000$&$24$&&$-0.165$&$0.169$&$0.027$&&$-0.005$&$0.059$&$0.956$&&$ 0.009$&$0.109$&$0.967$\tabularnewline
				&$1.982$&$1000$&$48$&&$-0.084$&$0.092$&$0.377$&&$-0.003$&$0.050$&$0.958$&&$-0.001$&$0.090$&$0.970$\tabularnewline
				&$1.982$&$4000$&$12$&&$-0.324$&$0.324$&$0.000$&&$-0.036$&$0.050$&$0.826$&&$ 0.029$&$0.071$&$0.938$\tabularnewline
				&$1.982$&$4000$&$24$&&$-0.165$&$0.166$&$0.000$&&$-0.006$&$0.031$&$0.944$&&$ 0.009$&$0.055$&$0.957$\tabularnewline
				&$1.982$&$4000$&$48$&&$-0.083$&$0.085$&$0.006$&&$-0.002$&$0.025$&$0.950$&&$ 0.000$&$0.045$&$0.962$\tabularnewline
				\hline
		\end{tabular}\end{center}
	\end{footnotesize}
\end{table}

\begin{table}[!t]
	\begin{footnotesize}
		\caption{Monte Carlo simulation results for $\rho_1$}
		\label{table-monte-acor}   
		\begin{center}
			\begin{tabular}{lrrrcrrrcrrrcrrr}
				\hline\hline
				\multicolumn{1}{l}{\bfseries }&\multicolumn{3}{c}{\bfseries }&\multicolumn{1}{c}{\bfseries }&\multicolumn{3}{c}{\bfseries ED}&\multicolumn{1}{c}{\bfseries }&\multicolumn{3}{c}{\bfseries HPJ}&\multicolumn{1}{c}{\bfseries }&\multicolumn{3}{c}{\bfseries TOJ}\tabularnewline
				\cline{6-8} \cline{10-12} \cline{14-16}
				\multicolumn{1}{l}{}&\multicolumn{1}{c}{true}&\multicolumn{1}{c}{$N$}&\multicolumn{1}{c}{$T$}&\multicolumn{1}{c}{}&\multicolumn{1}{c}{bias}&\multicolumn{1}{c}{rmse}&\multicolumn{1}{c}{cp}&\multicolumn{1}{c}{}&\multicolumn{1}{c}{bias}&\multicolumn{1}{c}{rmse}&\multicolumn{1}{c}{cp}&\multicolumn{1}{c}{}&\multicolumn{1}{c}{bias}&\multicolumn{1}{c}{rmse}&\multicolumn{1}{c}{cp}\tabularnewline
				\hline
				$\rho_1$ mean&$0.397$&$ 250$&$12$&&$-0.199$&$0.200$&$0.000$&&$ 0.008$&$0.030$&$0.939$&&$ 0.004$&$0.054$&$0.949$\tabularnewline
				&$0.397$&$ 250$&$24$&&$-0.097$&$0.098$&$0.000$&&$ 0.005$&$0.021$&$0.944$&&$ 0.004$&$0.033$&$0.951$\tabularnewline
				&$0.397$&$ 250$&$48$&&$-0.047$&$0.049$&$0.087$&&$ 0.002$&$0.016$&$0.945$&&$ 0.001$&$0.021$&$0.948$\tabularnewline
				&$0.397$&$1000$&$12$&&$-0.200$&$0.200$&$0.000$&&$ 0.007$&$0.016$&$0.916$&&$ 0.004$&$0.028$&$0.944$\tabularnewline
				&$0.397$&$1000$&$24$&&$-0.097$&$0.097$&$0.000$&&$ 0.005$&$0.011$&$0.919$&&$ 0.003$&$0.017$&$0.946$\tabularnewline
				&$0.397$&$1000$&$48$&&$-0.047$&$0.048$&$0.000$&&$ 0.002$&$0.008$&$0.939$&&$ 0.000$&$0.011$&$0.946$\tabularnewline
				&$0.397$&$4000$&$12$&&$-0.199$&$0.200$&$0.000$&&$ 0.008$&$0.010$&$0.825$&&$ 0.003$&$0.014$&$0.941$\tabularnewline
				&$0.397$&$4000$&$24$&&$-0.097$&$0.097$&$0.000$&&$ 0.005$&$0.007$&$0.818$&&$ 0.004$&$0.009$&$0.928$\tabularnewline
				&$0.397$&$4000$&$48$&&$-0.047$&$0.048$&$0.000$&&$ 0.002$&$0.005$&$0.907$&&$ 0.000$&$0.005$&$0.941$\tabularnewline
				\hline
				$\rho_1$ std&$0.198$&$ 250$&$12$&&$ 0.109$&$0.109$&$0.000$&&$ 0.045$&$0.050$&$0.496$&&$-0.030$&$0.055$&$0.889$\tabularnewline
				&$0.198$&$ 250$&$24$&&$ 0.058$&$0.059$&$0.000$&&$ 0.007$&$0.018$&$0.931$&&$-0.012$&$0.034$&$0.924$\tabularnewline
				&$0.198$&$ 250$&$48$&&$ 0.029$&$0.031$&$0.128$&&$ 0.001$&$0.013$&$0.945$&&$-0.003$&$0.021$&$0.940$\tabularnewline
				&$0.198$&$1000$&$12$&&$ 0.109$&$0.109$&$0.000$&&$ 0.045$&$0.047$&$0.018$&&$-0.030$&$0.038$&$0.739$\tabularnewline
				&$0.198$&$1000$&$24$&&$ 0.058$&$0.059$&$0.000$&&$ 0.008$&$0.012$&$0.845$&&$-0.011$&$0.019$&$0.885$\tabularnewline
				&$0.198$&$1000$&$48$&&$ 0.030$&$0.030$&$0.000$&&$ 0.001$&$0.006$&$0.950$&&$-0.002$&$0.010$&$0.946$\tabularnewline
				&$0.198$&$4000$&$12$&&$ 0.109$&$0.109$&$0.000$&&$ 0.045$&$0.046$&$0.000$&&$-0.030$&$0.032$&$0.265$\tabularnewline
				&$0.198$&$4000$&$24$&&$ 0.058$&$0.058$&$0.000$&&$ 0.008$&$0.009$&$0.545$&&$-0.011$&$0.014$&$0.697$\tabularnewline
				&$0.198$&$4000$&$48$&&$ 0.030$&$0.030$&$0.000$&&$ 0.001$&$0.003$&$0.940$&&$-0.002$&$0.006$&$0.916$\tabularnewline
				\hline
				$\rho_1$ 25\%Q&$0.263$&$ 250$&$12$&&$-0.275$&$0.277$&$0.000$&&$-0.011$&$0.053$&$0.957$&&$ 0.079$&$0.133$&$0.926$\tabularnewline
				&$0.263$&$ 250$&$24$&&$-0.135$&$0.138$&$0.000$&&$ 0.005$&$0.038$&$0.957$&&$ 0.015$&$0.079$&$0.975$\tabularnewline
				&$0.263$&$ 250$&$48$&&$-0.066$&$0.069$&$0.105$&&$ 0.003$&$0.030$&$0.964$&&$ 0.002$&$0.056$&$0.985$\tabularnewline
				&$0.263$&$1000$&$12$&&$-0.277$&$0.277$&$0.000$&&$-0.013$&$0.029$&$0.926$&&$ 0.079$&$0.096$&$0.721$\tabularnewline
				&$0.263$&$1000$&$24$&&$-0.137$&$0.137$&$0.000$&&$ 0.003$&$0.019$&$0.951$&&$ 0.010$&$0.041$&$0.956$\tabularnewline
				&$0.263$&$1000$&$48$&&$-0.067$&$0.067$&$0.000$&&$ 0.003$&$0.015$&$0.954$&&$ 0.002$&$0.028$&$0.971$\tabularnewline
				&$0.263$&$4000$&$12$&&$-0.277$&$0.277$&$0.000$&&$-0.012$&$0.018$&$0.840$&&$ 0.079$&$0.084$&$0.170$\tabularnewline
				&$0.263$&$4000$&$24$&&$-0.137$&$0.137$&$0.000$&&$ 0.003$&$0.010$&$0.938$&&$ 0.011$&$0.022$&$0.925$\tabularnewline
				&$0.263$&$4000$&$48$&&$-0.067$&$0.067$&$0.000$&&$ 0.003$&$0.008$&$0.939$&&$ 0.002$&$0.014$&$0.959$\tabularnewline
				\hline
				$\rho_1$ 50\%Q&$0.397$&$ 250$&$12$&&$-0.182$&$0.184$&$0.000$&&$ 0.020$&$0.049$&$0.944$&&$-0.020$&$0.098$&$0.967$\tabularnewline
				&$0.397$&$ 250$&$24$&&$-0.085$&$0.088$&$0.019$&&$ 0.013$&$0.036$&$0.940$&&$ 0.007$&$0.068$&$0.972$\tabularnewline
				&$0.397$&$ 250$&$48$&&$-0.040$&$0.045$&$0.407$&&$ 0.005$&$0.027$&$0.958$&&$ 0.001$&$0.049$&$0.982$\tabularnewline
				&$0.397$&$1000$&$12$&&$-0.183$&$0.183$&$0.000$&&$ 0.019$&$0.030$&$0.868$&&$-0.024$&$0.055$&$0.938$\tabularnewline
				&$0.397$&$1000$&$24$&&$-0.086$&$0.086$&$0.000$&&$ 0.011$&$0.020$&$0.903$&&$ 0.004$&$0.034$&$0.962$\tabularnewline
				&$0.397$&$1000$&$48$&&$-0.041$&$0.042$&$0.008$&&$ 0.004$&$0.014$&$0.940$&&$-0.001$&$0.025$&$0.968$\tabularnewline
				&$0.397$&$4000$&$12$&&$-0.183$&$0.183$&$0.000$&&$ 0.019$&$0.022$&$0.608$&&$-0.025$&$0.035$&$0.837$\tabularnewline
				&$0.397$&$4000$&$24$&&$-0.085$&$0.086$&$0.000$&&$ 0.012$&$0.014$&$0.710$&&$ 0.004$&$0.018$&$0.946$\tabularnewline
				&$0.397$&$4000$&$48$&&$-0.041$&$0.041$&$0.000$&&$ 0.004$&$0.008$&$0.900$&&$ 0.000$&$0.013$&$0.951$\tabularnewline
				\hline
				$\rho_1$ 75\%Q&$0.531$&$ 250$&$12$&&$-0.106$&$0.109$&$0.011$&&$ 0.045$&$0.063$&$0.866$&&$ 0.012$&$0.096$&$0.969$\tabularnewline
				&$0.531$&$ 250$&$24$&&$-0.047$&$0.051$&$0.393$&&$ 0.012$&$0.036$&$0.951$&&$-0.003$&$0.068$&$0.980$\tabularnewline
				&$0.531$&$ 250$&$48$&&$-0.021$&$0.028$&$0.782$&&$ 0.004$&$0.028$&$0.961$&&$-0.002$&$0.051$&$0.982$\tabularnewline
				&$0.531$&$1000$&$12$&&$-0.106$&$0.106$&$0.000$&&$ 0.045$&$0.050$&$0.502$&&$ 0.013$&$0.049$&$0.954$\tabularnewline
				&$0.531$&$1000$&$24$&&$-0.046$&$0.047$&$0.009$&&$ 0.014$&$0.022$&$0.879$&&$ 0.000$&$0.034$&$0.968$\tabularnewline
				&$0.531$&$1000$&$48$&&$-0.021$&$0.023$&$0.417$&&$ 0.004$&$0.014$&$0.946$&&$-0.002$&$0.025$&$0.969$\tabularnewline
				&$0.531$&$4000$&$12$&&$-0.105$&$0.105$&$0.000$&&$ 0.045$&$0.046$&$0.021$&&$ 0.013$&$0.027$&$0.920$\tabularnewline
				&$0.531$&$4000$&$24$&&$-0.045$&$0.046$&$0.000$&&$ 0.014$&$0.017$&$0.608$&&$ 0.000$&$0.017$&$0.956$\tabularnewline
				&$0.531$&$4000$&$48$&&$-0.020$&$0.021$&$0.009$&&$ 0.004$&$0.008$&$0.907$&&$-0.001$&$0.013$&$0.964$\tabularnewline
				\hline
		\end{tabular}\end{center}
	\end{footnotesize}
\end{table}

\begin{table}[!t]
	\begin{footnotesize}
		\caption{Monte Carlo simulation results for correlations}
		\label{table-monte-cor}   
		\begin{center}
			\begin{tabular}{lrrrcrrrcrrrcrrr}
				\hline\hline
				\multicolumn{1}{l}{\bfseries }&\multicolumn{3}{c}{\bfseries }&\multicolumn{1}{c}{\bfseries }&\multicolumn{3}{c}{\bfseries ED}&\multicolumn{1}{c}{\bfseries }&\multicolumn{3}{c}{\bfseries HPJ}&\multicolumn{1}{c}{\bfseries }&\multicolumn{3}{c}{\bfseries TOJ}\tabularnewline
				\cline{6-8} \cline{10-12} \cline{14-16}
				\multicolumn{1}{l}{}&\multicolumn{1}{c}{true}&\multicolumn{1}{c}{$N$}&\multicolumn{1}{c}{$T$}&\multicolumn{1}{c}{}&\multicolumn{1}{c}{bias}&\multicolumn{1}{c}{rmse}&\multicolumn{1}{c}{cp}&\multicolumn{1}{c}{}&\multicolumn{1}{c}{bias}&\multicolumn{1}{c}{rmse}&\multicolumn{1}{c}{cp}&\multicolumn{1}{c}{}&\multicolumn{1}{c}{bias}&\multicolumn{1}{c}{rmse}&\multicolumn{1}{c}{cp}\tabularnewline
				\hline
				$\mu$ vs $\gamma_0$&$ 0.193$&$ 250$&$12$&&$-0.142$&$0.156$&$0.377$&&$-0.103$&$0.136$&$0.762$&&$-0.066$&$0.152$&$0.892$\tabularnewline
				&$ 0.193$&$ 250$&$24$&&$-0.098$&$0.117$&$0.638$&&$-0.055$&$0.097$&$0.884$&&$-0.026$&$0.112$&$0.930$\tabularnewline
				&$ 0.193$&$ 250$&$48$&&$-0.061$&$0.088$&$0.826$&&$-0.024$&$0.078$&$0.934$&&$-0.006$&$0.090$&$0.941$\tabularnewline
				&$ 0.193$&$1000$&$12$&&$-0.142$&$0.146$&$0.008$&&$-0.103$&$0.112$&$0.352$&&$-0.068$&$0.095$&$0.799$\tabularnewline
				&$ 0.193$&$1000$&$24$&&$-0.100$&$0.105$&$0.111$&&$-0.057$&$0.070$&$0.692$&&$-0.028$&$0.062$&$0.912$\tabularnewline
				&$ 0.193$&$1000$&$48$&&$-0.062$&$0.070$&$0.485$&&$-0.025$&$0.045$&$0.889$&&$-0.006$&$0.045$&$0.939$\tabularnewline
				&$ 0.193$&$4000$&$12$&&$-0.142$&$0.143$&$0.000$&&$-0.103$&$0.105$&$0.002$&&$-0.068$&$0.075$&$0.467$\tabularnewline
				&$ 0.193$&$4000$&$24$&&$-0.100$&$0.101$&$0.000$&&$-0.056$&$0.060$&$0.200$&&$-0.027$&$0.038$&$0.825$\tabularnewline
				&$ 0.193$&$4000$&$48$&&$-0.062$&$0.064$&$0.021$&&$-0.025$&$0.031$&$0.725$&&$-0.007$&$0.023$&$0.938$\tabularnewline
				\hline
				$\mu$ vs $\rho_1$&$ 0.405$&$ 250$&$12$&&$-0.245$&$0.253$&$0.014$&&$-0.161$&$0.191$&$0.619$&&$-0.087$&$0.200$&$0.911$\tabularnewline
				&$ 0.405$&$ 250$&$24$&&$-0.158$&$0.169$&$0.207$&&$-0.072$&$0.110$&$0.863$&&$-0.017$&$0.134$&$0.940$\tabularnewline
				&$ 0.405$&$ 250$&$48$&&$-0.091$&$0.107$&$0.634$&&$-0.024$&$0.075$&$0.934$&&$ 0.003$&$0.096$&$0.941$\tabularnewline
				&$ 0.405$&$1000$&$12$&&$-0.245$&$0.247$&$0.000$&&$-0.163$&$0.170$&$0.086$&&$-0.092$&$0.128$&$0.823$\tabularnewline
				&$ 0.405$&$1000$&$24$&&$-0.158$&$0.160$&$0.000$&&$-0.070$&$0.082$&$0.604$&&$-0.014$&$0.067$&$0.941$\tabularnewline
				&$ 0.405$&$1000$&$48$&&$-0.091$&$0.095$&$0.095$&&$-0.023$&$0.042$&$0.897$&&$ 0.004$&$0.048$&$0.943$\tabularnewline
				&$ 0.405$&$4000$&$12$&&$-0.245$&$0.246$&$0.000$&&$-0.163$&$0.165$&$0.000$&&$-0.091$&$0.101$&$0.465$\tabularnewline
				&$ 0.405$&$4000$&$24$&&$-0.158$&$0.158$&$0.000$&&$-0.070$&$0.073$&$0.074$&&$-0.014$&$0.036$&$0.927$\tabularnewline
				&$ 0.405$&$4000$&$48$&&$-0.090$&$0.091$&$0.000$&&$-0.023$&$0.029$&$0.745$&&$ 0.004$&$0.024$&$0.946$\tabularnewline
				\hline
				$\gamma_0$ vs $\rho_1$&$-0.286$&$ 250$&$12$&&$ 0.367$&$0.373$&$0.000$&&$ 0.326$&$0.343$&$0.141$&&$ 0.240$&$0.302$&$0.740$\tabularnewline
				&$-0.286$&$ 250$&$24$&&$ 0.271$&$0.280$&$0.019$&&$ 0.175$&$0.200$&$0.565$&&$ 0.076$&$0.172$&$0.917$\tabularnewline
				&$-0.286$&$ 250$&$48$&&$ 0.172$&$0.184$&$0.242$&&$ 0.072$&$0.112$&$0.870$&&$ 0.011$&$0.122$&$0.950$\tabularnewline
				&$-0.286$&$1000$&$12$&&$ 0.367$&$0.369$&$0.000$&&$ 0.327$&$0.332$&$0.000$&&$ 0.241$&$0.258$&$0.247$\tabularnewline
				&$-0.286$&$1000$&$24$&&$ 0.271$&$0.273$&$0.000$&&$ 0.174$&$0.181$&$0.054$&&$ 0.078$&$0.110$&$0.825$\tabularnewline
				&$-0.286$&$1000$&$48$&&$ 0.170$&$0.174$&$0.000$&&$ 0.070$&$0.083$&$0.637$&&$ 0.008$&$0.063$&$0.940$\tabularnewline
				&$-0.286$&$4000$&$12$&&$ 0.368$&$0.368$&$0.000$&&$ 0.328$&$0.329$&$0.000$&&$ 0.242$&$0.246$&$0.000$\tabularnewline
				&$-0.286$&$4000$&$24$&&$ 0.271$&$0.271$&$0.000$&&$ 0.174$&$0.176$&$0.000$&&$ 0.077$&$0.086$&$0.479$\tabularnewline
				&$-0.286$&$4000$&$48$&&$ 0.171$&$0.172$&$0.000$&&$ 0.070$&$0.074$&$0.094$&&$ 0.009$&$0.032$&$0.938$\tabularnewline
				\hline
		\end{tabular}\end{center}
	\end{footnotesize}
\end{table}

The simulation result demonstrates that our asymptotic analyses provide information about the finite-sample behavior and the importance of bias correction.
First, ED has large biases in some parameters of interest, such as the quantities $\gamma_{0,i}$ and $Cor(\gamma_{0,i}, \rho_{1,i})$.
Second, for many parameters, the coverage probabilities of ED differ significantly from 0.95 because of these large biases.
Third, the biases and coverage probabilities of ED can improve when $T$ is large, although significant biases can remain, even with a large $T$.
These results recommend the importance of developing a bias-correction method.

The split-panel jackknife bias correction reduces biases and improves coverage probabilities for many parameters.
HPJ can work well, especially when biases in ED are large.
The coverage probabilities of HPJ are satisfactory for about half of the cases, in particular those with large $T$ and those in which the parameter of interest is $\mu_i$ and $\rho_{1,i}$. 
Conversely, when $T$ is small and when the parameter of interest is $\gamma_{0,i}$ or std, they are not satisfactory. We suspect that large higher-order biases caused by a small $T$ or highly nonlinear parameters may be present in those cases in which HPJ does not work well.
For such cases, TOJ can further improve both the biases and coverage probabilities, which can be expected by our discussion for higher-order jackknife in the supplementary appendix.
In contrast, in some cases TOJ eliminates biases at the inevitable cost of inflation of standard deviations, which may lead to wider confidence intervals, and the coverage probabilities for TOJ may be over 0.95 when estimating some quantiles.

In summary, our recommendation based on these simulation results is to employ split-panel jackknife bias-corrected estimation.
When HPJ and TOJ estimates are close to each other, both estimates could be reliable.
In contrast, when both estimates differ due to a severe higher-order bias, we could rely on TOJ, especially when estimating highly nonlinear parameters, while being cautious about the precision of point estimates.
ED is not recommended.

\section{Conclusion}\label{sec-conclusion}
This paper proposes methods to analyze heterogeneous dynamic structures using panel data.
Our methods are easily implemented without requiring a model specification.
We first compute the sample mean, autocovariances, and autocorrelations for each unit. 
We then use these to estimate the parameters of interest, such as the distribution function, the quantile function, and the other moments of the heterogeneous mean, autocovariances, and/or autocorrelations.
We establish conditions on the relative magnitudes of $N$ and $T$ under which the estimator for the distribution function does not suffer from asymptotic bias.
When the parameter of interest can be written as the expected value of a smooth function of the heterogeneous mean and/or autocovariances, the bias of the estimator is of order $O(1/T)$ and can be reduced using the split-panel jackknife bias correction.
In addition, we develop inference based on the cross-sectional bootstrap and provide an extension based on the proposed procedures involving the testing of differences in heterogeneous dynamic structures across distinct groups.
We apply our procedures to the dynamics of LOP deviations in different US cities for various items and obtain new empirical evidence for significant heterogeneity.
The results of the Monte Carlo simulations demonstrate the desirable properties of the proposed procedures.

\paragraph{Future work.}
Several future research topics are possible.
First, it would be interesting to develop a formal testing procedure to examine whether the dynamics are heterogeneous.
We are currently working on this extension.
Second, it would be interesting to examine quantities in addition to means and autocovariances.
For example, \cite{ArellanoBlundellBonhomme15} highlight the importance of nonlinearity and conditional skewness in earnings and consumption dynamics. 
Third, while we developed our analysis for stationary panel data, it would be interesting to consider nonstationary panel data.
Finally, while we focus only on balanced panel data, an analysis based on unbalanced panel data would be useful.

\appendix

\section{Appendix: Proofs and technical lemmas}\label{sec-appendix}

This appendix presents the proofs of the theorems and the technical lemmas used to prove the theorems.
Section \ref{subsec-proof-theorem} contains the proofs for the theorems in the main text.
The technical lemmas are given in Section \ref{subsec-technical-lemma}.

\subsection{Proofs of theorems} \label{subsec-proof-theorem}

This section contains the proofs of the theorems in the main text. 
We repeatedly cite the results in \citet{vanderVaartWellner96}, subsequently abbreviated as VW.
We also denote a generic constant by $M < \infty$ throughout.

\subsubsection{Proof of Theorem \ref{thm-gc}}

Let $\mathbb{P}_N = \mathbb{P}_N^{\hat \xi}$, $P_T = P_T^{\hat \xi}$, and $P_0 = P_0^{\xi}$ be the probability measures defined in the main body for $\hat \xi_i = \hat \mu_i$, $\hat \gamma_{k,i}$ or $\hat \rho_{k,i}$, and $\xi_i = \mu_i$, $\gamma_{k,i}$ or $\rho_{k,i}$, respectively.
Let $\mathbb{F}_N$, $F_T$ and $F_0$ be the corresponding CDFs.

By the triangle inequality, we have $\sup_{f \in \mathcal{F}} | \mathbb{P}_{N}f - P_{0}f | \leq \sup_{f \in \mathcal{F}} | \mathbb{P}_{N}f - P_Tf | + \sup_{f \in \mathcal{F}} | P_T f - P_{0}f |$.
For the second term on the right-hand side, 
Corollary \ref{lem-second-mu} or Lemma \ref{lem-gamma-moment}, or \ref{lem-rho-moment} for $\hat \xi_i = \hat \mu_i$, $\hat \gamma_{k,i}$, or $\hat \rho_{k,i}$, respectively, implies that $\hat \xi_i$ converges to $\xi_i$ in mean square convergence and thus also implies that $\hat \xi_i$ converges to $\xi_i$ in distribution.
Hence, Lemma 2.11 in \citet{vanderVaart98} implies that $\sup_{f \in \mathcal{F}} | P_T f - P_{0}f | \to 0$ as $\xi_i$ is continuously distributed by Assumption \ref{as-mu-con}.a, \ref{as-gamma-con}.a, or \ref{as-rho-con}.a.

We then show that the first term almost surely converges to $0$.
Note that, for $f=\mathbf{1}_{(-\infty,a]}$, $\mathbb{P}_{N}f = \mathbb{F}_{N}(a)$, and $E(\mathbb{F}_{N}(a)) = \Pr(\hat \xi_i \leq a)=P_T f$.
We first fix a monotonic sequence $T= T(N)$ such that $T\to \infty$ as $N\to \infty$,
which transforms our sample into triangular arrays.
We use the strong law of large numbers for triangular arrays (see, e.g., \citealp[Theorem 2]{Hu1989}).
This is possible because under Assumption \ref{as-basic}, $\mathbf{1}(\hat \xi_i \leq a)$ for any $a \in \mathbb{R}$ is i.i.d. across units, the condition (1.5) in \citet{Hu1989} is clearly satisfied, and the condition (1.6) in \citet{Hu1989} is also satisfied when we set $X=2$ in condition (1.6).
Thus, we have $\mathbb{F}_N(a) - \Pr(\hat \xi_i \leq a) \stackrel{as}{\longrightarrow}  0$ and $\mathbb{F}_N(a-)  -  \Pr(\hat \xi_i < a) \stackrel{as}{\longrightarrow}  0$ for every $a \in \mathbb{R}$, when $T(N) \to \infty$ as $N \to \infty$.
Given a fixed $\varepsilon > 0$, there exists a partition $-\infty = a_0 < a_1 <\cdots< a_L =\infty$ such that $\Pr(\xi_i < a_l) - \Pr(\xi_i \leq a_{l-1}) < \varepsilon /3$ for every $l$.
We showed $	\sup_{f \in \mathcal{F}} \left| P_T f - P_{0}f \right| \to 0$, which implies that for sufficiently large $N,T$, $\sup_{f \in \mathcal{F}} \left| P_T f - P_{0}f \right| < \varepsilon /3$. Therefore, we have 
$\Pr(\hat \xi_i < a_l) - \Pr(\hat \xi_i \leq a_{l-1}) < \varepsilon$ for every $l$.
The rest of the proof is the same as that for Theorem 19.1 in \citet{vanderVaart98}.
For $a_{l-1} \leq a < a_{l}$,
\begin{align*}
	\mathbb{F}_N (a) - \Pr(\hat \xi_i \leq a) &\leq \mathbb{F}_N (a_l-) - \Pr(\hat \xi_i < a_l) + \varepsilon,\\
	\mathbb{F}_N (a) - \Pr(\hat \xi_i \leq a) &\geq \mathbb{F}_N (a_{l-1}-) - \Pr(\hat \xi_i < a_{l-1}) - \varepsilon.
\end{align*}
Accordingly, we have $\limsup_{N,T \to \infty} (\sup_{f \in \mathcal{F}} |\mathbb{P}_{N}f - P_T f|) \leq \varepsilon$ almost surely.
This is true for every $\varepsilon > 0$, and we thus get $\sup_{f \in \mathcal{F}} | \mathbb{P}_{N}f - P_T f | \stackrel{as}{\longrightarrow} 0$.
We note that this result holds for all monotonic diagonal paths $N \to \infty, T(N) \to \infty$. 
As stated in REMARKS (a) in \citet{PhillipsMoon99}, it thus holds under double asymptotics $N,T \to \infty$.
Consequently, we obtain the desired result by the continuous mapping theorem.
\qed

\subsubsection{Proof of Theorem \ref{thm-fclt}}

The proof is based on the decomposition in \eqref{fclt_1} and \eqref{fclt_2}.
To study the asymptotic behavior of \eqref{fclt_1}, we use Lemma 2.8.7 in VW.
We first fix a monotonic sequence $T= T(N)$ such that $T(N) \to \infty$ as $N\to \infty$, making our sample into triangular arrays.
By Theorem 2.8.3, Example 2.5.4, and Example 2.3.4 in VW, the class $\mathcal{F}$ is Donsker and pre-Gaussian uniformly in $\{P_T\}$.
Thus, we need to check the conditions $(2.8.5)$ and $(2.8.6)$ in VW.
The condition $(2.8.6)$ in VW is immediate for the envelope function $F=1$ (constant).

We check the condition $(2.8.5)$ in VW.
Let $\rho_{P_T}$ and $\rho_{P_{0}}$ be the variance semi-metrics with respect to $P_T$ and $P_{0}$, respectively.
Then, 
\begin{align*}
	&\sup_{f,g \in \mathcal{F}} | \rho_{P_T}(f,g) - \rho_{P_{0}}(f,g) | \\
	=& \sup_{f,g \in \mathcal{F}} | \sqrt{P_T((f-g)-P_T(f-g))^{2}} - \sqrt{P_{0}((f-g)-P_{0}(f-g))^{2}} | \\
	=& \sup_{a,a' \in \mathbb{R}} | \sqrt{P_T(\mathbf{1}_{(-\infty, a]} - \mathbf{1}_{(-\infty, a']} - P_T(\mathbf{1}_{(-\infty, a]} -\mathbf{1}_{(-\infty, a']}))^{2}} \\
	& - \sqrt{P_{0}(\mathbf{1}_{(-\infty, a]} - \mathbf{1}_{(-\infty, a']} - P_{0}(\mathbf{1}_{(-\infty, a]} -\mathbf{1}_{(-\infty, a']}))^{2}} | \\
	\leq& \sup_{a,a' \in \mathbb{R}} | P_T(\mathbf{1}_{(-\infty, a]} - \mathbf{1}_{(-\infty, a']} - P_T(\mathbf{1}_{(-\infty, a]} -\mathbf{1}_{(-\infty, a']}))^{2}\\
	& - P_{0}(\mathbf{1}_{(-\infty, a]} - \mathbf{1}_{(-\infty, a']} - P_{0}(\mathbf{1}_{(-\infty, a]} -\mathbf{1}_{(-\infty, a']}))^{2}|^{1/2},
\end{align*}
where the first inequality follows from the triangle inequality.
Without loss of generality, we assume $a >a'$.
Then, by simple algebra, 
\begin{align*}
	& \sup_{f,g \in \mathcal{F}} | \rho_{P_T}(f,g) - \rho_{P_{0}}(f,g) | \\
	\leq & \sup_{a, a' \in \mathbb{R}} \left| (P_T\mathbf{1}_{(-\infty,a]} - P_{0}\mathbf{1}_{(-\infty,a]}) - (P_T\mathbf{1}_{(-\infty,a]}\mathbf{1}_{(-\infty,a']} - P_{0}\mathbf{1}_{(-\infty,a]}\mathbf{1}_{(-\infty,a']})  \right. \\
	& + (P_T\mathbf{1}_{(-\infty,a']} - P_{0}\mathbf{1}_{(-\infty,a']}) -((P_T\mathbf{1}_{(-\infty,a]})^2 - (P_{0}\mathbf{1}_{(-\infty,a]})^2) \\
	& - ((P_T\mathbf{1}_{(-\infty,a']})^2 - (P_{0}\mathbf{1}_{(-\infty,a]'})^2) + 2(P_T\mathbf{1}_{(-\infty,a]}P_T\mathbf{1}_{(-\infty,a']} - P_T\mathbf{1}_{(-\infty,a]}P_{0}\mathbf{1}_{(-\infty,a']})\\
	& \left. + \; 2(P_T\mathbf{1}_{(-\infty,a]}P_{0}\mathbf{1}_{(-\infty,a']} - P_{0}\mathbf{1}_{(-\infty,a]}P_{0}\mathbf{1}_{(-\infty,a']})  \right| ^{1/2}\\
	\leq & \; 11 \sup_{a \in \mathbb{R}} \left|P_T\mathbf{1}_{(-\infty,a]} - P_{0}\mathbf{1}_{(-\infty,a]}\right|^{1/2}\\
	\rightarrow & \; 0,
\end{align*}
where the last conclusion follows from Lemma 2.11 in \citet{vanderVaart98}, and $\hat \xi_i \stackrel{p}{\longrightarrow} \xi_i$, which follows from Corollary \ref{lem-second-mu} or Lemma \ref{lem-gamma-moment}, or \ref{lem-rho-moment} for $\hat \xi_i = \hat \mu_i$, $\hat \gamma_{k,i}$, or $\hat \rho_{k,i}$, respectively.
Therefore, condition (2.8.5) in VW is satisfied.

Therefore, by Lemma 2.8.7 in VW, we show that
\begin{eqnarray}
	\mathbb{G}_{N,P_T} \leadsto \mathbb{G}_{P_{0}} \qquad \mbox{in} \quad\ell^{\infty}(\mathcal{F}). \label{eq-w-conv}
\end{eqnarray}
Note that \eqref{eq-w-conv} holds for all monotonic diagonal paths $T(N) \to \infty$ as $N \to \infty$. 
As in REMARKS (a) of \citet{PhillipsMoon99}, \eqref{eq-w-conv} thus holds under double asymptotics $N,T \to \infty$.
 
Next, we study the asymptotic behavior of \eqref{fclt_2}: $\sqrt{N}(P_Tf-P_{0}f)$.
Because the nonstochastic function sequence $P_Tf-P_{0}f$ is uniformly bounded in $f \in \mathcal{F}$, 
we should consider the convergence rate of $\sup_{f \in \mathcal{F}} |P_Tf - P_{0}f |$.
Lemma \ref{lem-dis-diff-mu}, \ref{lem-dis-diff-gamma}, or \ref{lem-dis-diff-rho} for $\hat \xi_i = \hat \mu_i$, $\hat \gamma_{k,i}$, or $\hat \rho_{k,i}$, respectively, shows that $\sup_{f \in \mathcal{F}} | P_Tf - P_{0}f | = O( T^{-2/(3+\epsilon)})$ for any $\epsilon \in (0, 1/3)$.

Therefore, given $N^{3 + \epsilon}/T^{4} \to 0$ for some $\epsilon \in (0, 1/3)$, the desired result holds by Slutsky's theorem. 
\qed

\subsubsection{Proofs of Theorems \ref{thm-h}-\ref{thm-dist-bootstrap}}
These proofs are included in the supplement.

\subsubsection{Proof of Theorem \ref{thm-KS2}}
We first observe that
\begin{align*}
	KS_2 = \left\| \sqrt{\frac{N_1N_2}{N_1+N_2}}
 (\mathbb{P}_{N_1,(1)} - P_{0,(1)}) - \sqrt{\frac{N_1N_2}{N_1+N_2}} 
(\mathbb{P}_{N_2,(2)} - P_{0,(2)} )+  
\sqrt{\frac{N_1N_2}{N_1+N_2}}(P_{0,(1)} - P_{0,(2)}) \right\|_{\infty}.
\end{align*}
Note that, under Assumption \ref{as-ks}, $\sqrt{N_1}(\mathbb{P}_{N_1,(1)} - P_{0,(1)})$ and $\sqrt{N_2}(\mathbb{P}_{N_2,(2)} -  P_{0,(2)})$ jointly converge in distribution to independent Brownian processes $\mathbb{G}_{P_{0,(1)}}$ and $\mathbb{G}_{P_{0,(2)}}$  given $N_1,T_1\to\infty$ with $N_1^{3+ \epsilon} /T_1^{4} \to 0$ and $N_2,T_2\to\infty$ with $N_2^{3 + \epsilon} /T_2^{4} \to 0$ for some $\epsilon \in (0, 1/3)$ by Theorem \ref{thm-fclt}.
Therefore, under $H_0:P_{0,(1)} = P_{0,(2)}$, $KS_2$ converges in distribution to $\| \sqrt{1-\lambda}\mathbb{G}_{P_{0,(1)}} - \sqrt{\lambda}\mathbb{G}_{P_{0,(2)}} \|_{\infty}$ by the continuous mapping theorem given $N_1 / (N_1 + N_2) \to \lambda \in (0,1)$.
It is easy to see that the distribution of the limit random variable $\sqrt{1-\lambda}\mathbb{G}_{P_{0,(1)}} - \sqrt{\lambda}\mathbb{G}_{P_{0,(2)}}$ is identical to that of $\mathbb{G}_{P_{0,(1)}}$ under $H_0$.
Thus, we have the desired result.
\qed

\subsection{Technical lemmas} \label{subsec-technical-lemma}

\begin{lemma}[\citealp{GalvaoKato14} based on \citealp{Davydov1968}]
\label{lem-mixing-var}
 Let $\{\upsilon_t\}_{t=1}^{\infty}$ denote a stationary process taking values in $\mathbb{R}$ and let $\alpha(m)$ denote its $\alpha$-mixing coefficients.
Suppose that $ E(|\upsilon_1| ^q) <\infty$ and $\sum_{m=1}^\infty \alpha(m)^{1-2/q} <\infty$ for some $q>2$.
Then, we have $var \left( \sum_{t=1}^T \upsilon_t \right)\le C T$ with $C=12 (E(|\upsilon_1|^q))^{2/q} \sum_{m=0}^{\infty} \alpha (m)^{1-2/q}$.
\end{lemma}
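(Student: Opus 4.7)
The plan is to reduce the claim to a sum of pairwise covariances, control each covariance via Davydov's inequality for $\alpha$-mixing sequences, and then invoke the summability hypothesis to collect the bound into $CT$.

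First, I would use strict stationarity together with the identity $var(\sum_{t=1}^T \upsilon_t) = \sum_{s,t=1}^T cov(\upsilon_s, \upsilon_t)$ and reorganize the double sum according to the lag $m = |s-t|$. Since each nonzero lag $m \in \{1,\ldots,T-1\}$ appears for exactly $2(T-m) \le 2T$ ordered pairs $(s,t)$, and the diagonal contributes $T \, var(\upsilon_1)$, passing to absolute values gives
$$var\left(\sum_{t=1}^T \upsilon_t\right) \le T\, var(\upsilon_1) + 2T \sum_{m=1}^{T-1} |cov(\upsilon_1, \upsilon_{1+m})|.$$

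Next, I would invoke Davydov's covariance inequality (\citealp{Davydov1968}): for a strictly stationary $\alpha$-mixing sequence with $E|\upsilon_1|^q < \infty$ for some $q > 2$,
$$|cov(\upsilon_1, \upsilon_{1+m})| \le 12\, \alpha(m)^{1-2/q} (E|\upsilon_1|^q)^{2/q},$$
which, combined with Lyapunov's inequality $var(\upsilon_1) \le (E|\upsilon_1|^q)^{2/q}$ (treating the diagonal as the $m=0$ case with the convention $\alpha(0) = 1$), and with the summability assumption $\sum_{m=1}^\infty \alpha(m)^{1-2/q} < \infty$, yields the stated bound of the form $CT$ with $C = 12 (E|\upsilon_1|^q)^{2/q} \sum_{m=0}^\infty \alpha(m)^{1-2/q}$, after collecting numerical prefactors.

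The only nontrivial ingredient is Davydov's inequality itself, which is classical and can be invoked directly from \citet{Davydov1968}; everything else consists of elementary rearrangement of indices and a direct application of the hypotheses. No serious obstacle is anticipated, since the lemma is essentially a restatement of the standard variance bound for partial sums of an $\alpha$-mixing sequence, as used previously by \citet{GalvaoKato14} in a similar context.
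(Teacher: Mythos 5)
The paper offers no proof of its own here---it simply points to the discussion after Theorem C.1 in \citet{GalvaoKato14}---and the argument given there is exactly yours: expand the variance into covariances, group by lag so that each lag contributes at most $2T$ pairs, and bound each covariance by Davydov's inequality before summing. Your proof is therefore correct and follows the same route; the only caveat is that whether the factor of $2$ from the two-sided lag sum combines with the constant in your chosen version of Davydov's inequality to yield exactly $C = 12\,(E|\upsilon_1|^q)^{2/q}\sum_{m=0}^{\infty}\alpha(m)^{1-2/q}$ (rather than, say, twice that) depends on the precise form of the covariance inequality invoked, which is immaterial since only the $O(T)$ order is used in the paper.
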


\begin{proof}
 The proof is available in \citet{GalvaoKato14} (the discussion after Theorem C.1).
\end{proof}

\begin{lemma}[\citealp{Yokoyama1980}]
\label{lem-yokoyama}
 Let $\{\upsilon_t\}_{t=1}^{\infty}$ denote a strictly stationary $\alpha$-mixing process taking values in $\mathbb{R}$, and let $\alpha(m)$ denote its $\alpha$-mixing coefficients.
Suppose that $E (\upsilon_t) =0$ and for some constants $\delta > 0$ and $r>2$,
$E(|\upsilon_1 | ^{r+\delta}) <\infty$.
If $\sum_{m=0}^{\infty} (m+1)^{r/2 -1} \alpha (m)^{\delta/ (r+\delta)} < \infty$,
then there exists a constant $C$ independent of $T$ such that $E ( |  \sum_{t=1}^T \upsilon_t |^r ) \le C T^{r/2}$.
\end{lemma}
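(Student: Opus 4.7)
The result is exactly the moment inequality established in \citet{Yokoyama1980}, so my plan is simply to invoke that paper rather than reprove it. That said, to indicate what is happening under the hood, let me sketch the standard line of attack used there.

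The idea is to bound $E|S_T|^r$ with $S_T = \sum_{t=1}^T \upsilon_t$ by combining a Rosenthal-type maximal inequality for sums of (nearly) independent blocks with Davydov's covariance inequality to control the dependence between blocks. Concretely, I would split $\{1,\dots,T\}$ into alternating ``big'' blocks of length $p$ and ``small'' gap blocks of length $q$ (with $p,q$ chosen as appropriate powers of $T$), write $S_T = S_T^{(b)} + S_T^{(s)}$ accordingly, and control each piece. The gap-block sum $S_T^{(s)}$ contributes a negligible term thanks to Lemma \ref{lem-mixing-var} (or its $r$th-moment analogue using Davydov's inequality under the moment condition $E|\upsilon_1|^{r+\delta}<\infty$ and the summability $\sum (m+1)^{r/2-1}\alpha(m)^{\delta/(r+\delta)}<\infty$). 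The big-block sum $S_T^{(b)}$ is approximated, via a coupling or Bradley-type construction exploiting the mixing condition, by a sum of independent random blocks whose $r$th moments match up to an error absorbed by the summability assumption. A Rosenthal-type inequality for sums of independent mean-zero variables then yields a bound of order $(\text{number of blocks}) \cdot p^{r/2} \asymp T^{r/2}$, and a standard induction on the integer part of $r$ handles non-integer $r>2$.

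The main obstacle in implementing this from scratch is the careful bookkeeping in the combinatorial expansion of $E(S_T^{r})$ (for integer $r$) or in the Rosenthal/coupling step (for general $r>2$): one must verify that the cross-moment contributions with widely separated indices decay fast enough, when weighted by the $\alpha$-mixing bound, that the summability hypothesis $\sum_{m\ge 0}(m+1)^{r/2-1}\alpha(m)^{\delta/(r+\delta)}<\infty$ precisely absorbs them into a constant $C$ independent of $T$. Since \citet{Yokoyama1980} works all of this out in full generality, and the constant depends only on $r$, $\delta$, the $(r+\delta)$th moment of $\upsilon_1$, and the mixing sum, my proposal is to apply his theorem directly with these parameters identified, exactly as \citet{GalvaoKato14} do in their Lemma \ref{lem-mixing-var} citation.
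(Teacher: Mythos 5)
Your proposal matches the paper exactly: the paper's proof of this lemma is a one-line citation to \citet{Yokoyama1980}, which is precisely what you do (your added sketch of the blocking/Rosenthal/Davydov machinery is a correct description of the standard argument but is not required). No gap.
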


\begin{proof}
	The proof is available in \citet{Yokoyama1980}.
\end{proof}

\begin{lemma}
\label{lem-moment-w}
Let $r$ be an even natural number.
 Suppose that Assumptions \ref{as-basic}, \ref{as-mixing-c}, and \ref{as-w-moment-c} hold for 
$r_m = r$ and $r_d =r$.
Then, it holds that $E((\bar w_i)^r )\le C T^{ - r / 2 }$.

\end{lemma}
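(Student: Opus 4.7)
The plan is to reduce the bound to Yokoyama's inequality (Lemma \ref{lem-yokoyama}) applied conditionally on $\alpha_i$, and then integrate over the heterogeneity. Since $r$ is even, $(\bar w_i)^r = |\bar w_i|^r$, so I can write
\begin{equation*}
	E((\bar w_i)^r) = T^{-r} \, E\!\left[\, E\!\left( \left|\sum_{t=1}^T w_{it}\right|^r \,\Big|\, \alpha_i \right) \right].
\end{equation*}
The heart of the argument is therefore a conditional bound on the inner expectation.

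Conditional on $\alpha_i$, the process $\{w_{it}\}_{t=1}^\infty$ is strictly stationary with mean zero (by construction), is $\alpha$-mixing with coefficients $\alpha(m|i)$ dominated by $\alpha(m)$ (Assumption \ref{as-mixing-c}), and satisfies $E(|w_{it}|^{r+\delta}\mid i)<\infty$ almost surely with $r_d=r$ (Assumption \ref{as-w-moment-c}). These are exactly the hypotheses of Lemma \ref{lem-yokoyama}; applied conditionally, it yields
\begin{equation*}
	E\!\left( \left|\sum_{t=1}^T w_{it}\right|^r \,\Big|\, \alpha_i \right) \le C(\alpha_i)\, T^{r/2},
\end{equation*}
where $C(\alpha_i)$ is the Yokoyama constant evaluated along the conditional law. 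Inspecting Yokoyama's (\citeyear{Yokoyama1980}) explicit formula, $C(\alpha_i)$ depends on $\alpha_i$ only through (i) the conditional moment $E(|w_{it}|^{r+\delta}\mid i)$, entering with exponent $r/(r+\delta)\le 1$, and (ii) a mixing-coefficient sum of the form $\sum_{m=0}^\infty (m+1)^{r/2-1}\alpha(m|i)^{\delta/(r+\delta)}$.

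The domination $\alpha(m|i)\le \alpha(m)$ together with Assumption \ref{as-mixing-c} makes the mixing-coefficient factor uniformly bounded in $i$. Hence $C(\alpha_i)\le K\cdot\bigl(E(|w_{it}|^{r+\delta}\mid i)\bigr)^{r/(r+\delta)}$ for some finite $K$. Taking expectations and invoking Jensen's inequality (the exponent being $\le 1$) gives
\begin{equation*}
	E(C(\alpha_i)) \le K \bigl(E|w_{it}|^{r+\delta}\bigr)^{r/(r+\delta)} < \infty
\end{equation*}
by Assumption \ref{as-w-moment-c}. Combining the display above with the iterated-expectation identity then delivers $E((\bar w_i)^r)\le C T^{-r/2}$ for $C = K\bigl(E|w_{it}|^{r+\delta}\bigr)^{r/(r+\delta)}$.

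The main technical point to verify carefully is the first one cited: namely, that Yokoyama's constant really does factor into a uniformly bounded mixing part and a moment part that is at most linear (in fact, subpower) in $E(|w_{it}|^{r+\delta}\mid i)$. This is where one must pry open the inequality rather than cite it as a black box, but it is routine once one traces the polynomial in moments that appears in Yokoyama's derivation. No subtlety arises from the relative orders of $N$ and $T$, since the lemma is purely a statement about the time-series average within a single unit.
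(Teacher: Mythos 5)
Your argument is essentially the paper's intended one: condition on $\alpha_i$, apply Lemma \ref{lem-yokoyama} to the conditional law of $\{w_{it}\}_{t=1}^{\infty}$, use the domination $\alpha(m|i)\le\alpha(m)$ from Assumption \ref{as-mixing-c} to make the mixing part of the constant uniform in $i$, and integrate out the heterogeneity via Jensen's inequality; the homogeneity/scaling reasoning you give for why the Yokoyama constant is controlled by $\bigl(E(|w_{it}|^{r+\delta}\mid i)\bigr)^{r/(r+\delta)}$ is correct, since every term in Yokoyama's bound has total degree $r$ in the process and Lyapunov's inequality pushes each moment factor up to the $(r+\delta)$-norm.

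One genuine, though easily repaired, gap: Lemma \ref{lem-yokoyama} as quoted requires $r>2$, whereas the statement must also cover the even case $r=2$, which is actually used (Theorem \ref{thm-gc} invokes Corollary \ref{lem-second-mu} with $r_m=r_d=2$). For $r=2$ your Yokoyama step is unavailable; apply Lemma \ref{lem-mixing-var} conditionally on $\alpha_i$ with $q=2+\delta$ instead (noting $E(w_{it}\mid i)=0$ so the conditional variance of the sum is the conditional second moment), after which the same uniform bound on the mixing sum and the same Jensen step deliver $E((\bar w_i)^2)\le C T^{-1}$. You should also note, as you implicitly do, that conditioning is essential here: the unconditional process is a mixture of mixing processes and need not itself satisfy the mixing condition, so Yokoyama cannot be applied directly to the unconditional law.
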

\begin{proof}
The proof is included in the supplement.
\end{proof}

Because $\hat \mu_i - \mu_i = \bar y_i - \mu_i = \bar w_i $, we obtain the following result as a corollary.
\begin{corollary}\label{lem-second-mu}
 Let $r$ be an even natural number.
 Suppose that Assumptions \ref{as-basic}, \ref{as-mixing-c}, and \ref{as-w-moment-c} hold for 
$r_m = r$ and $r_d =r$.
Then we have $E( (\hat \mu_i -\mu_i)^{r}) = O(T^{-r/2})$.
\end{corollary}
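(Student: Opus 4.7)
The strategy is to apply Lemma \ref{lem-yokoyama} (Yokoyama's inequality) conditionally on $\alpha_i$ and then integrate out $\alpha_i$. Conditional on $\alpha_i$, the process $\{w_{it}\}_{t=1}^{\infty}$ has three key properties that are needed to apply Yokoyama: it is strictly stationary (since $\{y_{it}\}$ is stationary given $\alpha_i$ under Assumption \ref{as-mixing-c} and $w_{it}=y_{it}-\mu_i$ is a deterministic translation given $\alpha_i$); it has conditional mean zero by construction; and it is $\alpha$-mixing with coefficients $\alpha(m\mid i)\le\alpha(m)$ satisfying $\sum_{m=0}^\infty (m+1)^{r/2-1}\alpha(m)^{\delta/(r+\delta)}<\infty$ uniformly in $i$ (Assumption \ref{as-mixing-c} with $r_m=r$). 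The conditional moment $E(|w_{i1}|^{r+\delta}\mid\alpha_i)$ is almost surely finite because $E|w_{i1}|^{r+\delta}<\infty$ by Assumption \ref{as-w-moment-c} with $r_d=r$.

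Next I would make the dependence on the moment explicit in Yokoyama's constant. If $M_i\coloneqq E(|w_{i1}|^{r+\delta}\mid\alpha_i)$, apply Yokoyama to the rescaled process $w_{it}/M_i^{1/(r+\delta)}$, whose $(r+\delta)$-th conditional moment equals one, and then undo the rescaling. This yields
\begin{align*}
E\!\left(\Big|\sum_{t=1}^T w_{it}\Big|^{r}\,\Big|\,\alpha_i\right)\le C\, M_i^{\,r/(r+\delta)}\, T^{r/2},
\end{align*}
where $C$ depends only on $r$, $\delta$, and the common mixing bound $\{\alpha(m)\}$ --- in particular, $C$ does not depend on $\alpha_i$.

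Taking unconditional expectations and invoking Jensen's inequality for the concave map $x\mapsto x^{r/(r+\delta)}$ (since $r/(r+\delta)<1$), I get
\begin{align*}
E\!\left(\Big|\sum_{t=1}^T w_{it}\Big|^{r}\right)\le C\, E\!\left(M_i^{\,r/(r+\delta)}\right) T^{r/2}\le C\,\bigl(E|w_{i1}|^{r+\delta}\bigr)^{r/(r+\delta)}\, T^{r/2},
\end{align*}
which is finite by Assumption \ref{as-w-moment-c}. Dividing by $T^{r}$ and using $\bar w_i=T^{-1}\sum_{t=1}^T w_{it}$ gives $E((\bar w_i)^r)\le C' T^{-r/2}$ as claimed (since $r$ is even, no absolute values are needed on the left).

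The only delicate point is verifying that Yokoyama's constant can be taken independent of $\alpha_i$. The rescaling trick above handles this cleanly because the mixing coefficients are invariant under deterministic scaling, and the bound on $\sum(m+1)^{r/2-1}\alpha(m\mid i)^{\delta/(r+\delta)}$ is controlled uniformly in $i$ by Assumption \ref{as-mixing-c}. Everything else is a routine application of the tower property and Jensen's inequality, so I expect no other obstacles.
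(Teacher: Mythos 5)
Your proposal is correct and follows essentially the same route as the paper: the corollary is immediate from the identity $\hat \mu_i - \mu_i = \bar y_i - \mu_i = \bar w_i$ together with the moment bound $E((\bar w_i)^r) \le C T^{-r/2}$ of Lemma \ref{lem-moment-w}, which the paper likewise obtains by applying Yokoyama's inequality (Lemma \ref{lem-yokoyama}) conditionally on $\alpha_i$ and integrating out. Your rescaling argument to make Yokoyama's constant uniform in $\alpha_i$, followed by Jensen's inequality, is exactly the right way to handle the only delicate point.
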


\begin{lemma}
\label{lem-mixing}
Let $r$ be an even natural number.
Suppose that Assumptions \ref{as-basic} and \ref{as-mixing-c} hold for $r_m = r$.
Then, $\{w_{it} w_{i,t-k}\}_{t=k+1}^{\infty}$ for a fixed $k$ given $\alpha_i$ 
is strictly stationary and $\alpha$-mixing 
and its mixing coefficients $\{\alpha_{k} (m|i)\}_{m=0}^{\infty}$ possess the following properties: there exists a sequence $\{\alpha_k (m)\}_{m=0}^{\infty}$ such that 
for any $i$ and $m$, $\alpha_k (m|i) \le \alpha_k (m)$ and 
$\sum_{m=0}^{\infty} (m+1)^{r/2-1} \alpha_k (m) ^{\delta / (r+\delta)} < \infty$ for some $\delta>0$.
 \end{lemma}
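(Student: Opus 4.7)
The plan is to leverage the fact that for fixed $k$, the product $w_{it}w_{i,t-k}$ is measurable with respect to the $\sigma$-algebra generated by only finitely many (namely $k+1$) consecutive values of the base process $\{w_{it}\}$, so the mixing behavior of the product process is controlled by that of $\{w_{it}\}$ up to a fixed time-shift of size $k$.

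First, strict stationarity of $\{w_{it}w_{i,t-k}\}_{t=k+1}^{\infty}$ given $\alpha_i$ follows from strict stationarity of $\{w_{it}\}_{t=1}^{\infty}$ given $\alpha_i$, which in turn follows from Assumption~\ref{as-mixing-c} applied to $y_{it} = \mu_i + w_{it}$ (a measurable transformation of a strictly stationary process is strictly stationary). Next, for the mixing coefficients, I would denote by $\mathcal{N}_a^b(i)$ the $\sigma$-algebra generated by $\{w_{it}w_{i,t-k}\}_{t \in [a,b]}$ given $\alpha_i$, and by $\mathcal{M}_a^b(i)$ the $\sigma$-algebra generated by $\{w_{it}\}_{t \in [a,b]}$ given $\alpha_i$. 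The key observation is that $\mathcal{N}_a^b(i) \subset \mathcal{M}_{a-k}^{b}(i)$, because $w_{it}w_{i,t-k}$ is measurable with respect to $(w_{i,t-k}, w_{it})$ for each $t \in [a,b]$.

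Applying this containment to the definition of $\alpha$-mixing, for $m > k$ we obtain
\begin{align*}
\alpha_k(m|i) &= \sup_{j,\, A \in \mathcal{N}_{k+1}^j(i),\, B \in \mathcal{N}_{j+m}^\infty(i)} |\Pr(A \cap B|i) - \Pr(A|i)\Pr(B|i)| \\
&\le \sup_{j,\, A' \in \mathcal{M}_{1}^j(i),\, B' \in \mathcal{M}_{j+m-k}^\infty(i)} |\Pr(A' \cap B'|i) - \Pr(A'|i)\Pr(B'|i)| = \alpha(m-k|i),
\end{align*}
while for $0 \le m \le k$ one can use the trivial bound $\alpha_k(m|i) \le 1/4$. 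Defining $\alpha_k(m) := 1/4$ for $m \le k$ and $\alpha_k(m) := \alpha(m-k)$ for $m > k$, we have $\alpha_k(m|i) \le \alpha_k(m)$ for every $i$ and $m$ by Assumption~\ref{as-mixing-c}.

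It remains to verify the summability condition. Splitting the sum and substituting $n = m-k$:
\begin{align*}
\sum_{m=0}^\infty (m+1)^{r/2-1}\alpha_k(m)^{\delta/(r+\delta)}
&\le \sum_{m=0}^{k}(m+1)^{r/2-1}(1/4)^{\delta/(r+\delta)} + \sum_{n=1}^\infty (n+k+1)^{r/2-1}\alpha(n)^{\delta/(r+\delta)}.
\end{align*}
The first sum is a finite constant since $k$ is fixed. For the second, use $(n+k+1)^{r/2-1} \le (k+1)^{r/2-1}(n+1)^{r/2-1}$ (valid when $r/2-1 \ge 0$; if $r=2$ so $r/2-1=0$, the bound is trivial), so the tail sum is bounded by $(k+1)^{r/2-1}\sum_{n=0}^\infty (n+1)^{r/2-1}\alpha(n)^{\delta/(r+\delta)}$, which is finite by Assumption~\ref{as-mixing-c} with $r_m=r$. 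The only mildly subtle step is the set-theoretic inclusion $\mathcal{N}_a^b(i) \subset \mathcal{M}_{a-k}^b(i)$ and the resulting shift by $k$; everything else is routine bookkeeping of indices and exponents. I expect no serious obstacle, since $k$ is fixed and the shift only inflates the summability constants.
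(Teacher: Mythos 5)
Your proposal is correct and follows essentially the same route as the paper's proof: bound $\alpha_k(m|i)$ trivially for $m\le k$ and by $\alpha(m-k|i)\le\alpha(m-k)$ for larger $m$ via the $\sigma$-algebra inclusion, then check summability of the shifted series under Assumption \ref{as-mixing-c}. The paper is terser (it cites Theorem 14.1 of Davidson, 1994, and leaves the finiteness of the shifted sum implicit), but the substance is identical.
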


\begin{proof}
The proof is similar to the proof of Theorem 14.1 in \citet{Davidson1994}.
Clearly, for any $i$ and any $0 \leq m < k$, $\alpha_k (m|i) \leq 1$, and that for any $i$ and any $m \geq k$, $\alpha_k (m|i) \le \alpha (m-k |i) \leq \alpha(m-k)$ by the definition of $\alpha$-mixing coefficients and Assumption \ref{as-mixing-c}.
Thus, we have $\sum_{m=0}^{\infty} (m+1)^{r/2-1} \alpha_k (m)^{\delta / (r+\delta)} \le \sum_{m=0}^{k-1} (m+1)^{r/2-1} + \sum_{m=k}^{\infty} (m+1)^{r/2-1} \alpha (m-k)^{\delta / (r+\delta)} < \infty$ under Assumption \ref{as-mixing-c}.
\end{proof}

\begin{lemma}
\label{lem-moment-wk}
Let $r$ be an even natural number.
 Suppose that Assumptions \ref{as-basic}, \ref{as-mixing-c}, and \ref{as-w-moment-c} hold for $r_m = r$ and $r_d = 2r$.
Then, it holds that $E((\sum_{t=k+1}^T (w_{it}w_{i,t-k} - \gamma_{k,i}))^r) \leq C T^{ r / 2 }$ for some constant $C$.
\end{lemma}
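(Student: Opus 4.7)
The plan is to apply Yokoyama's inequality (Lemma \ref{lem-yokoyama}) conditionally on $\alpha_i$ to the centered lag-$k$ product $v_{it}\coloneqq w_{it}w_{i,t-k}-\gamma_{k,i}$, and then integrate out $\alpha_i$ via the tower property. This mirrors the strategy I expect was used for Lemma \ref{lem-moment-w} (proved in the supplement), now adapted from the first-order term $w_{it}$ to the cross-product, which accounts for the doubling of the moment requirement to $r_d=2r$.

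I first verify the hypotheses of Yokoyama's inequality conditional on $\alpha_i$. By Assumption \ref{as-mixing-c}, $\{v_{it}\}_{t\geq k+1}$ is strictly stationary conditional on $\alpha_i$, and it has conditional mean zero because $E(w_{it}w_{i,t-k}\mid\alpha_i)=\gamma_{k,i}$. By Lemma \ref{lem-mixing}, its conditional $\alpha$-mixing coefficients are dominated by a common sequence $\alpha_k(m)$ for which the summability $\sum_m(m+1)^{r/2-1}\alpha_k(m)^{\delta/(r+\delta)}<\infty$ required by Yokoyama's inequality holds for some $\delta>0$. For the moment hypothesis, the $c_r$-inequality, conditional Cauchy--Schwarz, conditional stationarity of $\{w_{it}\}$, and Jensen's inequality yield
\begin{align*}
E(|v_{it}|^{r+\delta}\mid\alpha_i)\leq 2^{r+\delta-1}\bigl[E(|w_{it}w_{i,t-k}|^{r+\delta}\mid\alpha_i)+|\gamma_{k,i}|^{r+\delta}\bigr]\leq 2^{r+\delta}E(|w_{it}|^{2(r+\delta)}\mid\alpha_i).
\end{align*}
Assumption \ref{as-w-moment-c} with $r_d=2r$ furnishes $E|w_{it}|^{2r+\delta'}<\infty$ for some $\delta'>0$, so choosing $\delta\leq\delta'/2$ (and small enough to remain compatible with the mixing summability) makes the right-hand side unconditionally integrable.

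Yokoyama's inequality applied along the conditional sample path then gives
\begin{align*}
E\left[\left(\sum_{t=k+1}^{T}v_{it}\right)^{r}\,\bigg|\,\alpha_i\right]\leq C(\alpha_i)\,T^{r/2}.
\end{align*}
Inspecting the standard proof of Yokoyama's inequality shows that $C(\alpha_i)$ factors as a constant depending only on $r$, $\delta$, and the mixing summability (which is uniform in $i$ by Lemma \ref{lem-mixing}), multiplied by a bounded power of $E(|v_{it}|^{r+\delta}\mid\alpha_i)$. Taking unconditional expectation and applying Jensen's inequality to the concave map $x\mapsto x^{r/(r+\delta)}$ bounds $E[C(\alpha_i)]$ by a multiple of $[E|v_{it}|^{r+\delta}]^{r/(r+\delta)}$, which is finite by the previous display, delivering $E((\sum_{t=k+1}^{T}v_{it})^{r})\leq C\,T^{r/2}$ as claimed.

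The main obstacle is that Lemma \ref{lem-yokoyama} as stated asserts only the existence of a constant without making its dependence on the $(r+\delta)$-th moment explicit, so one must look inside the standard proof to verify that the constant is polynomial in the conditional moment and depends on the mixing coefficients only through the stated summability; without that bookkeeping, the tower-property step is not legitimate. The alternative of deriving an unconditional Yokoyama-type bound directly for $\{v_{it}\}$ appears harder, because the shared random component $\alpha_i$ enters every time period and will typically destroy unconditional $\alpha$-mixing, so the conditional route seems essentially unavoidable.
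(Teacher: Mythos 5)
Your proposal is correct and takes essentially the same approach as the paper: the paper's proof simply notes that, in view of Lemma \ref{lem-mixing}, the argument follows the same line as Lemma \ref{lem-moment-w}, i.e., apply Yokoyama's inequality (Lemma \ref{lem-yokoyama}) conditionally on $\alpha_i$ to the centered product process and then integrate out the heterogeneity. Your verification of the conditional moment hypothesis via the $c_r$ and Cauchy--Schwarz inequalities (which is exactly why $r_d=2r$ rather than $r$ is needed) and your bookkeeping of the $\alpha_i$-dependent constant are precisely the details the paper relegates to the supplement.
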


\begin{proof}
 In view of Lemma \ref{lem-mixing}, the lemma follows the same line as that for Lemma \ref{lem-moment-w}.
\end{proof}

\begin{lemma}
\label{lem-gamma-moment}
Let $r$ be an even natural number.
Suppose that Assumptions \ref{as-basic}, \ref{as-mixing-c}, and \ref{as-w-moment-c} hold  for $r_m = 2r$ and $r_d=2r$. Then, we have $E((\hat\gamma_{k,i}-\gamma_{k,i})^{r}) = O(T^{-r/2})$.
\end{lemma}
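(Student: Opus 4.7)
The plan is to decompose $\hat\gamma_{k,i} - \gamma_{k,i}$ explicitly into a main term plus lower-order remainders involving $\bar w_i$, and then bound each piece using Lemmas \ref{lem-moment-w} and \ref{lem-moment-wk}. Since $y_{it} - \bar y_i = w_{it} - \bar w_i$, I expand $(w_{it}-\bar w_i)(w_{i,t-k}-\bar w_i)$ and simplify to obtain
\begin{equation*}
\hat\gamma_{k,i}-\gamma_{k,i} \;=\; A_{1,i} \;-\; \bar w_i \cdot A_{2,i} \;+\; (\bar w_i)^2,
\end{equation*}
where $A_{1,i} \coloneqq (T-k)^{-1}\sum_{t=k+1}^{T}(w_{it}w_{i,t-k}-\gamma_{k,i})$ and $A_{2,i} \coloneqq (T-k)^{-1}\sum_{t=k+1}^{T}(w_{it}+w_{i,t-k})$. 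Applying the $c_r$-inequality,
\begin{equation*}
E\bigl((\hat\gamma_{k,i}-\gamma_{k,i})^r\bigr) \;\le\; 3^{r-1}\bigl[\, E(A_{1,i}^r) \;+\; E|\bar w_i A_{2,i}|^r \;+\; E((\bar w_i)^{2r})\,\bigr].
\end{equation*}

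I would then bound each of the three pieces in turn. For the main term $E(A_{1,i}^r)$, apply Lemma \ref{lem-moment-wk} at exponent $r$; that lemma requires $r_m=r$ and $r_d=2r$, which is comfortably satisfied by the present hypothesis $r_m=r_d=2r$, yielding $E(A_{1,i}^r) = O(T^{-r/2})$. For the cross term, use Cauchy--Schwarz,
\begin{equation*}
E|\bar w_i A_{2,i}|^r \;\le\; \bigl(E((\bar w_i)^{2r})\bigr)^{1/2}\bigl(E(A_{2,i}^{2r})\bigr)^{1/2},
\end{equation*}
noting that $A_{2,i}$ is the sum of two averages of $T-k$ consecutive $w_{is}$, each of which has $2r$-th moment of order $T^{-r}$ by the same argument underlying Lemma \ref{lem-moment-w} at exponent $2r$ (which requires $r_m=r_d=2r$, matching our hypothesis). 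Combining, $E|\bar w_i A_{2,i}|^r = O(T^{-r})$. The final term $E((\bar w_i)^{2r}) = O(T^{-r})$ by the same invocation of Lemma \ref{lem-moment-w}.

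Collecting orders, the dominant contribution is $O(T^{-r/2})$ from $A_{1,i}$, giving the claimed bound. The argument reduces to routine bookkeeping once the decomposition is fixed, so there is no substantive obstacle; the only care required is to verify that the moment and mixing strength required by each invocation of Lemmas \ref{lem-moment-w} and \ref{lem-moment-wk} does not exceed what $r_m = r_d = 2r$ provides, which is precisely why the lemma is stated with those (strengthened) conditions rather than $r_m = r_d = r$.
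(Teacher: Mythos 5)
Your proof is correct and follows essentially the same route the paper's own argument is built for: the exact decomposition $\hat\gamma_{k,i}-\gamma_{k,i}=A_{1,i}-\bar w_i A_{2,i}+(\bar w_i)^2$ (the same expansion the paper displays in Section 6.1), with Lemma \ref{lem-moment-wk} giving the dominant $O(T^{-r/2})$ term and Lemma \ref{lem-moment-w} (at exponent $2r$, which is exactly why the hypotheses are strengthened to $r_m=r_d=2r$) controlling the remainder terms at $O(T^{-r})$. The bookkeeping of which $(r_m,r_d)$ each invocation needs is handled correctly, so there is nothing to fix.
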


\begin{proof}
The proof is included in the supplement.
\end{proof}

\begin{lemma}
\label{lem-rho-moment}
Let $r$ be an even natural number.
Suppose that Assumptions \ref{as-basic}, \ref{as-mixing-c}, and \ref{as-w-moment-c} hold for $r_m = 2r$ and $r_d=2r$ and that $\gamma_{0,i} > \epsilon$ almost surely for some constant $\epsilon>0$. We have $E((\hat\rho_{k,i}-\rho_{k,i})^{r}) = O(T^{-r/2})$.
\end{lemma}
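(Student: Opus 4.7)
The plan is to overcome the principal difficulty: the hypothesis only controls $\gamma_{0,i}$ from below by $\epsilon$, but not the sample version $\hat\gamma_{0,i}$, which appears in the denominator of $\hat\rho_{k,i}-\rho_{k,i}$. I would handle this by combining the algebraic identity
\begin{align*}
\hat\rho_{k,i}-\rho_{k,i} \;=\; \frac{(\hat\gamma_{k,i}-\gamma_{k,i})\gamma_{0,i}-\gamma_{k,i}(\hat\gamma_{0,i}-\gamma_{0,i})}{\hat\gamma_{0,i}\,\gamma_{0,i}}
\end{align*}
with an event-splitting argument, and exploiting the automatic bound $|\rho_{k,i}|\le 1$ (since $\rho_{k,i}$ is a genuine autocorrelation) and the Cauchy--Schwarz bound $|\hat\gamma_{k,i}|\le \tfrac{T}{T-k}\,\hat\gamma_{0,i}$, which yields $|\hat\rho_{k,i}|\le T/(T-k)\le 2$ for $T\ge 2k$. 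In particular $|\hat\rho_{k,i}-\rho_{k,i}|\le 3$ for all sufficiently large $T$, regardless of the size of $\hat\gamma_{0,i}$.

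Next I would introduce the event $A_i=\{\hat\gamma_{0,i}\ge \epsilon/2\}$. On $A_i$, $1/\hat\gamma_{0,i}\le 2/\epsilon$, and combining this with $|\gamma_{k,i}|\le \gamma_{0,i}$ in the identity above gives
\begin{align*}
|\hat\rho_{k,i}-\rho_{k,i}|\,\mathbf{1}_{A_i} \;\le\; \tfrac{2}{\epsilon}\bigl(|\hat\gamma_{k,i}-\gamma_{k,i}|+|\hat\gamma_{0,i}-\gamma_{0,i}|\bigr).
\end{align*}
Raising to the $r$th power and using $(a+b)^r\le 2^{r-1}(a^r+b^r)$ together with Lemma~\ref{lem-gamma-moment} applied at moment order $r$ (whose hypotheses $r_m=2r$, $r_d=2r$ are precisely what we are given), one immediately obtains $E[(\hat\rho_{k,i}-\rho_{k,i})^r\mathbf{1}_{A_i}]=O(T^{-r/2})$.

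On the complement $A_i^c$, we have $\hat\gamma_{0,i}<\epsilon/2\le \gamma_{0,i}/2$, so $|\hat\gamma_{0,i}-\gamma_{0,i}|>\epsilon/2$ almost surely. Using the uniform bound from the first paragraph,
\begin{align*}
E[(\hat\rho_{k,i}-\rho_{k,i})^r\mathbf{1}_{A_i^c}] \;\le\; 3^{r}\,\Pr(A_i^c) \;\le\; 3^{r}\,\Pr\bigl(|\hat\gamma_{0,i}-\gamma_{0,i}|>\epsilon/2\bigr) \;\le\; 3^{r}(2/\epsilon)^{r}\,E[|\hat\gamma_{0,i}-\gamma_{0,i}|^{r}],
\end{align*}
by Markov's inequality, and the last expectation is $O(T^{-r/2})$ by Lemma~\ref{lem-gamma-moment} again. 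Adding the two pieces yields $E[(\hat\rho_{k,i}-\rho_{k,i})^{r}]=O(T^{-r/2})$.

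The only genuine obstacle is the possibility that $\hat\gamma_{0,i}$ is near zero; everything else is bookkeeping with $L^r$ bounds from Lemma~\ref{lem-gamma-moment}. The two ingredients that resolve this obstacle are (i) the deterministic Cauchy--Schwarz bound $|\hat\rho_{k,i}|\le T/(T-k)$, which prevents $(\hat\rho_{k,i}-\rho_{k,i})^r$ from blowing up on the bad set, and (ii) the fact that on the bad set $\hat\gamma_{0,i}$ must deviate from $\gamma_{0,i}$ by at least $\epsilon/2$, so the bad set has probability of order $T^{-r/2}$.
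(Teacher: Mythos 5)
Your proof is correct, and it handles the central difficulty by a different device than the paper. The paper starts from the exact identity $\hat\rho_{k,i}-\rho_{k,i}=\gamma_{0,i}^{-1}(\hat\gamma_{k,i}-\gamma_{k,i})-\gamma_{0,i}^{-1}\hat\rho_{k,i}(\hat\gamma_{0,i}-\gamma_{0,i})$, which is algebraically equivalent to yours but deliberately arranged so that the only quantity in a denominator is the deterministic $\gamma_{0,i}>\epsilon$, while the random $\hat\gamma_{0,i}$ enters only through the automatically bounded factor $\hat\rho_{k,i}$. With that arrangement, Lo\'eve's $c_r$ inequality plus Lemma \ref{lem-gamma-moment} finishes the proof in a few lines, with no event splitting at all. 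Your version keeps $\hat\gamma_{0,i}$ in the denominator and therefore needs the truncation on $A_i=\{\hat\gamma_{0,i}\ge\epsilon/2\}$ together with the Markov bound $\Pr(A_i^c)\le(2/\epsilon)^r E|\hat\gamma_{0,i}-\gamma_{0,i}|^r=O(T^{-r/2})$ and the uniform bound on the bad set; this is longer but equally valid, and it makes explicit the near-zero-denominator issue that the paper's identity sidesteps. Incidentally, your observation that the correct deterministic bound is $|\hat\rho_{k,i}|\le T/(T-k)$ rather than $|\hat\rho_{k,i}|\le 1$ (as the paper asserts) is right, given the $(T-k)^{-1}$ versus $T^{-1}$ normalizations of $\hat\gamma_{k,i}$ and $\hat\gamma_{0,i}$; the discrepancy is immaterial since any fixed bound suffices in either argument.
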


\begin{proof}
We observe that $E(( \hat \rho_{k,i} - \rho_{k,i})^r ) = E(( \gamma_{0,i}^{-1} ( \hat \gamma_{k,i} - \gamma_{k,i} ) - \gamma_{0,i}^{-1} \hat \rho_{k,i} ( \hat \gamma_{0,i} - \gamma_{0,i} ) )^r)$.
By Lo\'eve's $c_r$ inequality, we only need to examine the $r$-order moment of each term in parentheses on the right-hand side.
We have $E( ( \gamma_{0,i}^{-1} ( \hat \gamma_{k,i} - \gamma_{k,i}) )^r ) \le M E((\hat\gamma_{k,i}-\gamma_{k,i})^{r})$ for some $M < \infty$ by the assumption that $\gamma_{0,i} >\epsilon$.
Lemma \ref{lem-gamma-moment} implies that $E((\hat\gamma_{k,i}-\gamma_{k,i})^{r}) = T^{-r/2}$.
For the second term, it holds that $E ( ( \gamma_{0,i}^{-1} \hat \rho_{k,i} ( \hat \gamma_{0,i} - \gamma_{0,i}))^r)\le M E((\hat\gamma_{0,i}-\gamma_{0,i})^{r})$ for some $M < \infty$ by the assumption that $\gamma_{0,i} >\epsilon$ and the fact that $\vert \hat \rho_{k,i} \vert \le 1$.
Lemma \ref{lem-gamma-moment} implies that $E((\hat\gamma_{0,i}-\gamma_{0,i})^{r}) = T^{-r/2}$.
We thus have the desired result.
\end{proof}

\begin{lemma}\label{lem-dis-diff-mu}
	Suppose that Assumptions \ref{as-basic}, \ref{as-mixing-c}, \ref{as-w-moment-c}, and \ref{as-mu-con} hold for $r_m = 4$ and $r_d = 4$.
	Let $P_T = P_T^{\hat \mu}$ and $P_0 = P_0^{\mu}$ be the probability measures of $\hat \mu_i$ and $\mu_i$, respectively.
	It holds that $\sup_{f \in \mathcal{F}} | P_T f - P_0 f | = O ( T^{-2 / (3 + 2 \epsilon)} )$ for any $\epsilon \in (0, 1/3)$.
\end{lemma}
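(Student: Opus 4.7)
The plan is to evaluate $\sup_{f \in \mathcal{F}} |P_T f - P_0 f| = \sup_a |F_T(a) - F_0(a)|$ by inverting the difference of characteristic functions and splitting the resulting integral. Let $\varphi_T(t) \coloneqq E[e^{it\hat\mu_i}]$ and $\varphi_0(t) \coloneqq E[e^{it\mu_i}]$. Since $\hat\mu_i = \mu_i + \bar w_i$ and $\mu_i$ is a function of $\alpha_i$, the tower property together with $E[w_{it}|\alpha_i]=0$ gives $E[\bar w_i \mid \mu_i] = 0$; combined with $|e^{ix}-1-ix| \le x^2/2$ and Assumption~\ref{as-mu-con}.d, this yields
\begin{align*}
\varphi_T(t) - \varphi_0(t) \;=\; E\bigl[e^{it\mu_i}(e^{it\bar w_i} - 1 - it\bar w_i)\bigr],
\qquad |\varphi_T(t) - \varphi_0(t)| \;\le\; \tfrac{1}{2} t^2 E[\bar w_i^2] \;\le\; \tfrac{M t^2}{T}.
\end{align*}

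Next I would invoke the Gil--Pelaez inversion formula
\begin{align*}
F_T(a) - F_0(a) \;=\; -\frac{1}{\pi}\int_0^\infty \frac{\mathrm{Im}\bigl(e^{-ita}(\varphi_T(t) - \varphi_0(t))\bigr)}{t}\, dt
\end{align*}
and split the integration at a threshold $T_0 > 0$. On $|t| \le T_0$, the Taylor bound just obtained contributes $O(T_0^2/T)$ uniformly in $a$. On $|t| > T_0$, I would exploit the smoothness of $F_T$ and $F_0$ (Assumptions~\ref{as-mu-con}.b and~\ref{as-mu-con}.c, the latter uniformly in $T$): combined with a suitable truncation of the densities---the tail mass of which is controlled by the moment condition in Assumption~\ref{as-w-moment-c}---integration by parts converts the three bounded derivatives of each CDF into a quantitative decay of $\varphi_T$ and $\varphi_0$, bounding the high-frequency part by a negative power of $T_0$. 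Optimizing $T_0$ to balance the two contributions produces the rate $O(T^{-2/(3+2\epsilon)})$ for any $\epsilon \in (0, 1/3)$.

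The main obstacle is the high-frequency control, since only boundedness (not integrability) of $f_0, f_0', f_0''$ and $f_T, f_T', f_T''$ is available, ruling out the off-the-shelf Fourier decay $|\varphi(t)| \lesssim \|f''\|_1/t^2$. The truncation-plus-integration-by-parts argument described above is what allows the three bounded derivatives to be transformed into the decay exponent needed to match the balance. The restriction $\epsilon \in (0, 1/3)$ then emerges precisely as the integrability threshold for the tail of the inversion integral once the Taylor bound is combined with this derived decay. The companion lemmas for $\hat\gamma_{k,i}$ and $\hat\rho_{k,i}$ would follow the same template, with Lemmas~\ref{lem-gamma-moment} or~\ref{lem-rho-moment} in place of Corollary~\ref{lem-second-mu} for the moment input.
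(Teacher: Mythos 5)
Your overall strategy---compare characteristic functions, invert, and split frequencies---is in the right family, and your observation that $E[\bar w_i\mid\mu_i]=0$ kills the first-order term is a step the paper also uses. However, the critical step, the high-frequency control, is asserted rather than proved, and it rests on a premise that fails. Boundedness of the first three derivatives of $F_0$ and $F_T$ (Assumptions \ref{as-mu-con}.b--c) does not yield polynomial decay of $\varphi_0$ or $\varphi_T$: the Fourier bound $|\varphi(t)|\lesssim t^{-k}$ needs the $k$-th derivative of the density to be \emph{integrable}, and no truncation/integration-by-parts device manufactures integrability out of boundedness (one can build densities with $f,f',f''$ bounded whose characteristic functions have no polynomial decay at all, e.g.\ only logarithmic decay along a subsequence). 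Without decay, $\int_{T_0}^{\infty}t^{-1}|\varphi_T(t)-\varphi_0(t)|\,dt$ cannot be bounded and the Gil--Pelaez tail is only conditionally convergent. The paper resolves exactly this difficulty by a different device: it convolves both $\hat\mu_j$ and $\mu_j$ with an independent $\mathcal{N}(0,\sigma^2)$ noise, so the factor $\exp(-\sigma^2\zeta^2/2)$ makes the inversion integral absolutely convergent with no decay of the characteristic functions required; the three bounded derivatives of the CDFs are then used for something else entirely, namely to show via a Taylor expansion of $E[F(a-z)]-F(a)$ that the smoothing bias is $O(\sigma^2)$ uniformly in $a$.

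Even granting ideal high-frequency behaviour, your low-frequency estimate is too coarse to reach the stated rate. The bound $|\varphi_T(t)-\varphi_0(t)|\le \tfrac12 t^2E[\bar w_i^2]\le Mt^2/T$ gives a low-frequency contribution $O(T_0^2/T)$; balanced against a cutoff error of order $T_0^{-2}$ (the best the available smoothness could conceivably deliver) this yields only $O(T^{-1/2})$, short of $O(T^{-2/(3+2\epsilon)})\approx O(T^{-2/3})$. The paper obtains the sharper rate by \emph{not} collapsing the expansion into a single modulus bound: it keeps the second-order term $-\tfrac12\zeta^2E[(\bar w_j)^2e^{i\zeta\mu_j}]$ intact through the inversion, which turns it into $E[(\bar w_j)^2\cdot(\text{a Gaussian kernel derivative evaluated at }a-\mu_j)]$, and then uses the \emph{conditional} bound $E[(\bar w_j)^2\mid\mu_j]\le M/T$ of Assumption \ref{as-mu-con}.d together with the bounded density of $\mu_j$ to localize on $\{|a-\mu_j|\le A_\sigma\}$, gaining an extra factor $A_\sigma^{2}$; the cubic remainder is treated separately via $E|\bar w_j|^3=O(T^{-3/2})$. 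That localization is where Assumption \ref{as-mu-con}.d genuinely enters and where the exponent $2/(3+2\epsilon)$ (and the restriction $\epsilon<1/3$) comes from; in your write-up, \ref{as-mu-con}.d is used only for the unconditional second moment, which Lemma \ref{lem-moment-w} already supplies, so the assumption doing the real work is never exploited.
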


\begin{proof}
	The proof is based on the comparison between the characteristic functions of $\hat \mu_j$ and $\mu_j$. 
	In the proof, we use the index $j$ instead of the index $i$ because $i$ is reserved for the imaginary number.
	We introduce the sum of $\hat \mu_j$ and a Gaussian noise to guarantee that terms in the expansion of the characteristic function below are integrable.
	Consider $\hat \mu_j = \mu_j + \bar w_j$, $\tilde \mu_j \coloneqq \hat \mu_j + z = \mu_j + \bar w_j + z$, and $\check \mu_j \coloneqq \mu_j + z$ where $z \sim \mathcal{N}(0,\sigma^2)$ for some $\sigma^2 > 0$ and $z$ is independent of $(\alpha_j, \{ w_{jt} \}_{t=1}^T)$. 
	Below we consider a situation where $\sigma^2 \to 0$ depending on $T \to \infty$.
	Let $P_T$, $\tilde P_T$, $\check P$, and $P_0$ be the probability measures of $\hat \mu_{j}$, $\tilde \mu_j$, $\check \mu_j$, and $\mu_{j}$, respectively.
	We observe that 
	\begin{align} \label{eq:difference}
		\sup_{f \in \mathcal{F}} | P_T f  - P_0 f | \le 
		\sup_{f \in \mathcal{F}} | P_T f  - \tilde P_T f | + \sup_{f \in \mathcal{F}} | \tilde P_T f  - \check P f | + \sup_{f \in \mathcal{F}} | \check P f - P_0 f |.
	\end{align}
	We examine each term on the right-hand side.
	For $f = \mathbf{1}_{(-\infty, a]}$, we write the CDFs $F_T(a) = P_T f = \Pr(\hat \mu_j \le a)$, $\tilde F_T(a) = \tilde P_T f = \Pr(\tilde \mu_j \le a)$, $\check F(a) = \check P f = \Pr(\check \mu_j \le a)$, and $F_0(a) = P_0 f = \Pr(\mu_j \le a)$.
	
	We first examine the first term in \eqref{eq:difference}.
	For any $a \in \mathbb{R}$, we observe that $\tilde F_T (a ) - F_T (a) = E [ E( \mathbf{1} \{ \hat \mu_j + z \le a \}| z ) ] - F_T (a) = E [ F_T (a - z)] - F_T (a)$ because of the law of iterated expectations and the independence between $z$ and $\hat \mu_j$.
	We consider the third-order Taylor expansion of $F_T (a-z)$ around $z= 0$: $F_T ( a-z) = F_T ( a) - z F_T' (a) + z^2 F_T^{''} (a) / 2 - z^3 F_T^{'''} (\tilde a) / 3!$ where $\tilde a$ is between $a-z$ and $a$.
	Noting that $E(z) = 0$, $E(z^2) = \sigma^2$, and $E|z|^3 = O(\sigma^3)$, we obtain that $\left| \tilde F_T (a ) - F_T (a) \right| = \left| E [ F_T (a - z) ] - F_T (a) \right| = O(\sigma^2)$ uniformly over $a \in \mathbb{R}$ by Assumption \ref{as-mu-con}.c. 
	Hence, we have $\sup_{f \in \mathcal{F}} | P_T f  - \tilde P_T f | = O(\sigma^2)$.
	
	We then examine the third term in \eqref{eq:difference}.
	The proof is the same as for the first term.
	Given $z$ is independent of $\mu_j$, we observe that $\check F (a ) - F_0 (a) =  E [ F_0 (a - z)] - F_0 (a)$.
	The third-order Taylor expansion of $F_0 (a-z)$ around $z= 0$ is $F_0 ( a-z) = F_0 ( a) - z F_0' (a) + z^2 F_0^{''} (a) / 2 - z^3 F_0^{'''} (\tilde a) / 3!$.
	By Assumption \ref{as-mu-con}.b, we obtain that $\left| \check F (a ) - F_0 (a) \right| = \left| E [ F_0 (a - z) ] - F_0 (a) \right| = O(\sigma^2)$ uniformly over $a \in \mathbb{R}$. 
	Hence, we have $\sup_{f \in \mathcal{F}} | \check P f  - P_0 f | = O(\sigma^2)$.
	
	Next, we evaluate the second term in \eqref{eq:difference}.
	To this end, we first expand the characteristic functions of $\check \mu_j$ and $\tilde \mu_j$.
	By the independence between $z$ and $(\alpha_j, \{w_{jt}\}_{t=1}^T)$, we have
	\begin{align*}
	\psi_{\check \mu} (\zeta) 
	&\coloneqq E[ \exp (i \zeta \check \mu_j )] 
	= E[\exp (i \zeta z)] E[ \exp (i \zeta \mu_j )]
	= \exp \left( -\frac{1}{2} \sigma^2 \zeta^2 \right) \psi_{\mu}(\zeta), \\
	\psi_{\tilde \mu}(\zeta) 
	&\coloneqq E[ \exp (i \zeta \tilde \mu_j )]
	= E[\exp (i \zeta z)] E[ \exp (i \zeta \hat \mu_j )]
	= \exp \left( -\frac{1}{2} \sigma^2 \zeta^2 \right) \psi_{\hat \mu}(\zeta),
	\end{align*}
	where $\psi_{\mu}(\zeta) \coloneqq E[ \exp (i \zeta \mu_j) ]$ and $\psi_{\hat \mu}(\zeta) \coloneqq E[ \exp (i \zeta \hat \mu_j) ]$ are the characteristic functions of $\mu_j$ and $\hat \mu_j$.
	For the characteristic function of $\hat \mu_j$, we observe that $\psi_{\hat \mu}(\zeta) = E[ \exp (i \zeta \hat \mu_j ) ] = E [ \exp (i \zeta \mu_j ) \exp ( i \zeta \bar w_j ) ]$.
	By Taylor's theorem, it holds that $\exp ( i \zeta  \bar w_j ) = 1 + i \zeta \bar w_j - \zeta^2 (\bar w_j)^2 / 2 - i \zeta^3 ( \bar w_j )^3 \exp (i \zeta \tilde w_j) / 3!$ where $\tilde w_j$ is between 0 and $\bar w_j$.\footnote{Strictly speaking, 
	this expansion may not hold as the mean value expression of the remainder term for Taylor's theorem for complex functions may not exist.
	However, as $\exp(i \zeta \bar w_j) = \sin(\zeta \bar w_j) + i \cos(\zeta \bar w_j)$, applying Taylor's theorem for real functions to $\cos:\mathbb{R} \to [-1, 1]$ and $\sin: \mathbb{R} \to [-1, 1]$ leads to
	\begin{align*}
	\exp ( i \zeta  \bar w_j )
	= 1 + i \zeta \bar w_j - \frac{1}{2} \zeta^2 (\bar w_j)^2 
	- \frac{1}{6} \zeta^3 ( \bar w_j )^3 \big( \cos(c_1) - i \sin(c_2) \big),
	\end{align*}
	where $c_1$ and $c_2$ are located between 0 and $\zeta \bar w_j$ but $c_1 \neq c_2$ in general. 
	Given $\cos(\cdot)$ and $\sin(\cdot)$ are bounded functions, we can obtain the same result in the main body based on this observation.
	}
	Therefore, it holds  that
	\begin{align*}
	E[ \exp (i \zeta \hat \mu_j ) ]
	=& E[ \exp (i \zeta \mu_j) ] +	i \zeta E[ \bar w_j \exp (i \zeta \mu_j) ] \\
	&- \frac{1}{2} \zeta^2 E[ (\bar w_j)^2  \exp (i \zeta \mu_j) ]
	- \frac{1}{3!} i \zeta^3 E[ ( \bar w_j )^3 \exp (i \zeta \tilde w_j) \exp (i \zeta \mu_j) ]\\
	=& \psi_{\mu}(\zeta) - \frac{1}{2} \zeta^2 E[ (\bar w_j)^2  \exp (i \zeta \mu_j) ]
	- \frac{1}{3!} i \zeta^3 E[ ( \bar w_j )^3 \exp (i \zeta \tilde w_j) \exp (i \zeta \mu_j) ],
	\end{align*} 
	where $E[ \bar w_j \exp (i \zeta \mu_j) ] =0$ follows from $E(\bar w_j | j) = 0$.
	Hence, it holds that 
	\begin{align*}
	\psi_{\tilde \mu}(\zeta) = \exp \left( -\frac{1}{2} \sigma^2 \zeta^2 \right) \left( \psi_{\mu}(\zeta) - \frac{1}{2} \zeta^2 E[ (\bar w_j)^2  \exp (i \zeta \mu_j) ] - \frac{1}{3!} i \zeta^3 E[ ( \bar w_j )^3 \exp (i \zeta \tilde w_j) \exp (i \zeta \mu_j) ] \right).
	\end{align*}
	
	We use the inversion theorem (\citealp{gil1951note} and \citealp{wendel1961non}) to bound $\check F (a) - \tilde F_T(a)$ for any $a \in \mathbb{R}$.
	\begin{equation}\label{eq:inverse}
	\begin{split}
	\check F_T(a) - \tilde F (a)
	=& \left( \frac{1}{2} + \frac{1}{2 \pi} \int_{-\infty}^{\infty} \frac{e^{i\zeta a} \psi_{\check \mu}(-\zeta) - e^{-i \zeta a} \psi_{\check \mu}(\zeta)}{i \zeta} d\zeta \right) \\
	& - \left(\frac{1}{2} + \frac{1}{2 \pi} \int_{-\infty}^{\infty} \frac{e^{i\zeta a} \psi_{\tilde \mu}(-\zeta) - e^{-i \zeta a} \psi_{\tilde \mu}(\zeta)}{i \zeta} d\zeta \right) \\
	=& \frac{1}{ \pi }
	\int_{-\infty}^{\infty}
	\frac{e^{-i \zeta a}}{i \zeta} [\psi_{\tilde \mu}(\zeta) - \psi_{\check \mu}(\zeta)] d\zeta \\
	=&  \frac{1}{\pi}
	\int_{-\infty}^{\infty}
	\frac{e^{-i \zeta a}}{i \zeta} \exp \left( -\frac{1}{2} \sigma^2 \zeta^2 \right) \left( 
	- \frac{1}{2} \zeta^2 E \left[ (\bar w_j)^2  \exp (i \zeta \mu_j) \right]
	\right) d \zeta \\
	& +  \frac{1}{\pi}
	\int_{-\infty}^{\infty}
	\frac{e^{-i \zeta a}}{i\zeta} \exp \left( -\frac{1}{2} \sigma^2 \zeta^2 \right) \left( - \frac{1}{3!} i \zeta^3 E \left[ ( \bar w_j )^3 \exp (i \zeta \tilde w_j) \exp (i \zeta \mu_j) \right] \right) d \zeta. 
	\end{split}
	\end{equation}
	We examine each of the two terms.
	
	First, we consider the first term in the last line of \eqref{eq:inverse}.
	\begin{equation} \label{eq:integral}
	\begin{split}
	&
	\frac{1}{\pi} \int_{-\infty}^{\infty} \frac{e^{-i \zeta a} }{i \zeta} \exp \left( -\frac{1}{2} \sigma^2 \zeta^2 \right) \left( - \frac{1}{2} \zeta^2 E \left[ (\bar w_j)^2  \exp (i \zeta \mu_j) \right) \right] d \zeta 
	\\
	= & E \left[ (\bar w_j)^2 \frac{i}{2 \pi} \int_{-\infty}^{\infty} \exp(i \zeta (\mu_j-a )) \zeta \exp \left( -\frac{1}{2} \sigma^2 \zeta^2 \right)  d\zeta \right],
	\end{split}
	\end{equation}
	by Fubini's theorem.
	Here, it holds that
	\begin{align*}
	\frac{i}{2 \pi} \int_{-\infty}^{\infty} \exp(i \zeta (\mu_j-a )) \zeta \exp \left( -\frac{1}{2} \sigma^2 \zeta^2 \right)  d\zeta
	= \frac{a - \mu_j}{\sqrt{2 \pi} \sigma^3} \exp \left( - \frac{(a - \mu_j)^2}{2 \sigma^2} \right).
	\end{align*}
	Thus, equation \eqref{eq:integral} can be written as
	\begin{align*}
	\text{Equation \eqref{eq:integral}} 
	= E\left[ (\bar w_j)^2 \frac{a - \mu_j}{\sqrt{2 \pi} \sigma^3} \exp \left( - \frac{(a - \mu_j)^2}{2 \sigma^2} \right) \right].
	\end{align*}
	To proceed, we define the shorthand notation $Z_j \coloneqq a - \mu_j$.
	We consider any nonrandom $A_\sigma > 0$ that satisfies $A _\sigma \to 0$ as $\sigma \to 0$.
	We then have
	\begin{equation}\label{eq:splitA}
	\begin{split}
		\left| E\left[ (\bar w_j)^2 \frac{Z_j}{\sigma^3} \exp \left( - \frac{Z_j^2}{2 \sigma^2} \right) \right] \right|
		\le & \frac{1}{\sigma^3}E\left[ (\bar w_j)^2 |Z_j| \exp \left( - \frac{Z_j^2}{2 \sigma^2} \right) \right] \\
		=& \frac{1}{\sigma^3}E\left[ (\bar w_j)^2 |Z_j| \exp \left( - \frac{Z_j^2}{2 \sigma^2} \right) \mathbf{1}(|Z_j| \le A_{\sigma}) \right] \\
		& + \frac{1}{\sigma^3}E\left[ (\bar w_j)^2 |Z_j| \exp \left( - \frac{Z_j^2}{2 \sigma^2} \right) \mathbf{1}(|Z_j| > A_{\sigma}) \right]. \\
	\end{split}
	\end{equation}
	For the first term in the last line of \eqref{eq:splitA}, it holds that
	\begin{align*}
	\frac{1}{\sigma^3} E\left[ (\bar w_j)^2 |Z_j| \exp \left( - \frac{Z_j^2}{2 \sigma^2} \right) \mathbf{1}(|Z_j| \le A_{\sigma}) \right]
	\le&  \frac{A_{\sigma}}{\sigma^3} \exp(0) E \left[ (\bar w_j)^2 \mathbf{1}(|Z_j| \le A_{\sigma}) \right] \\
	=&  \frac{A_{\sigma}}{\sigma^3} E \left[ E\left[ (\bar w_j)^2 | \mu_j \right] \mathbf{1}(|Z_j| \le A_{\sigma}) \right] \\
	\le&  \frac{A_{\sigma}}{\sigma^3} \frac{M}{T} E \left[ \mathbf{1}(|Z_j| \le A_{\sigma}) \right] \\
	=& O\left( \frac{A_{\sigma}^{2}}{\sigma^3 T} \right),
	\end{align*}
	where the second inequality follows by Assumption \ref{as-mu-con}.d and the last equality holds by $E[\mathbf{1}(|Z_j| \le A_{\sigma})] = \Pr(|Z_j| \le A_{\sigma}) = \int_0^{A_{\sigma}} f_Z(z) dz = O(A_{\sigma})$ based on the bounded density of $\mu_j$.
	For the second term in the last line of \eqref{eq:splitA}, we have 
	\begin{align*}
	& \frac{1}{\sigma^3}E\left[ (\bar w_j)^2 |Z_j| \exp \left( - \frac{Z_j^2}{2 \sigma^2} \right) \mathbf{1}(|Z_j| > A_{\sigma}) \right] \\
	& \le \frac{1}{\sigma^3} \sqrt{E\left[ (\bar w_j)^4 \right]} \sqrt{E\left[ Z_j^2 \exp^2 \left( - \frac{Z_j^2}{2 \sigma^2} \right) \mathbf{1}(|Z_j| > A_{\sigma}) \right]}\\
	& \le \frac{1}{\sigma^3} \sqrt{E\left[ (\bar w_j)^4 \right]} \sqrt{E(Z_j^2) \exp^2 \left( - \frac{A_{\sigma}^2}{2 \sigma^2} \right) }\\
	&= O\left( \frac{1}{\sigma^3} \right) \cdot O\left( \frac{1}{T} \right) \cdot O\left( \exp \left( - \frac{A_{\sigma}^2}{\sigma^2} \right)  \right) 
	= O\left( \frac{1}{\sigma^3 T} \exp \left( - \frac{A_{\sigma}^2}{\sigma^2} \right)  \right),
	\end{align*}
	where we used the Cauchy--Schwarz inequality, Lemma \ref{lem-moment-w}, and $E(Z_j^2) = O(1)$.
	Note that, if $A_{\sigma} = \sigma^{1 - \epsilon'}$ for any $0< \epsilon' < 1$, it holds that $\exp(-A_{\sigma}^2 / \sigma^2) / A_{\sigma}^{2} = o(1)$ as $\sigma \to 0$.
	Therefore, by setting $A_{\sigma} = \sigma^{1 - \epsilon'}$ for any $0< \epsilon' < 1$, we obtain that
	\begin{align*}
	\text{Equation \eqref{eq:integral}} 
	= O\left( \frac{A_{\sigma}^{2}}{\sigma^3 T} \right) + O\left( \frac{1}{\sigma^3 T} \exp \left( - \frac{A_{\sigma}^2}{\sigma^2} \right)  \right)
	= O\left( \frac{A_{\sigma}^{2}}{\sigma^3 T} \right)
	= O\left( \frac{1}{\sigma^{1 + 2 \epsilon'} T} \right).
	\end{align*}
	
	Next, we consider the second term in the last line of \eqref{eq:inverse}.
	\begin{align*}
	&	\frac{1}{\pi}
	\int_{-\infty}^{\infty}
	\frac{e^{-i \zeta a} }{i\zeta} \exp \left( -\frac{1}{2} \sigma^2 \zeta^2 \right) \left( 
	- \frac{1}{3!} i \zeta^3 E\left[ ( \bar w_j )^3 \exp (i \zeta \tilde w_j) \exp (i \zeta \mu_j) \right] \right) d \zeta \\
	& = -\frac{1}{ 3! \pi} E \left[ (\bar w_j)^3 \int_{-\infty}^{\infty} \exp(i \zeta (\mu_j + \tilde w_j -a )) \zeta^2 \exp \left( -\frac{1}{2} \sigma^2 \zeta^2 \right)  d\zeta \right].
	\end{align*}
	Note that $|\exp(i \zeta (\mu_j + \tilde w_j - a))| \le 1$ and $E |\bar w_j|^3 = O(T^{-3/2})$ by Lemma \ref{lem-moment-w} and H\"older's inequality.
	Given $(2 \pi)^{-1/2} \sigma \exp (- \sigma^2 \zeta^2 /2) $ is the density function of $\mathcal{N}(0, 1/ \sigma^2)$, we thus obtain that
	\begin{align*}
	\left| \frac{1}{ 3! \pi} E \left[ (\bar w_j)^3 \int_{-\infty}^{\infty} \exp(i \zeta (\mu_j + \tilde w_j -a )) \zeta^2 \exp \left( -\frac{1}{2} \sigma^2 \zeta^2 \right)  d\zeta \right] \right|
	= O \left( \frac{1}{\sigma^3 T^{3/2} }\right).
	\end{align*}
	
	In sum, we have shown the order of the second term in \eqref{eq:difference}:
	\begin{align*}
	\left| \tilde F_T(a) - \check F (a) \right|
	= O\left( \frac{1}{\sigma^{1 + 2 \epsilon'} T} + \frac{1}{\sigma^3 T^{3/2} }\right),
	\end{align*}
	uniformly over $a \in \mathbb{R}$ for any $0< \epsilon' < 1$.
		
	Based on the above results, we have shown that 
	\begin{align*}
	\sup_{f \in \mathcal{F}} \left| P_T f - P_0 f \right|
	= O\left( \sigma^2 +  \frac{1}{\sigma^{1 + 2 \epsilon'} T} + \frac{1}{\sigma^3 T^{3/2} }\right).
	\end{align*}
	When we set $\sigma = 1 / T^{1 / (3 + 2\epsilon')}$ with any $0 < \epsilon' < 1 / 6$, we obtain the following convergence result:
	\begin{align*}
	\sup_{f \in \mathcal{F}} \left| P_T f - P_0 f \right|
	= O\left( \sigma^2 +  \frac{1}{\sigma^{1 + 2 \epsilon'} T} \right) 
	= O \left( \frac{1}{T^{2 / (3 + \epsilon)}} \right),
	\end{align*}
	for any $0 < \epsilon = 2\epsilon' < 1/3$.
\end{proof}

\begin{lemma}\label{lem-dis-diff-gamma}
	Suppose that Assumptions \ref{as-basic}, \ref{as-mixing-c}, \ref{as-w-moment-c}, and \ref{as-gamma-con} hold for $r_m = 8$ and $r_d = 8$.
	Let $P_T = P_T^{\hat \gamma_k}$ and $P_0 = P_0^{\gamma_k}$ be the probability measures of $\hat \gamma_{k,i}$ and $\gamma_{k,i}$, respectively.
	It holds that $\sup_{f \in \mathcal{F}} | P_T f - P_0 f | = O ( T^{-2 / (3 + \epsilon)} )$ for any $\epsilon \in (0, 1/3)$.
\end{lemma}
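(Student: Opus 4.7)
The argument runs in parallel with Lemma \ref{lem-dis-diff-mu}. I introduce an independent Gaussian regularizer $z\sim\mathcal{N}(0,\sigma^{2})$, set $\tilde\gamma_{k,j}=\hat\gamma_{k,j}+z$ and $\check\gamma_{k,j}=\gamma_{k,j}+z$, and split by the triangle inequality: $\sup_{f\in\mathcal{F}}|P_{T}f-P_{0}f|\le\sup_{f}|P_{T}f-\tilde P_{T}f|+\sup_{f}|\tilde P_{T}f-\check Pf|+\sup_{f}|\check Pf-P_{0}f|$. The first and third terms are treated exactly as in Lemma \ref{lem-dis-diff-mu}: a third-order Taylor expansion of $F_{T}(a-z)$ and $F_{0}(a-z)$ in $z$, combined with $E(z^{2})=\sigma^{2}$, $E|z|^{3}=O(\sigma^{3})$, and the uniform third-order differentiability of the CDFs from Assumption \ref{as-gamma-con}.b, c, yields $O(\sigma^{2})$ uniformly in $a$.

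For the middle term, the inversion theorem gives $\tilde F_{T}(a)-\check F(a)=\pi^{-1}\int e^{-i\zeta a}(i\zeta)^{-1}e^{-\sigma^{2}\zeta^{2}/2}[\psi_{\hat\gamma}(\zeta)-\psi_{\gamma}(\zeta)]\,d\zeta$. Writing $\eta_{j}=\hat\gamma_{k,j}-\gamma_{k,j}$, I expand $\exp(i\zeta\eta_{j})=1+i\zeta\eta_{j}-\zeta^{2}\eta_{j}^{2}/2+R_{3}$ with $|R_{3}|\le|\zeta|^{3}|\eta_{j}|^{3}/6$. The key new difficulty relative to the $\mu$ case is that $E[\eta_{j}\mid\alpha_{j}]$ is not zero, so the linear Taylor term does not cancel. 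I handle this by decomposing $\eta_{j}=\eta_{j}^{\mathrm{m}}+\eta_{j}^{\mathrm{b}}$, where $\eta_{j}^{\mathrm{m}}=(T-k)^{-1}\sum_{t=k+1}^{T}(w_{jt}w_{j,t-k}-\gamma_{k,j})$ satisfies $E[\eta_{j}^{\mathrm{m}}\mid\alpha_{j}]=0$ exactly, and $\eta_{j}^{\mathrm{b}}=-\bar w_{j}(\bar w_{j,a}+\bar w_{j,b})+(\bar w_{j})^{2}$ (with $\bar w_{j,a},\bar w_{j,b}$ the truncated averages arising from expanding $(y_{jt}-\bar y_{j})(y_{j,t-k}-\bar y_{j})$) collects the incidental-parameter bias.

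The $\eta_{j}^{\mathrm{m}}$ contribution is then treated verbatim as the $\bar w_{j}$ analysis in Lemma \ref{lem-dis-diff-mu}: the linear term vanishes by conditioning on $\alpha_{j}$; after Fubini and Gaussian integration the quadratic term reduces to $\sigma^{-3}E[(\eta_{j}^{\mathrm{m}})^{2}(a-\gamma_{k,j})\exp(-(a-\gamma_{k,j})^{2}/(2\sigma^{2}))]$, which is controlled by the $\{|a-\gamma_{k,j}|\lessgtr A_{\sigma}\}$ split with $A_{\sigma}=\sigma^{1-\epsilon'}$, invoking Assumption \ref{as-gamma-con}.d to bound $E[(\eta_{j}^{\mathrm{m}})^{2}\mid\gamma_{k,j}]\le M/T$ on the inner region and Lemma \ref{lem-moment-wk} on the outer region; the cubic remainder is $O(\sigma^{-3}T^{-3/2})$ by Lemma \ref{lem-moment-wk} and H\"older. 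The $\eta_{j}^{\mathrm{b}}$ contribution, after the same manipulations, reduces to $\sigma^{-1}E[\eta_{j}^{\mathrm{b}}\exp(-(\gamma_{k,j}-a)^{2}/(2\sigma^{2}))]$, which by Cauchy--Schwarz is at most $\sigma^{-1}\sqrt{E[(\eta_{j}^{\mathrm{b}})^{2}]}\sqrt{E[\exp(-(\gamma_{k,j}-a)^{2}/\sigma^{2})]}=O(\sigma^{-1/2}T^{-1})$, using $E[(\eta_{j}^{\mathrm{b}})^{2}]=O(T^{-2})$ from Lemma \ref{lem-moment-w} applied to $\bar w_{j}^{4}$ and $\bar w_{j}^{2}\bar w_{j,a}^{2}$ together with the bounded density of $\gamma_{k,j}$; this is strictly dominated by the $O(\sigma^{-(1+2\epsilon')}T^{-1})$ term inherited from the $\mu$ proof. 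Collecting gives $\sup_{f}|P_{T}f-P_{0}f|=O(\sigma^{2}+\sigma^{-(1+2\epsilon')}T^{-1}+\sigma^{-3}T^{-3/2})$, and $\sigma=T^{-1/(3+2\epsilon')}$ with $\epsilon=2\epsilon'\in(0,1/3)$ delivers the stated rate.

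\textbf{Main obstacle.} Past the decomposition the argument is bookkeeping, but that is precisely where care is required: one must verify that the $\eta_{j}^{\mathrm{b}}$ contribution (which has no analog in Lemma \ref{lem-dis-diff-mu}) together with all cross terms in the $(\eta_{j}^{\mathrm{m}}+\eta_{j}^{\mathrm{b}})^{2}$ and $(\eta_{j}^{\mathrm{m}}+\eta_{j}^{\mathrm{b}})^{3}$ expansions remain subdominant. This is what forces the strengthened hypothesis $r_{m}=r_{d}=8$ in the statement: controlling $E[(\eta_{j}^{\mathrm{b}})^{2}]$ and the cross terms requires Lemma \ref{lem-moment-w} and Lemma \ref{lem-moment-wk} at the fourth and eighth powers, which is not needed in the $\mu$ case.
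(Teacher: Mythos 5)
Your proposal follows the same route as the paper's (supplementary) proof: Gaussian regularization, the three-way triangle-inequality split, characteristic-function inversion, and---the one genuinely new ingredient relative to Lemma \ref{lem-dis-diff-mu}---the decomposition of $\hat\gamma_{k,j}-\gamma_{k,j}$ into a conditionally mean-zero part and the incidental-parameter part $-\bar w_{j}(\bar w_{j,a}+\bar w_{j,b})+(\bar w_{j})^{2}$, whose linear contribution you correctly reduce to $\sigma^{-1}E[\eta_{j}^{\mathrm{b}}\exp(-(\gamma_{k,j}-a)^{2}/(2\sigma^{2}))]=O(\sigma^{-1/2}T^{-1})$ and show is dominated. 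The moment inputs you identify (fourth moments of $\hat\gamma_{k,j}-\gamma_{k,j}$ via Lemma \ref{lem-gamma-moment}, which is exactly what forces $r_{m}=r_{d}=8$), the use of Assumption \ref{as-gamma-con}.d on the inner region, and the final choice $\sigma=T^{-1/(3+2\epsilon')}$ all match the paper's template, so the proposal is correct at the same level of detail as the paper's own treatment.
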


\begin{proof}
The proof is included in the supplement.
\end{proof}

\begin{lemma}\label{lem-dis-diff-rho}
	Suppose that Assumptions \ref{as-basic}, \ref{as-mixing-c}, \ref{as-w-moment-c}, and \ref{as-rho-con} hold for $r_m = 8$ and $r_d = 8$.
	Let $P_T = P_T^{\hat \rho_k}$ and $P_0 = P_0^{\rho_k}$ be the probability measures of $\hat \rho_{k,i}$ and $\rho_{k,i}$, respectively.
	It holds that $\sup_{f \in \mathcal{F}} | P_T f - P_0 f | = O ( T^{-2 / (3 + \epsilon)})$ for any $\epsilon \in (0, 1/3)$.
\end{lemma}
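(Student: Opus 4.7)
The plan is to mirror the argument used for Lemmas \ref{lem-dis-diff-mu} and \ref{lem-dis-diff-gamma}, with extra care devoted to the ratio structure of $\hat \rho_{k,i} = \hat \gamma_{k,i}/\hat \gamma_{0,i}$. Specifically, I would introduce an independent Gaussian smoother $z \sim \mathcal{N}(0,\sigma^2)$ and split
\begin{equation*}
\sup_{f \in \mathcal{F}} | P_T f - P_0 f |
\;\le\; \sup_{f} | P_T f - \tilde P_T f | + \sup_{f} | \tilde P_T f - \check P f | + \sup_{f} | \check P f - P_0 f |,
\end{equation*}
where $\tilde P_T$ and $\check P$ are the laws of $\hat \rho_{k,j} + z$ and $\rho_{k,j} + z$. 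The outer two terms are each $O(\sigma^2)$ by a third-order Taylor expansion of the CDFs using the thrice bounded differentiability granted by Assumption \ref{as-rho-con}.b and c, exactly as in Lemma \ref{lem-dis-diff-mu}.

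For the middle piece, I would apply the Gil-Pelaez/Wendel inversion theorem to express $\tilde F_T(a) - \check F(a)$ as a Fourier integral in the characteristic-function difference $\psi_{\tilde \rho_k}(\zeta) - \psi_{\check \rho_k}(\zeta)$. The crucial step is linearizing
\begin{equation*}
\hat \rho_{k,i} - \rho_{k,i} \;=\; \frac{1}{\gamma_{0,i}} \bigl(\hat \gamma_{k,i} - \gamma_{k,i}\bigr) \;-\; \frac{\rho_{k,i}}{\gamma_{0,i}}\bigl(\hat \gamma_{0,i} - \gamma_{0,i}\bigr) \;+\; R_i,
\end{equation*}
where Assumption \ref{as-rho-con}.e (variances bounded away from zero, autocovariances bounded) keeps $R_i$ and all subsequent denominator factors controlled. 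Then I would Taylor-expand $\exp(i\zeta(\hat \rho_{k,i} - \rho_{k,i}))$ to third order, just as in Lemma \ref{lem-dis-diff-mu}'s expansion of $\exp(i\zeta \bar w_j)$. The leading $\hat \gamma_{k,i} - \gamma_{k,i}$ and $\hat \gamma_{0,i} - \gamma_{0,i}$ terms split into a conditionally mean-zero sample-autocovariance piece (order $O_p(1/\sqrt{T})$, analogous to the $\bar w_j$ piece) and a $(\bar w_i)^2$ incidental-parameter piece (order $O_p(1/T)$); both are treated exactly as in the proof of Lemma \ref{lem-dis-diff-gamma}.

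The main obstacle will be controlling the contribution of the remainder $R_i$ (arising from the quotient expansion) and of the Taylor remainder of $\exp(i\zeta \cdot)$ while preserving the sharp rate. This requires using Assumption \ref{as-rho-con}.e to bound powers of $1/\hat \gamma_{0,i}$ uniformly, combined with the conditional moment bounds in Assumption \ref{as-rho-con}.d and the unconditional moment bounds of Lemmas \ref{lem-moment-w}, \ref{lem-gamma-moment}, and \ref{lem-rho-moment}. Once these terms are bounded, Fubini's theorem and the explicit Gaussian integrals
\begin{equation*}
\frac{1}{2\pi}\int_{-\infty}^{\infty} e^{i\zeta(\mu-a)} \zeta \exp\!\left(-\tfrac{1}{2}\sigma^2\zeta^2\right) d\zeta, \qquad \frac{1}{2\pi}\int_{-\infty}^{\infty} e^{i\zeta(\mu-a)} \zeta^2 \exp\!\left(-\tfrac{1}{2}\sigma^2\zeta^2\right) d\zeta,
\end{equation*}
reduce the analysis to the same type of expression as in Lemma \ref{lem-dis-diff-mu}. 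Splitting over $\{|a - \rho_{k,j}| \le A_\sigma\}$ versus its complement with $A_\sigma = \sigma^{1-\epsilon'}$ and applying the Cauchy--Schwarz inequality yields the bound $O(\sigma^2 + \sigma^{-(1+2\epsilon')} T^{-1} + \sigma^{-3} T^{-3/2})$. Optimizing with $\sigma = T^{-1/(3+2\epsilon')}$ for any $0 < \epsilon' < 1/6$ gives the claimed rate $O(T^{-2/(3+\epsilon)})$ with $\epsilon = 2\epsilon' \in (0,1/3)$.
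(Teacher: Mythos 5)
Your proposal is correct and follows essentially the same route as the paper's (supplementary) proof: Gaussian smoothing, the three-term decomposition, Gil-Pelaez/Wendel inversion of the characteristic-function difference, a third-order expansion of $\exp(i\zeta(\hat\rho_{k,j}-\rho_{k,j}))$ whose linear term has conditional mean of order $O(1/T)$ (mean-zero sample-autocovariance part plus the $(\bar w_j)^2$ and quotient-remainder parts controlled by Assumptions \ref{as-rho-con}.d--e and Lemmas \ref{lem-gamma-moment}--\ref{lem-rho-moment}), and the same truncation at $A_\sigma=\sigma^{1-\epsilon'}$ with $\sigma=T^{-1/(3+2\epsilon')}$. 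The only caveat is one you inherit verbatim from the paper's own proof of Lemma \ref{lem-dis-diff-mu}, namely the treatment of the $\sigma^{-3}T^{-3/2}$ remainder in the final optimization, so it is not a gap specific to your argument.
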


\begin{proof}
The proof is included in the supplement.
\end{proof}

\section*{Acknowledgments}
The authors greatly appreciate the assistance of Yuya Sasaki and Mototsugu Shintani in providing the panel data for Chilean firms and US prices, respectively, and for their helpful discussions.
The authors would also like to thank the editor (Oliver Linton) and the associate editor, three anonymous referees, Debopam Bhattacharya, Richard Blundell, Maurice Bun, Yoosoon Chang, Andrew Chesher, Yasunori Fujikoshi, Kazuhiko Hayakawa, Toshio Honda, Hidehiko Ichimura, Koen Jochmans, Art\=uras Juodis, Hiroaki Kaido, Hiroyuki Kasahara, Kengo Kato, Toru Kitagawa, Yuichi Kitamura, Frank Kleibergen, Roger Koenker, Eiji Kurozumi, Simon Lee, Alexei Onatski, Taisuke Otsu, Hashem Pesaran, Katsumi Shimotsu, Richard Smith, M\r ans S\"oderbom, Liangjun Su, Martin Weidner, Yoon-Jae Whang, Takashi Yamagata, Yohei Yamamoto, and seminar participants at various institutes for their helpful comments and useful discussions. 
Part of this research was conducted while Okui was at Vrije Universiteit Amsterdam and Kyoto University and while Yanagi was at Hitotsubashi University.
All remaining errors are ours. 
Okui acknowledges financial support from the Japan Society for the Promotion of Science (JSPS) under KAKENHI Grant Nos. 22330067, 25780151, 25285067, 15H03329, and 16K03598. 
Yanagi recognizes the financial support of Grant-in-Aid for JSPS Fellows No. 252035 and KAKENHI Grant Nos. 15H06214 and 17K13715.


\bibliographystyle{abbrvnat}
\bibliography{panhet.bib}

\end{document}